\theoremstyle{plain}
\newtheorem{theorem}{Theorem}[section]
\newtheorem{lemma}{Lemma}[section]
\newtheorem{corollary}{Corollary}[section]
\newtheorem{definition}{Definition}[section]
\newcommand{\mb}[1]{\mathbf{#1}}
\newcommand{\mc}[1]{\mathcal{#1}}
\newcommand{\R}{\mathbb{R}}
\newcommand{\N}{\mathbb{N}}
\newcommand{\p}{\mathbb{P}}
\newcommand{\E}{\mathbb{E}}
\newcommand{\OPT}{\mathsf{OPT} }
\newcommand{\STRATEGY}{\mathsf{Str} }
\newcommand{\success}{\mathsf{Success} }
\newcommand{\block}{\mathsf{Block} }
\title{Secretary Problems with Random Number
of Candidates: How Prior Distributional Information Helps}
\author[1]{Junhui Zhang}
\affil[1]{ORC, MIT}
\author[2]{Patrick Jaillet}
\affil[2]{ORC, Department of Electrical Engineering and Computer Science, MIT}
\date{}
\begin{document}

\maketitle
\begin{abstract}
We study variants of the secretary problem, where $N$, the number of candidates, is a random variable, and the decision maker wants to maximize the probability of success -- picking the largest number among the $N$ candidates -- using only the relative ranks of the candidates revealed so far. 

We consider three forms of prior information about $\mb p$, the probability distribution of $N$. In the \textit{full information} setting, we assume $\mb p$ to be fully known. In that case, we show that single-threshold type of strategies can achieve $1/e$-approximation to the maximum probability of success among all possible strategies. In the \textit{upper bound} setting, we assume that $N\leq \overline{n}$ (or $\E[N]\leq \overline{\mu}$), where $\overline{n}$ (or $\overline{\mu}$) is known. In that case, we show that randomization over single-threshold type of strategies can achieve the optimal worst case probability of success of $\frac{1}{\log(\overline{n})}$ (or $\frac{1}{\log(\overline{\mu})}$) asymptotically. Surprisingly, there is a single-threshold strategy (depending on $\overline{n}$) that can succeed with probability $2/e^2$ for all but an exponentially small fraction of distributions supported on $[\overline{n}]$. In the \textit{sampling} setting, we assume that we have access to $m$ samples $N^{(1)},\ldots,N^{(m)}\sim_{iid} \mb p$. In that case, we show that if $N\leq T$ with probability at least $1-O(\epsilon)$ for some $T\in \N$, $m\gtrsim \frac{1}{\epsilon^2}\max(\log(\frac{1}{\epsilon}),\epsilon \log(\frac{\log(T)}{\epsilon}))$ is enough to learn a strategy that is at least $\epsilon$-suboptimal, and we provide a lower bound of $\Omega(\frac{1}{\epsilon^2})$, showing that the sampling algorithm is optimal when $\epsilon=O(\frac{1}{\log\log(T)})$.

We also discuss applications of some of our techniques to the matroid secretary problems, iid prophet inequalities, and online linear optimization problems when the time horizon $N$ is a random variable. 
\end{abstract}

\section{Introduction}
Online optimization has gained increasing attention due to its application in a variety of areas such as online revenue management (\cite{McGill1999,Elmaghraby2003}), online resource allocation (\cite{Agrawal2014,Devanur2019}), online auctions and advertising (\cite{Mehta2007_adword,Babaioff2008}), to name a few. In the typical setting, information about a constrained optimization problem is revealed sequentially, and a decision maker must make irrevocable decisions (about some variables of the problem) at each time period, with the final goal to maximize/minimize the objective function without violating the constraints. Different models have been proposed to capture this online nature, from perhaps the purest model of decision making under uncertainty represented by the classical secretary problem \cite{BuchbinderJainSingh2014,Hubert2015,correa2021_sampling,Kleinberg2005}, to more elaborate ones such as general online linear optimization problems \cite{Agrawal2014,Molinaro2012,Kesselheim2014}. However, all of these models assume that the number of decisions to be made, i.e., the total number of arrivals, or the time horizon of the problem, is known a priori, and the (successful) algorithms that have been proposed usually require the time horizon as an essential part of their input. 

In many realistic applications, however, the exact size of the sequence of arrivals, or the time horizon, is hardly known a priori. For example, when allocating advertisement slots to online advertisers subject to daily budget limit, the decision maker usually doesn't know the exact number of advertisers that will show up each day to compete in the successive auctions. Instead, it would be more natural to model it as an unknown, possibly as a random variable with some known or learnable distribution.

In this work, we focus on the secretary problem when the number of candidates, $N$, is a random variable independent of the process, and the decision maker is given prior information about its distribution $\mb p$ in the form of \textit{full information} of $\mb p$, \textit{upper bound} on $N$ or $\E[N]$, or \textit{iid samples} following $\mb p$. Due to the randomness in $N$, a constant success probability, such as $1/e$ which is attainable in the classical setting, may be impossible, and the optimal strategy may be much more complex than the classical single-threshold type of strategies: that is, reject the first $l$ candidates, then accept the first candidate that is the best among all candidates that have shown up so far \cite{PresmanSonin1973}. Nevertheless, in this paper, we provide three positive results. 

a) In the full information setting, $\mb p$ is fully known. We show that single-threshold type of strategies can achieve $1/e$-approximation to the maximum probability of success among all possible strategies.

b) In the upper bound setting, $N\leq \overline{n}$ (or $\E[N]\leq \overline{\mu}$) and $\overline{n}$ (or $\overline{\mu}$) is known. We show that randomization over single-threshold type of strategies can achieve the optimal worst case probability of success of $\frac{1}{\log(\overline{n})}$ (or $\frac{1}{\log(\overline{\mu})}$) as $\overline{n}\to \infty$ (or $\overline{\mu}\to \infty$). Surprisingly, there is a single-threshold strategy (depending on $\overline{n}$) that can succeed with probability $2/e^2$ for all but an exponentially small fraction of distributions supported on $[\overline{n}]$.

c) In the sampling setting, $m$ samples $N^{(1)},\ldots,N^{(m)}\sim_{iid} \mb p$ are given. We provide an algorithm which finds an $\epsilon$-suboptimal strategy using $m\gtrsim \frac{1}{\epsilon^2}\max(\log(\frac{1}{\epsilon}),\epsilon \log(\frac{\log(T)}{\epsilon}))$ samples where $N\leq T$ with probability at least $1-O(\epsilon)$, and show a lower bound of $\Omega(\frac{1}{\epsilon^2})$. 

In addition, we extend some of the techniques above to other online optimization problems including the matroid secretary problem, iid prophet inequality, and online packing linear optimization. When the time horizon $\underline{n}\leq N\leq \overline{n}$ is a random variable with known upper and lower bound, we provide randomization strategies over algorithms designed for problem instances when the time horizon is known (which we use as black boxes). The performance of these randomization strategies depends on how ``sensitive'' the black-box algorithms are to the error in the time horizon in the input. For iid prophet inequality, we also show that our algorithm is in fact optimal up to a constant factor (Theorem \ref{thm:prophet_upper_bound}).

\subsection{Comparison with previous work} 
There have been extensive studies on the secretary problem and its variants, and interested readers can look at \cite{Ferguson1989} for a survey. We are particularly interested in the variant where the number of candidates is a random variable, which was first studied by \cite{PresmanSonin1973}.

\textbf{Full information.} In \cite{PresmanSonin1973}, with full information about the distribution $\mb p$, the problem is reformulated into an optimal stopping problem for a Markov chain whose transition probability and payoff function depend on the distribution $\mb p$ of the time horizon $N$. Classical results from optimal stopping theory then implies an optimal solution of the form $\tau:= \min\{i|i\text{-th number is the largest seen so far}, i\in \Gamma\}$ for some index set $\Gamma\subset \N$. When $\mb p$ is a delta, uniform, Poisson, or geometric distribution, $\Gamma$ is of the simple form $\{n^*,n^*+1,\ldots\}$ for some $n^*$ that depends on the parameters of $\mb p$, and the probability of success does not depend on the parameters of $\mb p$ asymptotically\footnote{``Asymptotic'' refers to the case $\lambda \to\infty$, where $\lambda$ represents the expected value of the aforementioned distributions. }. However, for general $\mb p$, the set $\Gamma$ might be composed of infinite number of intervals, leading to a rather complicated strategy. We show that the much simpler single-threshold type of strategies, i.e. $\Gamma =\{n^*,n^*+1,\ldots\}$ for some $n^*\in \N$, already achieves a $1/e$-approximation to the optimal, which echoes a recent line of research on efficient approximations to optimal online algorithms \cite{Anari2019,Papadimitriou2021,Braverman2022}.

Another important work on this random time horizon variant is \cite{HamidBatherTrustrum1982}, which studies a particular family of strategies parameterized by $\mb q\in [0,1]^{\N}$, where $q_i$ denotes the probability of selecting the $i$-th candidate given that no candidate has been selected so far and the current candidate is the best among all that have shown up. The focus of \cite{HamidBatherTrustrum1982}, however, is conditions under which $\mb q$ is admissible, i.e. not dominated by another $\mb q'$, for every $\mb p$. Although their goal is different from ours, we adopt their framework, and show that any secretary strategy is dominated by some strategy in this family (Lemma \ref{lm:general_algo_to_q}). Thus, in order to maximize the probability of success (given prior information), it suffices to consider strategies in this family. 

\textbf{Upper bound.} When $N\leq n$ is upper bounded, \cite{HamidBatherTrustrum1982} gives a distribution $\mb p_*^{(n)}$ supported only on $[n]$ which we restate in Equation \eqref{eq:p-star} below, such that the optimal success probability parameterized by some $\mb q\in [0,1]^{\N}$ is $H_n^{-1}(\approx \frac{1}{\log(n)})$ where $H_n$ is the $n$-th Harmonic number. In fact, \cite{Hill1991_minimax} shows that the minimax-optimal strategy for this problem succeeds with probability approximately $\frac{1}{\log(n)}$, and \cite{GharanVondrak2011} shows that $H_n^{-1}$ is the worst case scenario, even when $N$ is picked by a \textit{non-adaptive} adversary\footnote{That is, the adversary picks an $N\leq n$ before the numbers arrive. In fact, this ``non-adaptiveness'' is crucial here, since we provide an adaptive strategy for the adversary such that no algorithm can succeed with probability more than $\frac{1}{\sqrt{n}}$. See discussion in Section \ref{sec:preliminary} and more details in Appendix \ref{sec:adv} Lemma \ref{lm:adv-strategy}.}, and gives a $\mb q\in [0,1]^{\N}$ which succeeds with probability $\frac{1}{H_{n-1}+1}$. In addition, \cite{GharanVondrak2011} shows that when there is no information about $N$ at all, for any $\epsilon>0$, there is an algorithm which has probability of success $\Omega(\frac{\epsilon}{\log^{1+\epsilon}(N)})$. When $\E[N]\leq \overline{\mu}$ has known first moment upper bound, \cite{Hill1994expected} gives the exact formula for the minimax optimal strategy and probability of success, but does not give asymptotic analysis. 

We extend these bounds in two directions: first, we show that the $\frac{1}{H_{n-1}+1}$ guarantee can be achieved by a randomization over the single-threshold types of algorithm, and similar $\frac{1}{\log(\overline{\mu})}$ bounds hold when $\E[N]\leq \overline{\mu}$; second, we look beyond worst case by providing a single-threshold strategy which achieves a constant probability of success for all but an exponential small fraction of distributions supported on $[n]$.

\textbf{Samples.} When prior information about $N$ comes in the form of samples, the problem resembles a recent line of work on sample based secretary problems and prophet inequalities \cite{Correa2023,correa2021_sampling,Correa2021b,Correa2019,Azar2014,rubinstein2020}. For instance, in the sample-driven secretary problem in \cite{correa2021_sampling}, each number out of $n$ numbers is, independently, included in the sample set with probability $p$. Then the secretary problem restricted to the rest of the numbers is played, with the only difference being the revealed information: at time $t$, the relative ranking of the $t$-th arriving number among the previous $t-1$ numbers \textit{together with the sample set} is revealed. \cite{correa2021_sampling} provides algorithms for each $p\in [0,1)$, which succeeds with probability $1/e$ when $p=0$ and $0.58$ as $p\to 1$, matching the optimal probability of success in the classical and full information\footnote{In the full information secretary problem, the numbers are iid sampled from a known distribution, and each time the value of the number, instead of its relative rank, is revealed.} secretary problem respectively. The setting we consider is different from theirs. We investigate the number of iid samples of $N$ needed in order to find a strategy which is $\epsilon$ suboptimal as compared to knowing the full distribution of $N$. When $N\leq n$, the naive approach to use the samples to estimate the distribution of $N$ up to accuracy $\epsilon$ in $l_1$ norm (details in Section \ref{sec:sample}) requires $O(\frac{n}{\epsilon^2})$ samples in general\cite{Han2015}. In contrast, we provide a block based algorithm which only requires $O(\frac{1}{\epsilon^2}\max(\log(\frac{1}{\epsilon}),\epsilon \log(\frac{\log(n)}{\epsilon})))$ samples, and provide a sampling lower bound of $\Omega(\frac{1}{\epsilon^2})$. \\

\noindent For the matroid secretary problem, prophet inequality, and online linear optimization, we defer detailed comparisons with prior work to Section \ref{sec:beyond} and Appendix \ref{sec:beyond-details}.

\subsection{Contribution}
On a high level, the extra randomness in $N$ makes the problem harder than the classical problem. However, due to the inherent random permutation, the distribution $\mb p$ of $N$ affects the performance of strategies only through the sequence $\lambda_i(\mb p):=\sum_{j=i}^{\infty}\frac{1}{j}p_j$, $i\in\N$, interpreted as the (marginal) probability that the $i$-th arriving candidate is the correct one to pick (Lemma \ref{lm:secretary_prob}). In particular, this allows us to improve upon previous works from the following perspectives:
\begin{itemize}
    \item strategy simplicity: in the full information setting, we ``approximate'' the sequence $(\lambda_i(\mb p))_{i\in \N}$ with the sequence $(\lambda_i(\boldsymbol{\delta}^{(n)}))_{i\in \N}$ generated by the delta distribution supported at $n\in \N$ (equivalent to the classical secretary problem with $n$ candidates), and show that single-threshold type of strategies can achieve at least $1/e$ of the optimal probability of success among all strategies;
    \item average case analysis: in the known upper bound setting, we use the linearity of the probability of success in $\mb p$ and concentration inequalities to show that an optimal strategy when $N$ is uniformly distributed on $[n]$ (which is single-threshold and succeeds with probability $2/e^2$ as $n\to \infty$) can succeed with probability at least $2/e^2$ for all but an exponentially small fraction of distributions supported on $[n]$, despite the worst case distribution supported on $[n]$ where no strategy can succeed with probability more than $1/H_n$; 
    \item sampling algorithm: we provide a ``block based'' algorithm which can find an $\epsilon$ suboptimal strategy (as compared to knowing the distribution) with high probability, using $O(\frac{1}{\epsilon^2}\max(\log(\frac{1}{\epsilon}),\epsilon \log(\frac{\log(T)}{\epsilon})))$ iid samples of $N$, where $N\leq T$ with probability $1-O(\epsilon)$. Our algorithm estimates the sequence $(i\lambda_i)_{i\in [T]}$ using a blocked version of the distribution $\mb p$, thereby reducing the number of parameters to be estimated from $T$ to $\log(T)$. Fortunately, due to the random permutation, the performance of any strategy is insensitive to ``local'' changes to the distribution and so the bias caused by the blocked version of the distribution is small. We also provide a matching lower bound of $\Omega(\frac{1}{\epsilon^2})$, showing that our algorithm is optimal when $\epsilon=O(\frac{1}{\log\log(T)})$. 
\end{itemize}
We also consider the random time horizon variant of three classical online optimization problems: the matroid secretary problem, the iid prophet inequality, and the online packing LP. We investigate the performance of randomizing over classical algorithms, i.e. those designed for fixed known time horizon problems, in the setting where $N$ is a random variable with known upper and lower bound. For the iid prophet inequality, our algorithm is optimal up to a constant factor when the sequence is supported on $[x,1]$ for some $x\in (0,1]$.

\section{Preliminary}\label{sec:preliminary}
In the classical secretary problem, a decision maker wants to maximize the probability of picking the largest number out of $n$ numbers that arrive in a uniform random order. Each time a number arrives, its relative ranking among all numbers that have arrived is revealed, and the decision maker must decide whether to pick it or not immediately and irrevocably. In this work, we consider the variant where, instead of a fixed, known $n$, the total number to be randomly permuted is a random variable $N$ independent of the random permutation, and the decision maker's goal is to maximize the probability of picking the largest number \textit{before the process stops (at time $N$)}. 

Before formally stating the problem in Section \ref{sec:set-up}, we briefly discuss the impacts due to the randomness in $N$. The secretary problem and the variant studied in this work can be viewed as optimal stopping problems, where stopping at time $t$ means picking the $t$-th arriving number. A decision maker who knows the value of $N$ (as in the classical setting) is equivalent to a stopping time that is adapted to the filtration generated by the observations (relative rankings) together with $N$. By contrast, in our setting, the stopping time is adapted to the filtration generated by the relative rankings with only $\mb 1[N\geq i]$ for $i=1,2,\ldots$, which are sub-$\sigma$-algebras of those in the previous case, thereby leading to an optimal value which can never be greater than the previous case (more details in Appendix \ref{sec:intuition}). 

On the positive side, modeling $N$ as a random variable is more realistic than some other common ways to model parameter uncertainty. Consider the setting where $N$ has an upper bound $n$, and $N$ is picked by an adversary who has access to the same ``rolling'' information as the decision maker. That is, at time $t$, the adversary is given the relative ranking of the $t$-th arriving number, as well as the decision maker's decision at time $t-1$, and then needs to decide immediately and irrevocably if $N=t$. We show then that the adversary has a strategy such that the decision maker cannot succeed with probability more than $\frac{1}{\sqrt{n}}$ (more details in Appendix \ref{sec:adv}). As a comparison, if $N\leq n$ is a random variable, as opposed to be picked by an adversary, the decision maker has a strategy which guarantees a $\frac{1}{1+H_{n-1}}\approx\frac{1}{\log(n)}$ probability of success in the worst case (Theorem \ref{thm:minmax}). In addition, modeling $N$ as a random variable allows us to explore how prior information about its distribution (such as upper bound or samples) might help in choosing strategies.

\subsection{Set up}\label{sec:set-up}
We assume that there is a sequence of numbers $(x_i)_{i\in \N}$ (assuming no tie), and a distribution $\mb p \in \Delta$, where $\Delta:= \{\mb y = (y_1,y_2,\ldots)|\sum_{i=1}^{\infty} y_i = 1,~y_i\geq  0~\forall i\in \N\}$. 

First, nature generates $N$ according to the distribution $\mb p$ such that $\p[N=i]=p_i$. On the event $\{N=i\}$, independent of $N$, a permutation $\sigma: [i]\to[i]$ is picked uniformly at random from all permutations on $[i]$. At time $t = 1,2,\ldots,i$, $R_t$, the relative ranking of $x_{\sigma(t)}$ among $\{x_{\sigma(1)},x_{\sigma(2)},\ldots,x_{\sigma(t)}\}$, is revealed. That is, $R_t = |\{i\in [t]|x_{\sigma(i)} \geq x_{\sigma(t)}\}|$. After observing $R_1,R_2,\ldots,R_t$, the decision maker using the strategy $\STRATEGY$  must decide whether to accept or reject $x_{\sigma(t)}$ immediately, and the decision cannot be changed in the future. Due to the random permutation, conditioning on $N=i$, $\{R_t\}_{t\leq i}$ are independent, and $R_t\sim Uniform([t])$. 

The goal of the decision maker is to maximize the probability of success, i.e. picking the largest number before the process stops. We denote this event under strategy $\STRATEGY$ as $\success(\STRATEGY) := \cup_{i=1}^{\infty} \success_i(\STRATEGY) $, where $\success_i(\STRATEGY) $ is the following event
$$\success_i(\STRATEGY):=\{N=i\}\cap\{\STRATEGY\text{ picks } x_{\sigma(t)}\text{ for some }t\in [i]\}\cap\{x_{\sigma(t) }= \max_{1\leq j\leq i}x_{\sigma(j)}\}.$$
With these notations, the classical secretary problem corresponds to the case when the decision maker knows, before picking the strategy, that $\mb p = \boldsymbol{\delta}^{(n)}$ is the delta distribution supported on $\{n\}$, i.e. $\delta^{(n)}_i = \mb 1[n=i]$. By contrast, in this work, we don't assume that $\mb p$ is a delta distribution, or is fully known. Instead, the decision maker is given prior information about $\mb p$ in the form of the distribution $\mb p$ itself, an upper bound on $N$ or $\E[N]$, or iid samples under $\mb p$. 

To avoid confusion, we distinguish between \textit{strategy} and \textit{algorithm}. By \textit{strategy}, we refer to strategies a decision maker can use for the secretary problem with unknown time horizon alone. By \textit{algorithm}, we refer to learning algorithms which, given prior information about $\mb p$ such as samples, output a strategy $\STRATEGY$ that the decision maker can use.

\subsection{A parameterized family of strategies}
Let $\overline{\mc S}$ denote all strategies for the secretary problem with random $N$. That is, conditioning on $N\geq t$, the decision whether to pick the $t$-th number depends only on the first $t$ relative rankings $(R_i)_{i\leq t}\in \prod_{i=1}^t [i]$ and potential randomness due to the strategy itself. However, it makes no sense to pick any number whose relative rank is not 1, and the relative ranks of the first $t-1$ numbers give no information about the \textit{true} rank of the $t$-th number among all $N$ numbers. Thus, without sacrificing performance, we can in fact restrict the strategies to a family of strategies $\STRATEGY(\mb q)$ parameterized by $\mb q\in \mc S:= \{({q}'_i)_{i\in \N}|q'_i\in [0,1],~i\in \N\}$, where 
$$q_i=\p[\text{ pick } x_{\sigma(i)}|N\geq i,~R_i=1,~\text{no number has been picked}].$$
The strategy $\STRATEGY(\mb q)$ works in the following way: for $t = 1,2,\ldots$, if $N\geq t$, observe the relative rank $R_t$. If $R_t=1$, generate $Y_t\sim Bernoulli(q_t)$ independent of any previously generated $Y_{t'}$. If $Y_t = 1$, accept the $t$-th number (and the process stops); otherwise continue. The strategy parameterized by $\Gamma \subset \N$ in \cite{PresmanSonin1973} correspond to $\STRATEGY(\mb q)$ where $q_i = 1$ if and only if $i\in \Gamma$.

With a slight abuse of notation, we abbreviate $\success(\STRATEGY(\mb q))$ as $\success(\mb q)$. We have the following guarantee, the proof of which is deferred to Appendix \ref{sec:proof-prelim}. 
\begin{lemma}\label{lm:general_algo_to_q}
For any strategy $\STRATEGY \in \overline{\mc S}$ for the secretary problem with random $N$, there exists $\mb q^{\STRATEGY}\in \mc S$, such that for any $\mb p\in \Delta$, and $N\sim \mb p$, $\p[\success(\mb q^{\STRATEGY}) ] \geq \p[\success(\STRATEGY)] $. 
\end{lemma}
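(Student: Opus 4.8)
The plan is to show that, up to one harmless preprocessing step, every $\STRATEGY\in\overline{\mc S}$ induces the same success probability as an explicit member $\STRATEGY(\mb q^{\STRATEGY})$ of the parameterized family, simultaneously for all $\mb p$. First I would reduce to strategies that only ever pick a \emph{record}, i.e.\ a candidate with current relative rank $R_t=1$: a non-record candidate can never be the maximum among the first $N\ge t$ arrivals, so the strategy $\STRATEGY'$ obtained from $\STRATEGY$ by turning every ``pick a non-record'' decision into ``continue'' still lies in $\overline{\mc S}$, and a coupling on the shared relative ranks and internal randomness shows $\success(\STRATEGY)\subseteq\success(\STRATEGY')$ pointwise; hence $\p[\success(\STRATEGY')]\ge\p[\success(\STRATEGY)]$ for every $\mb p$. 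So from now on I may assume $\STRATEGY$ is record-only.

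Next I would decouple $\STRATEGY$ from $N$. At step $t$ the strategy observes only $R_1,\dots,R_t$ together with $\mb 1[N\ge 1],\dots,\mb 1[N\ge t]$, all of which equal $1$ while $N\ge t$; so it is equivalent to run $\STRATEGY$ on an infinite sequence of independent relative ranks $R_t$, each uniform on $[t]$, obtain its pick time $\tau\in\N\cup\{\infty\}$, and then on $\{N=i\}$ declare the $\tau$-th candidate picked iff $\tau\le i$. Crucially $\tau$ is a function of $(R_t)_{t\ge 1}$ and of $\STRATEGY$'s internal coins, so $\tau$ is independent of $N$. I would then define $q^{\STRATEGY}_i := i\,\p[\tau=i]/\p[\tau\ge i]$ when $\p[\tau\ge i]>0$ and $q^{\STRATEGY}_i:=0$ otherwise; this is exactly the quantity $\p[\,\STRATEGY\text{ picks }x_{\sigma(i)}\mid N\ge i,\ R_i=1,\ \text{no pick yet}\,]$ from the definition of the family. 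It lies in $[0,1]$ because $\{\tau\ge i\}$ depends only on $R_1,\dots,R_{i-1}$ and the coins (hence is independent of $R_i$), while $\{\tau=i\}\subseteq\{\tau\ge i\}\cap\{R_i=1\}$ since $\STRATEGY$ only picks records, which gives $\p[\tau=i]\le\tfrac{1}{i}\,\p[\tau\ge i]$.

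Putting $\mb q^{\STRATEGY}:=(q^{\STRATEGY}_i)_{i\in\N}\in\mc S$ and letting $\tau'$ be the pick time of $\STRATEGY(\mb q^{\STRATEGY})$, I would show by a short induction on $i$ — using the recursion $\p[\tau\ge i+1]=\p[\tau\ge i]\,(1-q^{\STRATEGY}_i/i)$, immediate from the definition of $q^{\STRATEGY}_i$, and the matching recursion for $\tau'$, immediate from the description of $\STRATEGY(\mb q)$ — that $\p[\tau'=i]=\p[\tau=i]$ for all $i$. Finally I would compute directly that the success probability of any record-only strategy depends on the strategy only through the law of its pick time: conditioned on $\{\tau=t\}\cap\{N=i\}$ with $t\le i$ one has $R_t=1$, and the picked number is the overall maximum precisely when $R_{t+1},\dots,R_i$ are all different from $1$ — an event independent of $\{\tau=t\}$ with probability $\prod_{s=t+1}^{i}\tfrac{s-1}{s}=\tfrac{t}{i}$. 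Using $\tau\perp N$ and summing (all series converge absolutely, since $t\,\lambda_t(\mb p)\le\sum_{j\ge t}p_j\le 1$),
$$\p[\success(\STRATEGY)]=\sum_{i\ge1}p_i\sum_{t\le i}\p[\tau=t]\,\tfrac{t}{i}=\sum_{t\ge1}t\,\lambda_t(\mb p)\,\p[\tau=t],$$
and the identical formula holds for $\STRATEGY(\mb q^{\STRATEGY})$ with $\tau'$ in place of $\tau$; since $\tau$ and $\tau'$ have the same law these agree for every $\mb p$, so $\p[\success(\mb q^{\STRATEGY})]=\p[\success(\STRATEGY')]\ge\p[\success(\STRATEGY)]$, as claimed.

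The step I expect to be the main obstacle is the decoupling from $N$: because $\STRATEGY$ observes the indicators $\mb 1[N\ge i]$ its decisions are a priori correlated with $N$, and one must argue carefully that this confers no extra power — which is exactly what passing to the infinite record process accomplishes. A secondary subtlety is that $\STRATEGY$'s decision at step $i$ may depend on the entire observed history $(R_1,\dots,R_i)$ whereas $\STRATEGY(\mb q)$'s depends only on $R_i$; this is reconciled by the final display, where exchangeability makes $\p[\success(\cdot)]$ depend on a record-only strategy solely through the marginal law of its pick time, so that matching the marginals suffices.
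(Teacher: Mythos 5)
Your proof is correct, and it takes a genuinely different route from the paper's even though both land on the same parameterization $q_i=\p[\text{pick at } i\mid N\geq i,\,R_i=1,\,\text{no pick yet}]$. The paper works directly with the possibly non-record-picking $\STRATEGY$: it proves by induction the one-sided bound $\p[\text{no pick by time } i\mid N\geq l]\leq U_i(\mb q)$ (the slack absorbs wasted non-record picks) and pushes it through to $\p[\success(\STRATEGY)]\leq A(\mb p,\mb q)$. You instead first pass to a record-only $\STRATEGY'$ via a dominated coupling, then decouple the pick time $\tau$ of $\STRATEGY'$ from $N$ by realizing it as a genuine stopping time of the infinite i.i.d.\ rank process, and finally observe that the success probability of any record-only strategy is the fixed linear functional $\sum_t t\lambda_t(\mb p)\,\p[\tau=t]$ of the marginal law of $\tau$; matching $\mathrm{Law}(\tau')=\mathrm{Law}(\tau)$ through the recursion $\p[\tau\geq i+1]=\p[\tau\geq i](1-q_i^{\STRATEGY}/i)$ then gives exact equality for every $\mb p$ at once. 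Your route buys three things: it isolates the reusable structural fact that the pick-time law is a sufficient statistic for record-only strategies (re-deriving Lemma~\ref{lm:secretary_prob} as a byproduct); the decoupling makes explicit why $\mb q^{\STRATEGY}$ does not depend on $\mb p$, which is what the ``for any $\mb p$'' quantifier in the statement actually requires and which the paper leaves implicit by fixing a $\mb p$ at the start of its proof; and you get an equality for $\STRATEGY'$ rather than an inequality. The paper's route is shorter because it skips the preprocessing and the stopping-time machinery, accepting a one-sided bound directly on the original $\STRATEGY$.
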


For convenience, for each $l\in \N$, we let $\mb q^{(l)}\in \mc S$, $q_i^{(l)}:=\mb 1[l\leq i]$, which represents the single-threshold strategy that rejects the first $l-1$ numbers, and accepts the first number $t \geq l$ whose relative rank $R_t=1$.

The family of strategies parameterized by $\mb q\in \mc S$ is in fact studied in \cite{HamidBatherTrustrum1982}, but the focus there is on conditions under which $\mb q$ is an admissible policy (not dominated by another $\mb q'$ for every $\mb p\in \Delta$). Below, we state its performance, which can be found in \cite{HamidBatherTrustrum1982}, and can be verified by direct calculation:
\begin{lemma}\label{lm:secretary_prob}[from \cite{HamidBatherTrustrum1982}]
If $N\sim \mb p\in \Delta$, strategy $\mb q\in \mc S$ has probability of success
$$\p[\success(\mb q)] = A(\mb p,\mb q) := \sum_{i=1}^{\infty}p_i (\frac{1}{i}\sum_{l=1}^{i}U_{l-1}(\mb q)q_l) = \sum_{i=1}^{\infty}U_{i-1}(\mb q)q_i\lambda_i(\mb p)$$
where $U_0(\mb q) = 1$, $U_i(\mb q):=\prod_{l=1}^{i}(1-q_l/l)$ is the probability that the strategy has not accepted any number among the first $i$ numbers conditioned on $N\geq i$. $\lambda_i(\mb p):= \sum_{l=i}^{\infty}p_l/l$.
\end{lemma}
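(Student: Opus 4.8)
The plan is to condition on the value of $N$, decompose the success event on $\{N=i\}$ by the time step at which $\STRATEGY(\mb q)$ accepts, factor using independence, and then re-sum. Since $\success_i(\mb q)\subseteq\{N=i\}$, the events $\{\success_i(\mb q)\}_i$ are pairwise disjoint, so I would first reduce to computing $\p[\success_i(\mb q)]$ for each fixed $i$, working conditionally on $\{N=i\}$, where (by the set-up) $R_1,\dots,R_i$ are independent with $R_s\sim Uniform([s])$ and the strategy's coins $(Y_s)_{s\le i}$ are independent of everything.

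The two elementary facts I would establish first are: (i) because $\STRATEGY(\mb q)$ accepts at step $s$ exactly when nothing was accepted earlier and $R_s=Y_s=1$, the event ``nothing accepted in the first $t-1$ steps'' coincides with $\bigcap_{s=1}^{t-1}\{R_s\ne 1\text{ or }Y_s=0\}$; and (ii) $x_{\sigma(t)}$ is the overall maximum of $x_{\sigma(1)},\dots,x_{\sigma(i)}$ iff $R_t=1$ and $R_s\ne 1$ for all $s\in\{t+1,\dots,i\}$ (a record with no later records). Granting these, the event ``$\success_i(\mb q)$ with acceptance at time $t$'' is
\[
\Big(\bigcap_{s=1}^{t-1}\{R_s\ne 1\text{ or }Y_s=0\}\Big)\cap\{R_t=1,\,Y_t=1\}\cap\Big(\bigcap_{s=t+1}^{i}\{R_s\ne 1\}\Big),
\]
the cases $t\in[i]$ are disjoint, and independence together with $\p[R_s=1]=1/s$ gives
\[
\p[\success_i(\mb q),\ \text{accept at }t\mid N=i]=\Big(\prod_{s=1}^{t-1}\big(1-\tfrac{q_s}{s}\big)\Big)\cdot\tfrac{q_t}{t}\cdot\prod_{s=t+1}^{i}\tfrac{s-1}{s}=U_{t-1}(\mb q)\,q_t\cdot\tfrac1i,
\]
using the telescoping identity $\prod_{s=t+1}^{i}\frac{s-1}{s}=\frac ti$. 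Summing over $t$, then multiplying by $p_i$ and summing over $i$, yields the first expression for $A(\mb p,\mb q)$; the second then follows by exchanging the order of summation (all terms are nonnegative, so Tonelli applies, and $\sum_{i\ge l}p_i/i=\lambda_l(\mb p)$), with no convergence issue since partial sums are bounded by $\sum_i p_i=1$.

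I expect the only genuinely delicate points to be the two ``iff'' characterizations (i) and (ii): both are immediate from the definition of relative ranks and the sequential acceptance rule, but they are exactly what legitimizes the clean product factorization, and (ii) in particular needs the short induction that a record followed by no further records is the global maximum. An alternative I would consider, avoiding (ii), is to condition additionally on the position of the global maximum — uniform on $[i]$ by exchangeability of the permutation — and note that the first $t-1$ relative ranks remain independent and uniform given that conditioning; this reproduces the first form $\frac1i\sum_l U_{l-1}(\mb q)q_l$ directly but trades (ii) for a symmetry argument.
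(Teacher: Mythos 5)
Your proof is correct, and both routes you sketch are sound. The paper itself does not supply a proof of this lemma: it attributes it to the cited reference and remarks that it ``can be verified by direct calculation,'' so your write-up is exactly the calculation being alluded to. Your main argument — decompose by acceptance time $t$, note that acceptance at $t$ requires the joint event ``no earlier $\{R_s=1,Y_s=1\}$'' $\cap\,\{R_t=1,Y_t=1\}\cap\{R_s\ne 1\ \forall s>t\}$, and factor by independence using $\prod_{s=t+1}^{i}\frac{s-1}{s}=\frac{t}{i}$ — is the standard direct computation, and the two facts (i) and (ii) you flag are indeed the only places requiring a little care; both hold for the reasons you give (to make (i) watertight, if none of the joint events $\{R_s=1,Y_s=1\}$ occur for $s<t$ then no acceptance is possible, and conversely the first $s$ at which that joint event holds would be an acceptance). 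The only implicit step worth naming is that you treat the coins $Y_s$ as generated for every $s$ rather than only when $R_s=1$; this is a harmless coupling that does not change the strategy's law, but it is what licenses the clean product. Your alternative — condition on the position $j$ of the global maximum (uniform on $[i]$) and observe that $R_1,\dots,R_{j-1}$ remain independent uniform under this conditioning, since $\{j=\arg\max\}$ is the event $\{R_j=1,\ R_s\ne1\ \forall s>j\}$ depending only on $R_j,\dots,R_i$ — is in fact precisely the decomposition the paper uses in the proof of Theorem~\ref{thm:minmax}, so that route is equally aligned with the paper's style and perhaps slightly cleaner since it produces the first form $\frac1i\sum_l U_{l-1}(\mb q)q_l$ without the telescoping step.
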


For any $n\in \N$, \cite{HamidBatherTrustrum1982} constructs a distribution supported on $[n]$ (we denote as $\mb p^{(n)}_{*}$) such that $A(\mb p^{(n)}_{*},\mb q)= 1/H_n(1-U_n(\mb q))\leq 1/H_n$ for any $\mb q\in \mc S$ where $H_n = \sum_{i=1}^n 1/i$ is the $n$-the harmonic number: 
\begin{equation}\label{eq:p-star}
    p^{(n)}_{*,i} = \frac{1/H_n}{i+1},~i=1,\ldots,n-1,\quad p^{(n)}_{*,n} = 1/H_n, \quad p^{(n)}_{*,i} =0,\quad i\geq n+1.
\end{equation}
Together with Lemma \ref{lm:general_algo_to_q}, we get the following result: 
\begin{corollary}\label{cor:performance-p-star}
When $N\sim \mb p^{(n)}_{*}$ defined in Equation \ref{eq:p-star}, for any $\STRATEGY\in \overline{\mc S}$, $\p[\success(\STRATEGY)]\leq 1/H_n$, and the upper bound is achieved by $\mb q^{(1)}$, i.e. picking the first number. 
\end{corollary}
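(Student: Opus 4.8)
The plan is to reduce, via Lemma \ref{lm:general_algo_to_q}, to maximizing $A(\mb p^{(n)}_*,\cdot)$ over the parameterized family $\mc S$, and then to exploit the explicit form of $\mb p^{(n)}_*$ to obtain a clean telescoping expression. Concretely, for any $\STRATEGY\in\overline{\mc S}$, Lemma \ref{lm:general_algo_to_q} supplies $\mb q^{\STRATEGY}\in\mc S$ with $\p[\success(\STRATEGY)]\le\p[\success(\mb q^{\STRATEGY})]=A(\mb p^{(n)}_*,\mb q^{\STRATEGY})$ by Lemma \ref{lm:secretary_prob}, so it suffices to show $A(\mb p^{(n)}_*,\mb q)\le 1/H_n$ for every $\mb q\in\mc S$ and that equality holds at $\mb q^{(1)}$.

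For the upper bound I would first compute $\lambda_i(\mb p^{(n)}_*)$. Plugging \eqref{eq:p-star} into the definition of $\lambda_i$ and using the partial-fraction identity $\sum_{l=i}^{n-1}\frac{1}{l(l+1)}=\frac1i-\frac1n$ gives $\lambda_i(\mb p^{(n)}_*)=\frac{1}{H_n i}$ for $i\le n$ and $0$ for $i>n$. Substituting into the formula of Lemma \ref{lm:secretary_prob}, $A(\mb p^{(n)}_*,\mb q)=\frac{1}{H_n}\sum_{i=1}^{n}\frac{q_i}{i}U_{i-1}(\mb q)$; since $\frac{q_i}{i}U_{i-1}(\mb q)=U_{i-1}(\mb q)-U_i(\mb q)$ straight from the definition of $U$, the sum telescopes to $U_0(\mb q)-U_n(\mb q)=1-U_n(\mb q)$, recovering the identity $A(\mb p^{(n)}_*,\mb q)=\frac{1}{H_n}(1-U_n(\mb q))$ quoted in the text. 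Because $U_n(\mb q)\in[0,1]$ for every $\mb q\in\mc S$, this is at most $1/H_n$, which is the claimed bound.

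Tightness is then immediate: for $\mb q^{(1)}$ we have $q^{(1)}_1=1$, so the $l=1$ factor in $U_i(\mb q^{(1)})=\prod_{l=1}^{i}(1-q^{(1)}_l/l)$ vanishes and hence $U_n(\mb q^{(1)})=0$, giving $A(\mb p^{(n)}_*,\mb q^{(1)})=1/H_n$; equivalently, always picking the first arriving number succeeds with probability $\lambda_1(\mb p^{(n)}_*)=1/H_n$. There is no genuine obstacle here --- the corollary is bookkeeping built on Lemmas \ref{lm:general_algo_to_q} and \ref{lm:secretary_prob} together with the explicit $\mb p^{(n)}_*$. The only points needing a little care are the $\lambda_i$ computation (the partial-fraction telescoping) and invoking the reduction to $\mc S$ correctly, i.e. checking that restricting attention to the events $\{R_t=1\}$ with acceptance probabilities $q_t$ loses nothing, which is exactly the content of Lemma \ref{lm:general_algo_to_q}.
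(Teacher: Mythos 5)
Your proposal is correct and follows essentially the same route as the paper: reduce to the parameterized family $\mc S$ via Lemma \ref{lm:general_algo_to_q}, then use the identity $A(\mb p^{(n)}_*,\mb q)=\tfrac{1}{H_n}(1-U_n(\mb q))\le 1/H_n$ together with $U_n(\mb q^{(1)})=0$. The only difference is that the paper cites this identity from \cite{HamidBatherTrustrum1982}, whereas you derive it directly by computing $\lambda_i(\mb p^{(n)}_*)=\tfrac{1}{H_n i}$ and telescoping $U_{i-1}(\mb q)q_i/i=U_{i-1}(\mb q)-U_i(\mb q)$, which is a nice self-contained verification.
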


\section{Main Results}\label{sec:main}

Our goal is to use the available information about the distribution $\mb p$ to find a strategy $\STRATEGY\in \overline{\mc S}$ which approximately maximizes $\p[\success(\STRATEGY)]$ when $N\sim \mb p$. Thanks to the results in Lemma \ref{lm:general_algo_to_q} and \ref{lm:secretary_prob}, instead of considering all possible strategies $\STRATEGY\in \overline{\mc S}$, it suffices to consider the subfamily of strategies $\{\STRATEGY(\mb q)|\mb q\in \mc S\}$, which has performance $\p[\success(\mb q)] = A(\mb p,\mb q)$ when $N\sim \mb p$. Below, we first formally state the three types of prior information and the results in Section \ref{sec:overview-main}. Then, we provide more discussions and sketch of proofs in the following subsections. The full proofs are deferred to the Appendix.

\subsection{Overview of main results}\label{sec:overview-main}
\noindent\textbf{Full information}: $\mb p$ is known. We focus on strategy simplicity, and compare the performance of the optimal single-threshold type of strategies with the optimal one in $\overline{\mc S}$. This is motivated by \cite{PresmanSonin1973} which shows that for many distributions (such as uniform, Poisson, and geometric) there exists $l\in \N$ (which depends on the parameters of the distribution) where the simple strategy $\mb q^{(l)}$ is in fact optimal. We show that this is approximately true for general $\mb p$, in the sense that restricting $\mb q$ from $\mc S = [0,1]^{\N}$ to the family of single-threshold strategies $\{\mb q^{(l)}|l\in \N\}$ gives a constant factor approximation to the optimal performance:
\begin{theorem}\label{thm:secretary_approx}
For any $\mb p\in \Delta$, let $\lambda_i(\mb p):= \sum_{l=i}^{\infty}p_l/l$ for $i\in \N$, and $\theta(\mb p):=\max_{i\in \N}i\lambda_i(\mb p) =K^*(\mb p) \lambda_{K^*(\mb p)}(\mb p) $ for some $K^*(\mb p)\in \N$, then 
$$\theta(\mb p)/e \leq A(\mb p,\mb q^{(K^*(\mb p))})\leq \sup_{\mb q\in \mc S}A(\mb p,\mb q)=\sup_{\STRATEGY\in \overline{\mc S}} \p[\success(\STRATEGY)] \leq \theta(\mb p).$$
\end{theorem}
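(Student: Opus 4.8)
The plan is to handle the chain of one equality and three inequalities in increasing order of difficulty.

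\textbf{The equality and the trivial inequality.} I would get $\sup_{\STRATEGY\in\overline{\mc S}}\p[\success(\STRATEGY)]=\sup_{\mb q\in\mc S}A(\mb p,\mb q)$ directly from Lemma~\ref{lm:general_algo_to_q} (every $\STRATEGY$ is dominated by some $\mb q^{\STRATEGY}\in\mc S$, giving ``$\le$'') together with the fact that each $\STRATEGY(\mb q)$ lies in $\overline{\mc S}$ and succeeds with probability $A(\mb p,\mb q)$ by Lemma~\ref{lm:secretary_prob} (giving ``$\ge$''); and $A(\mb p,\mb q^{(K^*(\mb p))})\le\sup_{\mb q}A(\mb p,\mb q)$ is immediate since $\mb q^{(K^*(\mb p))}\in\mc S$. (One should first note $\theta(\mb p)$ is attained at a finite index, since $i\lambda_i(\mb p)\le\p[N\ge i]\to0$ while $\lambda_1(\mb p)>0$, so $K^*(\mb p)$ is well defined.)

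\textbf{The upper bound $\sup_{\mb q}A(\mb p,\mb q)\le\theta(\mb p)$.} Here I would use $\lambda_i(\mb p)\le\theta(\mb p)/i$ (by definition of $\theta$) and the telescoping identity $U_{i-1}(\mb q)q_i/i=U_{i-1}(\mb q)-U_i(\mb q)$: from Lemma~\ref{lm:secretary_prob},
\begin{align*}
A(\mb p,\mb q)=\sum_{i\ge 1}U_{i-1}(\mb q)\,q_i\,\lambda_i(\mb p)
&\ \le\ \theta(\mb p)\sum_{i\ge 1}\frac{U_{i-1}(\mb q)\,q_i}{i}\\
&=\ \theta(\mb p)\sum_{i\ge 1}\bigl(U_{i-1}(\mb q)-U_i(\mb q)\bigr)\ \le\ \theta(\mb p),
\end{align*}
since the telescoping sum equals $1-\lim_{i\to\infty}U_i(\mb q)\le 1$ (the limit exists, $(U_i(\mb q))_i$ being nonincreasing and nonnegative). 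Then take $\sup_{\mb q}$.

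\textbf{The lower bound (the crux).} I would set $n:=K^*(\mb p)$, so $\theta(\mb p)=n\lambda_n(\mb p)$, and use that $(\lambda_i(\mb p))_i$ is nonincreasing ($\lambda_i(\mb p)-\lambda_{i+1}(\mb p)=p_i/i\ge0$). For a single threshold, Lemma~\ref{lm:secretary_prob} with $U_{i-1}(\mb q^{(l)})=1$ for $i\le l$ and $U_{i-1}(\mb q^{(l)})=\tfrac{l-1}{i-1}$ for $i>l$ gives $A(\mb p,\mb q^{(l)})=\lambda_l(\mb p)+(l-1)\sum_{i>l}\frac{\lambda_i(\mb p)}{i-1}$; for $2\le l\le n$ I would drop the nonnegative tail $i>n$ and bound $\lambda_i(\mb p)\ge\lambda_n(\mb p)$ for $l\le i\le n$ to obtain
\begin{align*}
A(\mb p,\mb q^{(l)})\ \ge\ \lambda_n(\mb p)\Bigl[1+(l-1)\sum_{i=l+1}^{n}\tfrac{1}{i-1}\Bigr]\ =\ \theta(\mb p)\,v_n(l),\qquad v_n(l):=\frac{l-1}{n}\sum_{j=l-1}^{n-1}\frac{1}{j},
\end{align*}
where $v_n(l)$ is exactly the success probability of the threshold-$l$ rule in the classical $n$-candidate secretary problem. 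This is the rigorous form of the ``$(\lambda_i(\mb p))_i\approx(\lambda_i(\boldsymbol\delta^{(n)}))_i$'' heuristic: truncating at $n=K^*(\mb p)$ and flattening $(\lambda_i(\mb p))$ down to its value $\lambda_n(\mb p)$ on $[n]$ turns the instance into a $\theta(\mb p)$-scaled copy of the classical $n$-secretary problem, and neither operation can help a threshold rule. I would then take $l=l_n^*$, the optimal classical threshold for $n$ candidates, and invoke the classical bound $\max_l v_n(l)\ge 1/e$ (valid for every $n$; $n=1$ trivial), yielding a single-threshold strategy with $A(\mb p,\mb q^{(l_n^*)})\ge\theta(\mb p)/e$.

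\textbf{Main obstacle.} The equality and the upper bound are essentially bookkeeping; the work is the lower bound, and in particular the domination step comparing an arbitrary $\mb p$ with the flat-then-zero profile of $\boldsymbol\delta^{(K^*(\mb p))}$ so the classical $1/e$ guarantee can be recycled. The point to be careful about is which threshold realizes $\theta(\mb p)/e$: it is the optimal single-threshold rule for $K^*(\mb p)$ candidates, with threshold roughly $K^*(\mb p)/e$, since the threshold-$K^*(\mb p)$ rule alone gives only $A(\mb p,\mb q^{(K^*(\mb p))})\ge\lambda_{K^*(\mb p)}(\mb p)=\theta(\mb p)/K^*(\mb p)$, which suffices only when $K^*(\mb p)$ is small.
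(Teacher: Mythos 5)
Your proof is correct and follows essentially the same route as the paper's: the equality is Lemmas~\ref{lm:general_algo_to_q}--\ref{lm:secretary_prob}, the upper bound is the telescoping bound $\sum_i U_{i-1}(\mb q)q_i/i\le 1$ after factoring out $\theta(\mb p)$ via $i\lambda_i\le\theta(\mb p)$, and the lower bound truncates at $K^*(\mb p)$ and flattens $\lambda_i$ down to $\lambda_{K^*(\mb p)}$, reducing to the classical $K^*(\mb p)$-secretary instance scaled by $\theta(\mb p)$. The paper does the last step with an explicit case split ($K^*=1$, $K^*=2$, $K^*\ge3$ with its Lemma~\ref{lemma:delta_dist_lb} giving $\max_l v_n(l)\ge 1/e$), while you cite the same classical bound directly; this is a presentational difference, not a mathematical one.

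Your closing observation is on point and in fact identifies a typo in the theorem as stated: the threshold that achieves $\theta(\mb p)/e$ is not $K^*(\mb p)$ but the optimal classical threshold for $K^*(\mb p)$ candidates, roughly $K^*(\mb p)/e$. Indeed for $\mb p=\boldsymbol\delta^{(n)}$ one has $K^*(\mb p)=n$, $\theta(\mb p)=1$, but $A(\boldsymbol\delta^{(n)},\mb q^{(n)})=1/n\to 0$, so the literal claim $A(\mb p,\mb q^{(K^*(\mb p))})\ge\theta(\mb p)/e$ fails. The paper's own proof only establishes $\sup_{\mb q}A(\mb p,\mb q)\ge\theta(\mb p)/e$, achieved by a threshold rule with threshold $s^*_{K^*(\mb p)}$ (and the restatement in Section~\ref{sec:main} as $\sup_{l}\p[\success(\mb q^{(l)})]\ge\frac1e\sup_{\STRATEGY}\p[\success(\STRATEGY)]$ is the intended claim), so your proof matches what is actually true, and the displayed inequality in the theorem should read $A(\mb p,\mb q^{(l^*)})$ for the optimal classical threshold $l^*$ rather than $\mb q^{(K^*(\mb p))}$.
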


On a high level, when $N\sim \mb p$, there exists a classical secretary problem \textit{scaled by $\theta(\mb p)$ hidden inside the problem}. Since for the classical problem, there exists a single-threshold strategy which succeeds with probability at least $1/e$, the same strategy applied to the variant succeeds with probability at least $\theta(\mb p)/e$.

\vspace{1em}
\noindent\textbf{Known upper bound}: $N\leq \overline{n}$ (or $\E[N]\leq \overline{\mu}$) and $\overline{n}$ (or $\overline{\mu}$) is known. We focus on performance guarantees against a worst or average $\mb p$ subject to the available information. When $N\leq \overline{n}$, Corollary \ref{cor:performance-p-star} and the randomized algorithm proposed in \cite{GharanVondrak2011}\footnote{Precisely, the algorithm proposed in \cite{GharanVondrak2011} succeeds with probability $1/(1+H_{n-1})$ against a non-adaptive adversary who picks $N\leq n$ in advance, i.e. not based on the relative ranks or the algorithm's decisions as compared to the adversary we consider in Appendix \ref{sec:adv}. And it can easily be shown that this algorithm achieves the same guarantee against a random $\mb p$ supported on $[n]$.} shows 
$$1/(1+H_{\overline{n}-1})\leq \sup_{\STRATEGY\in \overline{\mc S}} \inf_{\mb p\in \Delta^{(\overline{n})}}\p[\success(\STRATEGY)] \leq 1/H_{\overline{n}}.$$
where $\Delta^{(n)} = \{\mb x\in \Delta|x_i=0,~\forall i\geq n+1\}$. Since $N\leq \overline{n}$ implies $\E[N]\leq \overline{n}$, the above upper bound implies that when $\E[N]\leq \overline{\mu}$, no algorithm can achieve a guarantee better than $1/H_{\lfloor \overline{\mu}\rfloor}$ in the worst case.

We extend these existing results in two directions. First, we show that the $1/(1+H_{\overline{n}-1})$ lower bound can in fact be achieved by a randomization over single-threshold strategies, as compared to the strategies $\STRATEGY(\mb q)$ proposed in \cite{GharanVondrak2011} and \cite{Hill1991_minimax}. Similar results hold when $\E[N]\leq \overline{\mu}$ and $\overline{\mu}$ is known.

Notice that according to Lemma \ref{lm:secretary_prob}, $\p[\success(\mb q)] = A(\mb p,\mb q)$ is not linear in the strategy parameter $\mb q$, and so for a random strategy $\hat{\mb q}$ with $\overline{\mb q} = \E[\hat{\mb q}]$, the performance of $\hat{\mb q}$, $\p[\success(\hat{\mb q})]$, does not equal to $\p[\success(\overline{\mb q})]$ in general. 
\begin{theorem}\label{thm:minmax}
For each $n\in \N$, define the following distribution $\mb x^{(n)}\in \Delta^{(n)}$: $x^{(n)}_1 = \frac{1}{1+H_{n-1}}$, and $x^{(n)}_i = \frac{1/(1+H_{n-1})}{i-1}$ for $i=2,\ldots,\overline{n}$. Then for any $\mb p\in \Delta^{(\overline{n})}$, let $L\sim \mb x^{(\overline{n})}$, 
$$\p[\success(\mb q^{(L)})] =\frac{1}{1+H_{\overline{n}-1}}. $$
For any $\mb p\in \Delta$, such that if $N\sim \mb p$, $\E[N]\leq \overline{\mu}$, let $L\sim \mb x^{( \lceil \overline{\mu}\log(\overline{\mu})\rceil)}$, 
$$\p[\success(\mb q^{(L)})]\geq  \frac{1}{\log(\overline{\mu})}(1-O(\frac{\log(\log(\overline{\mu}))}{\log(\overline{\mu})})). $$
\end{theorem}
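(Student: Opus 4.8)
The plan is to compute $\p[\success(\mb q^{(L)})]$ directly by conditioning on the threshold $L$ and using the closed form from Lemma \ref{lm:secretary_prob}. Recall that for the single-threshold strategy $\mb q^{(l)}$ we have $U_{i-1}(\mb q^{(l)})q^{(l)}_i = \mb 1[i\geq l]\prod_{j=l}^{i-1}(1-1/j) = \mb 1[i\geq l]\,\frac{l-1}{i-1}$ for $i\geq l\geq 2$ (and $=1$ when $l=1$), so that $A(\mb p,\mb q^{(l)}) = \sum_{i\geq l} \frac{l-1}{i-1}\lambda_i(\mb p)$ for $l\geq 2$. Averaging over $L\sim\mb x^{(\overline n)}$ and swapping the order of summation, $\p[\success(\mb q^{(L)})] = \sum_{i} \lambda_i(\mb p) \sum_{l\leq i} x^{(\overline n)}_l \cdot \frac{(l-1)_+}{(i-1)_+}$ (with the obvious convention for $l=1$). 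The first key step is to show that the inner sum $c_i := \sum_{l\leq i} x^{(\overline n)}_l \frac{(l-1)_+}{(i-1)_+}$ is \emph{exactly constant} in $i$ for $i=1,\dots,\overline n$: plugging in $x^{(\overline n)}_1 = \frac{1}{1+H_{\overline n-1}}$ and $x^{(\overline n)}_l = \frac{1/(1+H_{\overline n-1})}{l-1}$ gives $c_i = \frac{1}{(1+H_{\overline n-1})(i-1)}\bigl(x_1\cdot 0\text{-term contributes }? \bigr)$ — more carefully, for $i\geq 2$ the $l=1$ term contributes $0$ and each $l\in\{2,\dots,i\}$ contributes $\frac{1}{1+H_{\overline n-1}}\cdot\frac{1}{l-1}\cdot\frac{l-1}{i-1} = \frac{1}{(1+H_{\overline n-1})(i-1)}$, and there are $i-1$ such terms, so $c_i = \frac{1}{1+H_{\overline n-1}}$; and $c_1 = x^{(\overline n)}_1 = \frac{1}{1+H_{\overline n-1}}$ as well. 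This is exactly the telescoping/weighting trick behind the $\mb p^{(n)}_*$ construction, now used on the strategy side. Since $c_i \equiv \frac{1}{1+H_{\overline n-1}}$ and $\sum_{i=1}^{\overline n}\lambda_i(\mb p) = \sum_{i}\frac{1}{i}\sum_{j\geq i}p_j\cdot\mathbf 1[i\le\overline n]$; for $\mb p\in\Delta^{(\overline n)}$ we have $\sum_{i=1}^{\overline n}\lambda_i(\mb p) = \sum_{j=1}^{\overline n} p_j \sum_{i=1}^{j}\frac1i \cdot\frac1j\cdot j$? — rather, $\sum_{i\ge1}\lambda_i(\mb p) = \sum_{j}p_j\sum_{i=1}^{j}\frac1j = \sum_j p_j = 1$, so $\p[\success(\mb q^{(L)})] = \frac{1}{1+H_{\overline n-1}}$, giving the first claim.

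For the expectation-bounded case, set $n_0 := \lceil \overline\mu\log\overline\mu\rceil$ and run the same computation with $L\sim\mb x^{(n_0)}$. Now $\mb p$ need not be supported on $[n_0]$, so the inner coefficient $c_i$ equals $\frac{1}{1+H_{n_0-1}}$ for $i\le n_0$ but for $i>n_0$ the strategy never reaches threshold beyond $n_0$ once $L$ is drawn, i.e. $\mb q^{(L)}$ with $L\le n_0$ still continues past $n_0$; so actually $A(\mb p,\mb q^{(l)}) = \sum_{i\ge l}\frac{l-1}{i-1}\lambda_i(\mb p)$ includes $i>n_0$ terms too, and averaging gives $\p[\success(\mb q^{(L)})] = \sum_{i\ge1}\lambda_i(\mb p)\,c_i'$ where $c_i' = \frac{1}{1+H_{n_0-1}}$ for $i\le n_0$ and $c_i' = \frac{1}{1+H_{n_0-1}}\cdot\frac{\,\sum_{l=2}^{n_0}1\,}{i-1}=\frac{n_0-1}{(1+H_{n_0-1})(i-1)}$ for $i>n_0$, which is \emph{smaller}. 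So $\p[\success(\mb q^{(L)})] \ge \frac{1}{1+H_{n_0-1}}\sum_{i=1}^{n_0}\lambda_i(\mb p)$, and the task reduces to lower-bounding the tail loss $\sum_{i>n_0}\lambda_i(\mb p) = \sum_{i>n_0}\frac1i\sum_{j\ge i}p_j \le \sum_{j>n_0}p_j \le \p[N>n_0]$. By Markov's inequality $\p[N>n_0]\le \overline\mu/n_0 = \overline\mu/\lceil\overline\mu\log\overline\mu\rceil \le 1/\log\overline\mu$. Hence $\sum_{i=1}^{n_0}\lambda_i(\mb p)\ge 1 - 1/\log\overline\mu$, and combining with $H_{n_0-1} = \log n_0 + O(1) = \log\overline\mu + \log\log\overline\mu + O(1)$ we get
$$\p[\success(\mb q^{(L)})] \ge \frac{1-1/\log\overline\mu}{1+\log\overline\mu+\log\log\overline\mu+O(1)} = \frac{1}{\log\overline\mu}\Bigl(1 - O\bigl(\tfrac{\log\log\overline\mu}{\log\overline\mu}\bigr)\Bigr),$$
which is the stated bound.

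I expect the main obstacle to be bookkeeping rather than any deep idea: one must be careful with the $l=1$ boundary case in $\mb q^{(l)}$ (where $U_{i-1}q_i = 1$, not $\frac{l-1}{i-1}$, but since $l-1=0$ this is consistent only if we interpret the empty product as $1$ — so the $l=1$ term needs separate handling and it is exactly why $x^{(n)}_1$ is defined with the extra mass), and with the $i>n_0$ tail terms in the second part, making sure the inequality goes the right way (dropping them only helps the lower bound). A secondary point is choosing $n_0$ large enough that Markov gives the $1/\log\overline\mu$ tail bound while small enough that $H_{n_0}\approx\log\overline\mu$; the choice $n_0 = \lceil\overline\mu\log\overline\mu\rceil$ balances these, and one should double-check that $\overline\mu$ need not be an integer, using $\lfloor\cdot\rfloor$/$\lceil\cdot\rceil$ as appropriate and absorbing rounding into the $O(\cdot)$ term. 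No concentration inequalities beyond Markov are needed here; the averaging over $L$ does all the work.
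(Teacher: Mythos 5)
Your proof is correct. For the first part you are doing the same telescoping/constant-coefficient calculation as the paper, just written in the $(\lambda_i)_{i\in\N}$ coordinates (via Lemma~\ref{lm:secretary_prob}, $A(\mb p,\mb q^{(l)})=\sum_{i\ge l}\frac{l-1}{i-1}\lambda_i(\mb p)$ for $l\ge 2$), whereas the paper conditions on $N=i$ and sums $\sum_l x_l\,\p[\success(\mb q^{(l)})\mid N=i]$. These are the same argument under a change of variables; in either formulation the key point is that $\mb x^{(\overline n)}$ makes the inner sum constant in $i$. For the second part your route differs mildly from the paper's: you drop the nonnegative $i>n_0$ terms, note $\sum_{i\ge1}\lambda_i(\mb p)=1$, and upper-bound the tail $\sum_{i>n_0}\lambda_i(\mb p)\le\p[N>n_0]$ directly, whereas the paper uses $\p[\success(\mb q^{(L)})]\ge\p[\success(\mb q^{(L)})\mid N\le n](1-\overline\mu/n)$ and reapplies the first part to the conditional distribution of $N$ on $[n]$. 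Both are short and give the same bound; the paper's conditioning is a touch cleaner since it reuses the first part as a black box, while your version keeps everything in one explicit $\lambda$-computation and avoids any worry about the conditioning being well-defined. One small slip: where you write ``the task reduces to \emph{lower}-bounding the tail loss $\sum_{i>n_0}\lambda_i(\mb p)$'' you mean \emph{upper}-bounding it (the subsequent inequalities do go the right way), and $\sum_{j>n_0}p_j=\p[N>n_0]$ is an equality rather than an inequality.
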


The above theorem shows that in the \textit{worst case} when $\mb p$ is supported on $[\overline{n}]$, no algorithm can succeed with probability more than $1/H_{\overline{n}}$, which converges to $0$ as $\overline{n}\to \infty$. This motivates us to study instead the \textit{most likely cases}: for most $\mb p$ supported on $[\overline{n}]$, is $\sup_{\STRATEGY\in \overline{\mc S}} \p[\success(\STRATEGY)]$ lower bounded by a constant, or converging to $0$ as $\overline{n}\to \infty$? Surprisingly, we find that there is a single-threshold strategy (depending on $\overline{n}$) that guarantees a constant probability of success for most distributions supported on $[\overline{n}]$:
\begin{theorem}\label{thm:bad_case_few}
For each $n\in \N$, define $l^*_n :=\lceil n/e^2\rceil$. Then for any $0<\epsilon<2/e^2$, there exists $0<\rho<1$, such that if $\hat{\mb p}^{(n)}\sim Uniform(\Delta^{(n)})$ for each $n\in \N$, then $\p[A(\hat{\mb p}^{(n)},\mb q^{(l^*_n)})\leq \epsilon ]=O(\rho^{n})$ as $n\to \infty$. For instance, for $\epsilon\leq 0.03$, 
$\p[A(\hat{\mb p}^{(n)},\mb q^{(l^*_n)})\leq \epsilon ]=O(2^{-n})$. The probability is over the randomness in $\hat{\mb p}^{(n)}$ ($A(\cdot,\mb q^{(l^*_n)}):\Delta\to [0,1]$ is a deterministic function of $\mb p$). 
\end{theorem}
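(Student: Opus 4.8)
The plan is to exploit the fact, already visible in Lemma~\ref{lm:secretary_prob}, that $A(\cdot,\mb q^{(l^*_n)})$ is a \emph{linear} functional of the distribution. Write $c^{(n)}_i:=A(\boldsymbol{\delta}^{(i)},\mb q^{(l^*_n)})$ for the probability that the single‑threshold strategy $\mb q^{(l^*_n)}$ succeeds when there are exactly $i$ candidates; Lemma~\ref{lm:secretary_prob} gives the closed form $c^{(n)}_i=\frac{l^*_n-1}{i}\bigl(H_{i-1}-H_{l^*_n-2}\bigr)$ for $l^*_n\le i\le n$ and $c^{(n)}_i=0$ for $i<l^*_n$, and in particular $c^{(n)}_i\in[0,1]$ since it is a probability. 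Then for any $\mb p\in\Delta^{(n)}$ we have $A(\mb p,\mb q^{(l^*_n)})=\sum_{i=1}^n p_i c^{(n)}_i=\langle\mb p,\mb c^{(n)}\rangle$, so the theorem becomes a pure concentration statement: a uniformly random point of the simplex $\Delta^{(n)}$, paired against the \emph{fixed} coefficient vector $\mb c^{(n)}$, rarely has inner product below $\epsilon$.

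Next I would pin down the relevant mean. Since $\hat{\mb p}^{(n)}\sim\mathrm{Uniform}(\Delta^{(n)})$ is $\mathrm{Dirichlet}(1,\dots,1)$, we have $\E[\hat p^{(n)}_i]=1/n$, hence $\E[A(\hat{\mb p}^{(n)},\mb q^{(l^*_n)})]=\frac1n\sum_i c^{(n)}_i=:\mu_n/n$. Replacing harmonic numbers by logarithms and the sum by an integral, with $l^*_n=\lceil n/e^2\rceil$, shows $\mu_n/n\to\int_{1/e^2}^1\frac{1/e^2}{\beta}\ln(\beta e^2)\,d\beta=\frac{\alpha(\ln\alpha)^2}{2}\big|_{\alpha=1/e^2}=\frac{2}{e^2}$; note that $\alpha=1/e^2$ is precisely the maximizer of $\alpha\mapsto\alpha(\ln\alpha)^2/2$ on $(0,1]$, which is why $2/e^2$ appears and why this threshold is chosen. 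Consequently, for any fixed $\epsilon<2/e^2$ there is a gap $\mu_n-\epsilon n\ge c_1 n$ for some $c_1=c_1(\epsilon)>0$ and all large $n$.

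For the concentration itself I would use the standard representation $\hat p^{(n)}_i=E_i/S$ with $E_1,\dots,E_n$ i.i.d.\ $\mathrm{Exp}(1)$ and $S=\sum_i E_i$. Then the event $\{A(\hat{\mb p}^{(n)},\mb q^{(l^*_n)})\le\epsilon\}$ is exactly $\{\sum_{i=1}^n(c^{(n)}_i-\epsilon)E_i\le 0\}$, the lower‑tail event for a weighted sum of i.i.d.\ exponentials with weights $d_i:=c^{(n)}_i-\epsilon\in[-\epsilon,1-\epsilon]$ summing to $\mu_n-\epsilon n\ge c_1 n>0$. A Chernoff bound gives, for $0<t<1/\epsilon$,
\[
\p\Bigl[\textstyle\sum_i d_iE_i\le 0\Bigr]\le\prod_{i=1}^n\frac{1}{1+td_i}=\exp\!\Bigl(-\sum_{i=1}^n\ln(1+td_i)\Bigr).
\]
Taking $t$ small and using $\ln(1+s)\ge s-s^2$ for $s>-1/2$ bounds the exponent below by $t(\mu_n-\epsilon n)-t^2\sum_i d_i^2\ge t c_1 n-t^2 n$, and optimizing in $t$ makes the probability at most $\exp(-\Omega_{\epsilon}(n))=O(\rho^n)$ for a suitable $\rho=\rho(\epsilon)\in(0,1)$. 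For the explicit assertion ($\epsilon\le 0.03$, $\rho=1/2$) I would instead keep $t$ of moderate size (around $10$): splitting the sum according to whether $c^{(n)}_i\ge 2/e^2$ or $c^{(n)}_i<\epsilon$, and using that $i\mapsto c^{(n)}_i$ is unimodal — increasing until $i\approx n/e$, then decreasing down to $c^{(n)}_n\approx 2/e^2$ — one shows that roughly $0.79n$ indices have $c^{(n)}_i\ge 2/e^2$ while at most about $0.14n$ have $c^{(n)}_i<\epsilon$; feeding these counts into the displayed inequality makes the exponent exceed $n\ln 2$ for all large $n$.

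The step that needs the most care is establishing $\mu_n/n\to 2/e^2$ with the correct harmonic‑number asymptotics (handling the ceiling in $l^*_n$ and the lower‑order terms) and recognizing $\lceil n/e^2\rceil$ as the exact maximizer; everything downstream is a routine exponential‑tail computation. For the explicit constant one additionally needs the mild but slightly tedious quantitative control of $\#\{i:c^{(n)}_i\ge 2/e^2\}$ and $\#\{i:c^{(n)}_i<\epsilon\}$, which follows from the monotonicity of $i\mapsto c^{(n)}_i$ on either side of its peak together with the asymptotics $c^{(n)}_{\lceil\beta n\rceil}\to\frac{\ln(\beta e^2)}{\beta e^2}$.
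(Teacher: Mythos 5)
Your proposal follows essentially the same route as the paper's proof: linearity of $A(\cdot,\mb q^{(l^*_n)})$ reduces the claim to a lower‑tail bound for a linear form $\langle\mb c^{(n)},\hat{\mb p}^{(n)}\rangle$ with $\hat{\mb p}^{(n)}$ uniform on the simplex; the normalized‑exponentials representation of $\mathrm{Dirichlet}(1,\dots,1)$ turns this into $\p\bigl[\sum_i(c^{(n)}_i-\epsilon)E_i\le 0\bigr]$; and a Chernoff/MGF bound plus the computation that the mean coefficient tends to $2/e^2$ (the paper only proves $\liminf\overline{a}^{(n)}\ge 2/e^2$ in Lemma~\ref{lm:prop-a}, which suffices) gives the exponential decay. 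The difference is cosmetic and lives in the last step: you close the Chernoff bound in place using $\ln(1+s)\ge s-s^2$ and a small $t$, and for the explicit $\rho=1/2$ claim you propose counting indices with $c^{(n)}_i\ge 2/e^2$ versus $c^{(n)}_i<\epsilon$ using the unimodality of $i\mapsto c^{(n)}_i$; the paper instead packages the Chernoff step into a standalone lemma (Lemma~\ref{lemma:random_vec_in_simplex}) yielding the explicit function $h_\alpha$, then certifies $\rho=1/2$ by showing $\overline{a}^{(n)}/a^{*,(n)}\ge 0.72$ (Lemma~\ref{lm:prop-a}) and checking $h_{0.72}(x)<1/2$ for $x\le 0.12$ (Lemma~\ref{lm:prop-h}). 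Your index‑counting version is viable but requires you to actually pin down the thresholds $\#\{i:c^{(n)}_i\ge 2/e^2\}\approx 0.79n$ and $\#\{i:c^{(n)}_i<\epsilon\}\lesssim 0.14n$, which you have only sketched; the paper's $h_\alpha$ bookkeeping avoids needing any structural information about $\mb c^{(n)}$ beyond its mean and max.
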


The above theorem shows that for all but an exponentially small fraction of distributions supported on $[n]$, the strategy $\mb q^{(l_n^*)}$ achieves a constant probability of success. Thus, for most distributions with bounded support, $\sup_{\STRATEGY\in \overline{\mc S}} \p[\success(\STRATEGY)]$ is lower bounded by a positive constant that does not depend on the upper bound of the support. 

\vspace{1em}
\noindent\textbf{Samples}: $N\sim \mb p$ for some unknown distribution $\mb p\in \Delta$, and $m$ samples $N^{(1)},\ldots,N^{(m)}\sim_{iid} \mb p$ are given. We investigate learning algorithms and the sampling complexity of finding an $\epsilon$ suboptimal (as compared to knowing $\mb p$) strategy $\hat{\mb q}\in \mc S$, such that with high probability, 
\begin{equation}\label{eq:sample-obj}
    A(\mb p, \hat{\mb q})\geq \sup_{\mb q\in \mc S} A(\mb p,\mb q)-\epsilon.
\end{equation}

We propose a block based approach stated in Algorithm \ref{alg:learning_algo}, where ``block'' refers to the following mapping, called $\block$: 
\begin{definition}\label{def:block}
For each $\rho>1$, let $I(\rho):=\{ \lceil \rho^l \rceil,~l\in \{0\}\cup \N\} $ be an index set, with elements $I(\rho)=\{1=I(\rho,1)<I(\rho,2)<\cdots\}$\footnote{The notation $I(\rho,\cdot)$ denotes distinctive terms in $I(\rho)$. This is to deal with cases where $\rho$ is very small and $\lceil \rho^{l}\rceil = \lceil \rho^{l+1}\rceil$ for $l$ small. For instance, when $\rho<2$, for all $l<\frac{\log(2)}{\log(\rho)}$, $\lceil \rho^l \rceil = 2$.}. For convenience, denote $I(\rho,0) = 0$. Let $\block:\Delta\times (1,\infty)\to \Delta$ be a mapping such that for $\mb p\in \Delta$, $\rho>1$, $\block(\mb p,\rho)$ is a distribution supported only on $I(\rho)$, where 
    $$\block(\mb p,\rho)_i = \begin{cases} 0\quad &i\notin I(\rho)\\
    \sum_{j=I(\rho,l-1)+1}^{I(\rho,l)} p_j\quad & i = I(\rho,l),l\in \N
    \end{cases}.$$
\end{definition}

On a high level, our Algorithm \ref{alg:learning_algo} uses samples to construct an approximation to $A(\block(\mb p,\rho),\cdot)$ for $\rho = 1+\epsilon/4$. We show that the bias due to the use of $\block$, i.e. the difference between $A(\block(\mb p,\rho),\cdot)$ and $A(\mb p,\cdot)$, is small for $\rho$ close to $1$ (Corollary \ref{cor:blocking-error}). In addition, a good estimation of $i\lambda_i(\block(\mb p,\rho))$ for $i$ in the support of $\block(\mb p,\rho)$ is sufficient for a good function approximation (Lemma \ref{lm:approx-error-prop} and \ref{lm:subseq-i-lambda-i}). Together, we get the following guarantee: 
\begin{theorem}\label{thm:sample-ub}
Let $0<\epsilon,\delta<1$. Let $T\in \N$ be such that $\sum_{i=T+1}^{\infty}p_i \leq \epsilon/12$. If $T=1$, and $m\geq \frac{18}{\epsilon}\log(\frac{2}{\delta})$, or if $T\geq 2$ and $m\geq \max( \frac{18}{\epsilon}\log(\frac{50\log(T)}{\epsilon\delta}),\frac{1}{2\epsilon^2}\log(\frac{1200}{\epsilon^2\delta}))$, with probability at least $1-\delta$, Algorithm \ref{alg:learning_algo} outputs $\hat{\mb q}$ satisfying Equation \eqref{eq:sample-obj}.
\end{theorem}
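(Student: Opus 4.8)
The argument revolves around rewriting Lemma~\ref{lm:secretary_prob} as $A(\mb p,\mb q)=\sum_{i\ge1}W_i(\mb q)\,\bigl(i\lambda_i(\mb p)\bigr)$, where $W_i(\mb q):=U_{i-1}(\mb q)-U_i(\mb q)\ge0$ (using $U_i=U_{i-1}(1-q_i/i)$, so $U_{i-1}q_i=i\,W_i(\mb q)$) and $\sum_iW_i(\mb q)=1-\lim_iU_i(\mb q)\le1$: thus $A(\mb p,\mb q)$ is a sub-probability average of the sequence $\bigl(i\lambda_i(\mb p)\bigr)_i\subseteq[0,1]$. This gives two soft facts. (i) Passing from $\mb p$ to $\block(\mb p,1+\epsilon/4)$ only redistributes mass within blocks on which the weight $1/j$ inside $\lambda_i$ changes by a factor at most $1+\epsilon/4$, so $\sup_{\mb q}\bigl|A(\block(\mb p,1+\epsilon/4),\mb q)-A(\mb p,\mb q)\bigr|=O(\epsilon)$ --- this is Corollary~\ref{cor:blocking-error}. (ii) Writing $\rho=1+\epsilon/4$ and $g_l:=I(\rho,l)\,\lambda_{I(\rho,l)}\!\bigl(\block(\mb p,\rho)\bigr)$, if one knows every $g_l$ up to additive error $\eta$ and knows that the indices beyond some $T$ with $\sum_{i>T}p_i\le\epsilon/12$ contribute $O(\epsilon)$, then the function $\widehat A(\cdot)$ built by Algorithm~\ref{alg:learning_algo} satisfies $\|\widehat A-A(\block(\mb p,\rho),\cdot)\|_\infty=O(\eta+\epsilon)$ --- this is Lemmas~\ref{lm:approx-error-prop} and \ref{lm:subseq-i-lambda-i}. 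Composing (i)--(ii) with the textbook estimate-then-optimize inequality (if $\|\widehat A-A(\mb p,\cdot)\|_\infty\le\gamma$ and $\widehat{\mb q}\in\arg\max_{\mb q\in\mc S}\widehat A(\mb q)$ then $A(\mb p,\widehat{\mb q})\ge\sup_{\mb q\in\mc S}A(\mb p,\mb q)-2\gamma$), it suffices to estimate all the $g_l$ to accuracy $\eta=c\epsilon$ for a small absolute constant $c$, with the bookkeeping of $12,18,50,1200$ arranged so that the output $\widehat{\mb q}$ satisfies \eqref{eq:sample-obj} with probability $\ge1-\delta$.

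For the estimation I would use the unbiased plug-in on blocked samples: if $\widetilde N^{(k)}$ denotes the right endpoint of the block containing $N^{(k)}$ --- so $\widetilde N^{(k)}\sim\block(\mb p,\rho)$ --- then $\widehat{i\lambda_i}:=\frac1m\sum_{k=1}^m\frac{i}{\widetilde N^{(k)}}\mb 1[\widetilde N^{(k)}\ge i]$ has summands in $[0,1]$, mean exactly $g_l$ at $i=I(\rho,l)$, and per-sample variance at most $g_l$. Hoeffding's inequality then costs $O(\eta^{-2}\log(1/\delta'))$ samples to get additive error $\eta$ at any single grid point; and for grid points where $g_l\le\eta$, Bernstein's inequality with the variance bound $v\le g_l\le\eta$ costs only $O(\eta^{-1}\log(1/\delta'))$ samples. (The same small-variance observation handles the case $T=1$ outright: then $N=1$ with probability $\ge1-\epsilon/12$, so $1/\widetilde N$ has variance $\le\epsilon/12$, and Bernstein learns the only relevant quantity $\lambda_1$ --- and hence $A(\mb p,\cdot)$ up to the blocking bias --- at the $\epsilon^{-1}$ rate.)

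The crux --- and the source of the $\max$ of two terms --- is the union bound over grid points when $T\ge2$. Indices $i>T$ never need to be estimated accurately: there $\widehat{i\lambda_i}\le\frac1m\#\{k:\widetilde N^{(k)}>T\}$, whose mean is $\le\epsilon/12$ (after replacing $T$ by the least grid point $\ge T$, which perturbs $\log T$ by a constant and makes ``$i>T$'' coincide with ``$i$ past a block boundary''), so one Chernoff bound costing $O(\epsilon^{-1}\log(1/\delta))$ controls all of them together; the corresponding true $g_l$ are also $O(\epsilon)$ there. Among the $|I(\rho)\cap[1,T]|=O(\epsilon^{-1}\log T)$ grid points with index $\le T$, the ``light'' ones ($g_l\le\eta$) each get failure budget $\delta/O(\epsilon^{-1}\log T)$ and the cheap $\eta^{-1}$ rate, for a total of $O\!\bigl(\epsilon^{-1}\log\tfrac{\log T}{\epsilon\delta}\bigr)$ samples --- the first term. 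A direct union bound over the ``heavy'' ones ($g_l>\eta$) would add an unwanted $\log\log T$, so instead I would bound their count by a $T$-independent quantity: since $\block(\mb p,\rho)$ is a probability distribution, $\sum_{i\ge1}\lambda_i(\block(\mb p,\rho))=1$, and since each genuine (non-singleton) $l$-th block has width $\asymp(\rho-1)I(\rho,l)$ while the $O(\epsilon^{-1})$ singleton blocks contribute at most $1$ apiece to $\sum_lg_l$, one gets $\sum_lg_l=O(1/(\rho-1))=O(\epsilon^{-1})$; hence at most $O(\epsilon^{-1}/\eta)=O(\epsilon^{-2})$ grid points are heavy, each gets failure budget $\delta/O(\epsilon^{-2})$ and the $\eta^{-2}$ rate, for a total of $O\!\bigl(\epsilon^{-2}\log\tfrac1{\epsilon^2\delta}\bigr)$ samples with no $T$-dependence --- the second term. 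Taking the maximum of the three sample costs and tracking the constants finishes the proof. The one genuinely non-routine ingredient is this $T$-free estimate on the number of heavy grid points, which is exactly what makes the second term free of $\log\log T$; everything else --- the identity for $A$, the blocking-bias and approximation lemmas, the concentration inequalities, and estimate-then-optimize --- is routine once the reduction is in place.
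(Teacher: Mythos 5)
Your proposal is correct and follows the paper's architecture step for step: the blocking-bias bound (Corollary~\ref{cor:blocking-error}), the approximation lemmas (Lemma~\ref{lm:approx-error-prop}, Lemma~\ref{lm:subseq-i-lambda-i}), estimate-then-optimize (Lemma~\ref{lm:error-approximation-guarantee}), a Chernoff bound controlling all indices past (roughly) $T$ at once, and a Hoeffding/Bernstein split over on-grid indices with the count of ``expensive'' grid points bounded by a $T$-free $O(\epsilon^{-2})$ quantity. The one small variation is that you classify grid points by the mean $g_l$ and bound $\sum_l g_l = O(\epsilon^{-1})$, while the paper classifies by per-sample variance and bounds $\sum_l \E[(k_l\lambda_{k_l}(\block(\boldsymbol{\delta}^{(N)},\rho)))^2]\le \tfrac{4}{1-1/\rho^2}$; these are interchangeable here because the per-sample estimator lies in $[0,1]$, so its variance is at most its mean, and both the sum-of-means bound and the sum-of-second-moments bound follow from the same geometric-ratio estimate (Lemma~\ref{lm:ratio-index}).
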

\vspace{-0.5em}

\begin{algorithm}
	\SetAlgoNoLine
	\KwIn{$m$ samples $N^{(1)},N^{(2)},\ldots,N^{(m)}\sim_{iid} \mb p$, $0<\epsilon\leq 1$. }
	\KwOut{$\hat{\mb q}\in \mc S$.}
    $\rho \gets 1+\epsilon/4$\;
    $l_{\max} \gets \inf\{i\in \N,~I(\rho,i)\geq \max_{j\in [m]} N^{(j)}\}$\;
    $\hat{p}_{i} \gets \begin{cases}
        \frac{1}{m}\sum_{j=1}^m \mb 1[I(\rho,l-1)+1\leq N^{(j)}\leq I(\rho,l)]\quad & i=I(\rho,l),~l\in [l_{\max}]\\
        0\quad & o.w.
    \end{cases}$\;
    $G_i \gets i\lambda_{I(\rho,l)}(\hat{\mb p}),\quad I(\rho,l-1)<i\leq I(\rho,l),~l\in [l_{\max}]$\;
    $\hat{\mb q} \gets \sup_{\mb q\in \mc S}\sum_{i=1}^{I(\rho,l_{\max})}U_{i-1}(\mb q)\frac{q_i}{i}\cdot  G_i$
	\caption{Secretary Problem with Samples: Block based Algorithm}
	\label{alg:learning_algo}
\end{algorithm}
\vspace{-0.5em}
We make three additional comments about Algorithm \ref{alg:learning_algo}. First, due to the use of $\block$, Algorithm \ref{alg:learning_algo} only requires a much weaker sampling oracle, where iid samples from the less refined distribution $\block(\mb p,\rho)$ are needed. Second, if $T$ is not known in advance, we can sample $N^{(1)},\ldots,N^{(m_1)}\sim_{iid} \mb p$ where $m_1 = \lceil \frac{12}{\epsilon}\log(\frac{2}{\delta})\rceil$, and then, with probability at least $1-\delta/2$, setting $T= \max_{j\in [m_1]} N^{(j)}$ will satisfy the requirement in Theorem \ref{thm:sample-ub}. Third, in the last step, an optimization problem needs to be solved. Due to its special structure, it can be solved exactly in a backward fashion in $O(I(\rho,l_{\max}))$ steps\footnote{Notice that each $q_i$ only appears in the terms $U_{j-1}(\mb q)\frac{q_j}{j}\cdot G_j$ for $j\geq i$, and once $q_{j}$'s are determined for all $j>i$, the objective function is linear in $q_i$.}. If an $\epsilon'$ suboptimal solution is obtained instead of an exact one, then $\hat{\mb q}$ is at least $\epsilon+\epsilon'$ suboptimal for the original problem.

In addition, we show that $\Omega(\frac{1}{\epsilon^{2}})$ samples are in fact necessary:
\begin{theorem}\label{thm:sample-lb}
For any $n\geq 3$, there exists a threshold $c_n$, such that for $0< \epsilon< c_n $, at least $m\geq \frac{c_n-\epsilon}{9\epsilon^2}$ samples are needed to find an $\epsilon/3$ suboptimal strategy $\hat{\mb q}$ with probability at least $2/3$. In addition, $\lim_{n\to\infty} c_n=\frac{1}{1+e}$.
\end{theorem}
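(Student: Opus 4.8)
The plan is to prove the lower bound by Le Cam's two--point method, using a family of distributions on $[n]$ that straddle a phase transition in the optimal strategy. For $\gamma\in[0,1]$ set $\mb p_\gamma := \gamma\,\boldsymbol{\delta}^{(1)} + (1-\gamma)\,\boldsymbol{\delta}^{(n)}$, so that by Lemma~\ref{lm:secretary_prob}, $A(\mb p_\gamma,\mb q) = \gamma\, q_1 + (1-\gamma)\, A(\boldsymbol{\delta}^{(n)},\mb q)$. The key structural fact is that the value of a strategy on $\mb p_\gamma$ is controlled entirely by its very first probability $q_1$: factoring $(1-q_1)$ out of $U_{l-1}(\mb q)$ for $l\geq 2$ gives $A(\boldsymbol{\delta}^{(n)},\mb q)\leq \tfrac{q_1}{n} + (1-q_1)\,v_n$, where $v_n := \sup_{\mb q\in\mc S} A(\boldsymbol{\delta}^{(n)},\mb q)$ is the classical $n$--secretary optimum (so $v_n\to 1/e$). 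Writing $t:=q_1$, this yields
\[
A(\mb p_\gamma,\mb q)\ \leq\ B_\gamma(t)\ :=\ t\,\alpha_n(\gamma) + (1-\gamma)\,v_n,\qquad \alpha_n(\gamma) := \gamma + \tfrac{1-\gamma}{n} - (1-\gamma)\,v_n ,
\]
with $B_\gamma(\cdot)$ attained at $t=1$ by $\mb q^{(1)}$ (pick the first candidate) and at $t=0$ by the classical optimal threshold strategy $\mb q^{(L^*)}$ (with $L^*\approx n/e$). Hence $\sup_{\mb q\in\mc S} A(\mb p_\gamma,\mb q) = (1-\gamma)\,v_n + \max(\alpha_n(\gamma),0)$; since $\alpha_n$ is affine with slope $\beta_n := 1 - \tfrac1n + v_n$ and a unique root $\gamma^*_n$, the optimal strategy is $\mb q^{(1)}$ for $\gamma>\gamma^*_n$ and $\mb q^{(L^*)}$ for $\gamma<\gamma^*_n$, and a short computation gives $\gamma^*_n\to\frac{1}{1+e}$ with the optimal value at the transition $(1-\gamma^*_n)v_n\to\frac{1}{1+e}$. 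Up to $O(1/n)$ corrections this last quantity is the $c_n$ of the statement, which explains the limit $\frac{1}{1+e}$.

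For the incompatibility step, fix $\gamma_0 < \gamma^*_n < \gamma_1$ and a strategy $\mb q$ with $q_1=t$. Since $B_\gamma$ is affine and $\sup A(\mb p_{\gamma_1},\cdot) = B_{\gamma_1}(1)$ while $\sup A(\mb p_{\gamma_0},\cdot) = B_{\gamma_0}(0)$, the suboptimality of $\mb q$ is at least $B_{\gamma_1}(1) - B_{\gamma_1}(t) = (1-t)\,\alpha_n(\gamma_1)$ on $\mb p_{\gamma_1}$, and at least $B_{\gamma_0}(0) - B_{\gamma_0}(t) = t\,|\alpha_n(\gamma_0)|$ on $\mb p_{\gamma_0}$. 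Being $\epsilon/3$--suboptimal for both then forces $1 - \tfrac{\epsilon/3}{\alpha_n(\gamma_1)}\leq t\leq\tfrac{\epsilon/3}{|\alpha_n(\gamma_0)|}$, impossible once $\tfrac{\epsilon/3}{\alpha_n(\gamma_1)} + \tfrac{\epsilon/3}{|\alpha_n(\gamma_0)|} < 1$. Taking $\gamma_0 = \gamma^*_n - \delta$, $\gamma_1 = \gamma^*_n + \delta$ with $\delta$ just above $\tfrac{2\epsilon}{3\beta_n}$ makes both $\alpha_n(\gamma_1) = \beta_n\delta$ and $|\alpha_n(\gamma_0)| = \beta_n\delta$ exceed $\tfrac{2\epsilon}{3}$, so no single strategy is $\epsilon/3$--suboptimal for both; by Lemma~\ref{lm:general_algo_to_q} it suffices to range over $\mc S$, so this rules out \emph{every} strategy in $\overline{\mc S}$. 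The requirement $\gamma_0\geq 0$, i.e.\ $\delta\leq\gamma^*_n$, is precisely where the hypothesis $\epsilon < c_n$ enters.

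Finally, suppose an algorithm outputs an $\epsilon/3$--suboptimal strategy with probability at least $2/3$ from $m$ i.i.d.\ samples of $N$, both under $\mb p_{\gamma_0}$ and under $\mb p_{\gamma_1}$. By the previous paragraph the two ``success'' events (the output being $\epsilon/3$--suboptimal for $\mb p_{\gamma_0}$, resp.\ for $\mb p_{\gamma_1}$) are disjoint, so the algorithm induces a test distinguishing $\mb p_{\gamma_0}^{\otimes m}$ from $\mb p_{\gamma_1}^{\otimes m}$ with total error below $2/3$, whence $\mathrm{TV}(\mb p_{\gamma_0}^{\otimes m},\mb p_{\gamma_1}^{\otimes m})\geq 1/3$. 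But $\mb p_{\gamma_0},\mb p_{\gamma_1}$ are supported on the two points $\{1,n\}$, so the Hellinger (equivalently KL--Pinsker) tensorization bound gives $\mathrm{TV}(\mb p_{\gamma_0}^{\otimes m},\mb p_{\gamma_1}^{\otimes m}) \leq \sqrt{m\,H^2(\mb p_{\gamma_0},\mb p_{\gamma_1})}$, and a direct two--point estimate yields $H^2(\mb p_{\gamma_0},\mb p_{\gamma_1}) = O\!\big(\tfrac{\delta^2}{\gamma^*_n(1-\gamma^*_n)}\big) = O(\epsilon^2)$. Combining the two displays forces $m = \Omega(1/\epsilon^2)$, and tracking the constants (using $\delta\approx\tfrac{2\epsilon}{3\beta_n}$, $\beta_n\to 1+\tfrac1e$, $\gamma^*_n\to\tfrac1{1+e}$, and the factor $9$ coming from the $2/3$ success probability and the $\epsilon/3$ slack) produces the stated bound $m\geq\tfrac{c_n-\epsilon}{9\epsilon^2}$ with $\lim_n c_n = \tfrac1{1+e}$.

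The main obstacle I anticipate is the constant bookkeeping: the two--point construction and the indistinguishability estimate each cost constant factors, and one must choose $\delta$ and the precise Hellinger/KL bound so that the threshold comes out exactly as $c_n$ with the correct limit, carefully carrying the finite--$n$ corrections ($v_n$ versus $1/e$, the $\tfrac{1-\gamma}{n}$ terms, $\gamma^*_n$ versus $\tfrac1{1+e}$). A secondary point needing care is the ``value depends on the strategy only through $q_1$'' reduction: one must check that the bound $A(\boldsymbol{\delta}^{(n)},\mb q)\leq \tfrac{q_1}{n} + (1-q_1)v_n$ and the resulting affine-in-$t$ suboptimality lower bounds hold for all $\mb q\in\mc S$, so that together with Lemma~\ref{lm:general_algo_to_q} the argument genuinely forbids every strategy, not merely the single-threshold ones.
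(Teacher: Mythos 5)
Your proposal is correct and follows essentially the same approach as the paper's proof: the same two-point family $\mb p_\gamma$ supported on $\{1,n\}$ straddling the phase transition in the optimal $q_1$, the same reduction to the observation that no single strategy can be $\epsilon/3$-suboptimal on both sides of the transition, and the same hypothesis-testing lower bound. The only differences are presentational: the paper shows incompatibility via a one-line averaging argument (the midpoint $A(\mb w^{(n,s_n^*)},\cdot)$ lies $\epsilon/e$ below the average of the two optima, so one gap must exceed $\epsilon/3$) rather than your explicit $t$-interval argument, and it routes the final tensorization through KL with Pinsker and reverse Pinsker rather than your Hellinger bound, yielding the threshold $c_n = s_n^* = \frac{A_n - 1/n}{1 + A_n - 1/n}$ directly.
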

In fact, the hard cases are supported only on $\{1,n\}$, with the probability $p_1 = c_n\pm \epsilon$. Here, $c_n$ is chosen such that the optimal strategy at time $1$, i.e. $q_1$, switches between $0$ and $1$. Thus to output an $O(\epsilon)$ suboptimal strategy, the algorithm needs to know $p_1$.

\subsection{Full information}

Theorem \ref{thm:secretary_approx} shows that the optimal single-threshold strategy gives a $1/e$ approximation to the optimal algorithm in $\overline{\mc S}$, that is, 
$$\sup_{l\in \N}\p[\success(\mb q^{(l)})]\geq \frac{1}{e}\sup_{\STRATEGY\in \overline{\mc S}} \p[\success(\STRATEGY)].$$

In addition, it shows that $\theta(\mb p)$ is a good approximation to the optimal performance when $N\sim \mb p$, in the sense that 
$$\frac{1}{e}\theta(\mb p) \leq \sup_{\STRATEGY\in \overline{\mc S}} \p[\success(\STRATEGY)]\leq \theta(\mb p) .$$
The above inequality is in fact tight. The lower bound is achieved by the classical secretary problem with $\mb p = \boldsymbol{\delta}^{(n)}$ for any $n\in \N$, since $\lambda_i(\boldsymbol{\delta}^{(n)}) = \frac{1}{n} \mb 1[i\leq n]$, and so $\theta(\boldsymbol{\delta}^{(n)}) = 1$, and it's known that as $n\to \infty$, the optimal success probability of the classical secretary problem with $T$ numbers converges to $\frac{1}{e} = \frac{1}{e}\theta(\mb p)$. For the upper bound, notice that the distribution $\mb p^{(n)}_{*}$ defined in Equation \eqref{eq:p-star} satisfies $ i\lambda_i(\mb p^{(n)}_{*}) = 1/H_n \mb 1[i\leq n]$ and so $\theta(\mb p^{(n)}_{*}) = 1/H_n$. In addition, Corollary \ref{cor:performance-p-star} shows that the optimal performance is in fact $1/H_n$, showing the tightness of the upper bound.

Last, we provide the sketch of proof of Theorem \ref{thm:secretary_approx}. For the upper bound, recall that 
$$A(\mb p,\mb q) = \sum_{i=1}^{\infty}(U_{i-1}(\mb q)q_i/i )\cdot (i\lambda_i(\mb p)),\quad U_0(\mb q) = 1,~U_i(\mb q):=\prod_{l=1}^{i}(1-q_l/l)$$
is linear in $i\lambda_i$, and the sum of the coefficients does not exceed $1$; for the lower bound, we compare $A(\mb p,\mb q)$ with $T\lambda_T\cdot A(\boldsymbol{\delta}^{(T)},\mb q)$ where $\boldsymbol{\delta}^{(T)}$ is a delta distribution supported at $T$. Since when $\mb p = \boldsymbol{\delta}^{(T)}$ is a delta distribution, the problem is just the classical secretary where the time horizon is $T$, and the optimal success probability $\approx 1/e$. Taking supremum over $T$ gives the desired result.

\subsection{Known upper bound}\label{sec:bdd_sec}
For Theorem \ref{thm:minmax}, the proof of the case when $N\leq \overline{n}$ is by calculating the probability of success when $L$ is sampled according to $\mb x\in \Delta$, then $\mb q^{(L)}$ is used. The proof of the case when $\E[N]\leq \overline{\mu}$ is by truncating the tail using Markov inequality, and using the result of the bounded case. 

For Theorem \ref{thm:bad_case_few}, the proof uses two main ideas: for fixed $\mb q$, $A(\mb p,\mb q)$ is linear in $\mb p$, so $\E[A(\mb p,\mb q)]=A(\E[\mb p],\mb q)$. In particular, when $\mb p$ is uniformly sampled in $\Delta^{(n)}$, $\E[\mb p]$ is the uniform distribution over $[n]$, and so in expectation, $A(\mb p,\mb q)$ is the same as the performance when $N$ is uniformly randomly selected from $[n]$ -- this suggests considering the optimal strategy for the case when $N$ follows the uniform distribution on $[n]$; second, we use concentration inequalities to show that with high probability, the performance is close to its expectation.

\cite{PresmanSonin1973} shows that when $\mb p$ is the uniform distribution over $[n]$, the optimal strategy is $\mb q^{(\tilde{l}^*_n)}$ for some $\tilde{l}^*_n\in \N$ satisfying $\frac{1}{e^2}(n+1)-0.5<\tilde{l}^*_n<\frac{1}{e^2}(n+1)+1.5$, and the probability of success for this strategy converges to $\frac{2}{e^2}$ as $n\to \infty$. With $l^*_n :=\lceil n/e^2\rceil$, the strategy $\mb q^{(l^*_n)}$ in Theorem \ref{thm:bad_case_few} is an approximately optimal strategy when $N$ follows the uniform distribution on $[n]$. Thus, Theorem \ref{thm:bad_case_few} shows:
\begin{itemize}
    \item for small $\epsilon$, the fraction of distributions $\mb p\in \Delta^{(n)}$ where no algorithm can achieve better than $\epsilon$ probability of success decays exponentially in $n$, despite the $O(\log(n)^{-1})$ worst case $\mb p$;
    \item single-threshold strategies can achieve constant probability of success for most bounded distributions.
\end{itemize}

\subsection{Samples}\label{sec:sample}

We take a learning theoretical approach: the samples are used to construct an approximation to $A(\mb p, \cdot):\mc S \to [0,1]$ viewed as a function of the strategy parameter $\mb q\in \mc S$. If the approximation error is at most $\epsilon/2$ for all $\mb q\in \mc S$, then any optimal policy for the approximation function satisfies the goal Equation \eqref{eq:sample-obj} (Lemma \ref{lm:error-approximation-guarantee}). Recall that from Lemma \ref{lm:secretary_prob}, 
$$A(\mb p,\mb q) := \sum_{i=1}^{\infty}p_i (\frac{1}{i}\sum_{l=1}^{i}U_{l-1}(\mb q)q_l) = \sum_{i=1}^{\infty}U_{i-1}(\mb q)q_i\lambda_i(\mb p)$$
where $U_0(\mb q) = 1$, $U_i(\mb q):=\prod_{l=1}^{i}(1-q_l/l)$ and $\lambda_i(\mb p):= \sum_{l=i}^{\infty}p_l/l$ for $i\in \N$. Since $\frac{1}{i}\sum_{l=1}^{i}U_{l-1}(\mb q)q_l\leq 1 $ for all $\mb q\in \mc S$, $A(\cdot,\mb q)$ is $1$-Lipschitz in $\mb p$ with respect to the $l_1$ norm. Thus, estimating $\mb p$ up to accuracy of $\epsilon/2$ in $l_1$ norm, which usually requires $O(\frac{n}{\epsilon^2})$ samples \cite{Han2015}, is enough to find a $\epsilon$ suboptimal strategy. However, due to the specific structure of the secretary problem, the function $A(\mb p,\cdot)$ has two nice properties which we state in Section \ref{sec:A-prop}. These properties allow much more efficient sampling algorithms. 

Indeed, our Algorithm \ref{alg:learning_algo} achieves the optimal sampling complexity of $O(\frac{1}{\epsilon^2})$ (up to a $\log(\frac{1}{\epsilon})$ factor) in the regime $\epsilon = O(\frac{1}{\log\log(T)})$, and remains sublinear in $T$ for all $\epsilon\in (0,1)$. It's still open whether this dimension dependence in the large $\epsilon$ regime is necessary. However, we conjecture that the overall sample complexity is at least $\Omega(\log\log(T))$, since it requires $\Omega(\log(\frac{\log(n)}{\epsilon}))$ bits to just write down an $\epsilon$ suboptimal strategy for distributions supported on $[n]$ for $0<\epsilon<\frac{1}{2e}$ (Theorem \ref{thm:sample-lb-large-eps}).

\subsubsection{Two properties of $A(\cdot,\cdot)$}\label{sec:A-prop}
First, for any strategy $\mb q\in \mc S$, $A(\cdot,\mb q)$ is insensitive to \textit{local} variation in $\mb p$: if some probability mass that belongs to $p_i$ is assigned to $p_{i'}$, as long as $i'$ is close to $i$, the resulting change in the function value is small. To be more precise, notice that by linearity,
$$A(\mb p,\mb q) = \sum_{i=1}^{\infty}p_i\cdot A(\boldsymbol{\delta}^{(i)},\mb q).$$
We show that for any $\mb q \in \mc S$, the coefficients of $p_n$ and $p_{n'}$ are approximately the same if $n\approx n'$:
\begin{lemma}\label{lm:robust-local-error}
    For any $\mb q\in \mc S$ and any $n,n'\in \N$, $n\leq n'$, 
    $$\frac{n}{n'}A(\boldsymbol{\delta}^{(n)},\mb q)\leq A(\boldsymbol{\delta}^{(n')},\mb q) \leq \frac{n}{n'}A(\boldsymbol{\delta}^{(n)},\mb q) + (1-\frac{n}{n'}).$$
\end{lemma}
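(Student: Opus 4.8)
The plan is to reduce everything to the explicit formula for $A(\boldsymbol{\delta}^{(n)},\mb q)$ provided by Lemma \ref{lm:secretary_prob}. Since $\lambda_i(\boldsymbol{\delta}^{(n)})=\frac1n\mb 1[i\le n]$, that lemma gives
$$A(\boldsymbol{\delta}^{(n)},\mb q)=\frac1n\sum_{i=1}^{n}U_{i-1}(\mb q)q_i$$
(equivalently, only the $i=n$ term of the first sum in Lemma \ref{lm:secretary_prob} survives). Writing $S_m:=\sum_{i=1}^{m}U_{i-1}(\mb q)q_i$, so that $A(\boldsymbol{\delta}^{(m)},\mb q)=S_m/m$, the claimed double inequality becomes, after multiplying through by $n'$, the two statements $S_n\le S_{n'}$ and $S_{n'}\le S_n+(n'-n)$, and I would just verify these.

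For the first, $S_{n'}-S_n=\sum_{i=n+1}^{n'}U_{i-1}(\mb q)q_i\ge 0$, because each $q_i\ge 0$ and each $U_{i-1}(\mb q)=\prod_{j=1}^{i-1}(1-q_j/j)$ is a product of numbers in $[0,1]$ (indeed $q_j\in[0,1]$ and $j\ge1$ force $1-q_j/j\in[0,1]$). For the second, bound each summand by $U_{i-1}(\mb q)q_i\le 1\cdot 1=1$, so that $S_{n'}-S_n\le n'-n$. Together these give $S_n\le S_{n'}\le S_n+(n'-n)$, which is the statement.

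There is essentially no real obstacle here; the only things one must get right are (i) which of the two equivalent expressions for $A$ in Lemma \ref{lm:secretary_prob} to specialize — the $\frac1i\sum_{l\le i}$ form (or, equivalently, using $\lambda_i(\boldsymbol{\delta}^{(n)})=\frac1n\mb 1[i\le n]$), not a form that hides the truncation — and (ii) the observation that $U_{i-1}(\mb q)\in[0,1]$. If desired, the same statement admits a probabilistic reading: $A(\boldsymbol{\delta}^{(n)},\mb q)$ is the success probability of $\STRATEGY(\mb q)$ on the classical $n$-candidate secretary instance, and embedding the first $n$ arrivals of an $n'$-candidate instance inside it shows the $n'$-instance succeeds at least whenever the embedded $n$-instance does and at most that plus the residual probability $1-n/n'$; but the one-line algebraic argument above is cleaner and I would present that.
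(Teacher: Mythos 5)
Your proof is correct. The lower-bound step is identical to the paper's: both write $A(\boldsymbol{\delta}^{(m)},\mb q)=\frac{1}{m}\sum_{i=1}^m U_{i-1}(\mb q)q_i$ and note the partial sums are nondecreasing. For the upper bound, you and the paper both decompose $A(\boldsymbol{\delta}^{(n')},\mb q)=\frac{n}{n'}A(\boldsymbol{\delta}^{(n)},\mb q)+\frac{1}{n'}\sum_{l=n+1}^{n'}U_{l-1}(\mb q)q_l$, but you then bound the tail sum termwise by $U_{l-1}(\mb q)q_l\le 1$ to get at most $(n'-n)/n'$, which is more direct. The paper instead first enlarges $\frac{1}{n'}$ to $\frac{1}{l}$ in each summand, invokes the telescoping identity $\sum_{l=n+1}^{n'}U_{l-1}(\mb q)\frac{q_l}{l}=U_n(\mb q)\bigl(1-\prod_{l=n+1}^{n'}(1-q_l/l)\bigr)$, and then bounds $U_n(\mb q)\le 1$ and $\prod_{l=n+1}^{n'}(1-q_l/l)\ge\prod_{l=n+1}^{n'}(1-1/l)=n/n'$. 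Both derivations produce exactly the same constant $1-n/n'$; yours is more elementary, while the paper's has the structural benefit of exhibiting the tail sum as the probability of accepting somewhere in positions $n+1,\ldots,n'$, which matches the probabilistic reading you mention at the end.
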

Intuitively, for the classical secretary problem with $n'$ candidates, with probability $\frac{n}{n'}$ the best candidate is among the first $n$ arrivals, and the probability of success under this event is $A(\boldsymbol{\delta}^{(n)},\mb q)$. For the event that the best candidate arrives after $n$, the probability of success is upper bounded by $1$.

Due to this insensitivity, it suffices to consider the ``block'' version for $\mb p$, $\block(\mb p,\rho)$, where the probability mass in each block (interval) is assigned to the end of the block. The following guarantee is achieved:
\begin{corollary}\label{cor:blocking-error}
    For any $\mb p\in \Delta$, any $\mb q\in \mc S$ and $\rho>1$, denote $\tilde{\mb p}= \block(\mb p,\rho)$, then
    $$|A(\mb p,\mb q)- A(\tilde{\mb p},\mb q)|\leq \rho - 1.$$
\end{corollary}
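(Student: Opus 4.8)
The plan is to combine the linearity of $A(\cdot,\mb q)$ in its distribution argument with the pointwise comparison of Lemma~\ref{lm:robust-local-error}. First I would use the identity $A(\mb p,\mb q)=\sum_{i\ge 1}p_i\,A(\boldsymbol{\delta}^{(i)},\mb q)$ stated just before Lemma~\ref{lm:robust-local-error}, together with the observation that, by Definition~\ref{def:block}, $\block(\mb p,\rho)$ simply reassigns each mass $p_j$ to the right endpoint $I(\rho,l)$ of the unique block $(I(\rho,l-1),I(\rho,l)]$ that contains $j$. This yields
$$A(\tilde{\mb p},\mb q)-A(\mb p,\mb q)=\sum_{l\ge 1}\ \sum_{j=I(\rho,l-1)+1}^{I(\rho,l)} p_j\Bigl(A(\boldsymbol{\delta}^{(I(\rho,l))},\mb q)-A(\boldsymbol{\delta}^{(j)},\mb q)\Bigr).$$

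Next, for each term I would apply Lemma~\ref{lm:robust-local-error} with $n=j$ and $n'=I(\rho,l)\ge j$. Its lower bound gives $A(\boldsymbol{\delta}^{(I(\rho,l))},\mb q)-A(\boldsymbol{\delta}^{(j)},\mb q)\ge(\tfrac{j}{I(\rho,l)}-1)A(\boldsymbol{\delta}^{(j)},\mb q)\ge-(1-\tfrac{j}{I(\rho,l)})$, using $A(\boldsymbol{\delta}^{(j)},\mb q)\in[0,1]$; its upper bound gives $A(\boldsymbol{\delta}^{(I(\rho,l))},\mb q)-A(\boldsymbol{\delta}^{(j)},\mb q)\le(1-\tfrac{j}{I(\rho,l)})+(\tfrac{j}{I(\rho,l)}-1)A(\boldsymbol{\delta}^{(j)},\mb q)\le 1-\tfrac{j}{I(\rho,l)}$. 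Hence each summand is at most $p_j\bigl(1-\tfrac{j}{I(\rho,l)}\bigr)$ in absolute value, and the corollary reduces to showing $1-\tfrac{j}{I(\rho,l)}\le\rho-1$ for every $j$ in the $l$-th block.

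To get this bound I would unwind the definition of $I(\rho)$: write $I(\rho,l)=\lceil\rho^k\rceil$ for the smallest exponent $k$ achieving that value. By minimality of $k$, $\lceil\rho^{k-1}\rceil<\lceil\rho^k\rceil=I(\rho,l)$, so $\lceil\rho^{k-1}\rceil$ is a distinct element of $I(\rho)$ strictly below $I(\rho,l)$, whence $\lceil\rho^{k-1}\rceil\le I(\rho,l-1)<j$. Therefore $j\ge\rho^{k-1}+1$ and $\rho j\ge\rho^k+\rho>\rho^k+1>\lceil\rho^k\rceil=I(\rho,l)$, giving $\tfrac{j}{I(\rho,l)}>\tfrac1\rho$ and hence $1-\tfrac{j}{I(\rho,l)}<\tfrac{\rho-1}{\rho}\le\rho-1$. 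Summing over all blocks and all $j$ with $\sum_{l,j}p_j=1$ then gives $|A(\mb p,\mb q)-A(\tilde{\mb p},\mb q)|\le\rho-1$ (in fact $\le(\rho-1)/\rho$).

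The main obstacle is the last step: one must be careful that $I(\rho,l)$ denotes the $l$-th \emph{distinct} value among $\{\lceil\rho^l\rceil\}$ rather than $\lceil\rho^l\rceil$ itself — this matters precisely when $\rho$ is close to $1$ and consecutive ceilings coincide — so the argument that consecutive distinct terms have ratio at most $\rho$ has to route through the minimal-exponent representation as above, and one should also dispose of the degenerate first block $\{1\}$ (with the convention $I(\rho,0)=0$), which contributes only a zero term. Everything else is a direct application of linearity and Lemma~\ref{lm:robust-local-error}.
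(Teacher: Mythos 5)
Your proof is correct and follows essentially the same route as the paper's: decompose $A(\cdot,\mb q)$ over the blocks of $I(\rho)$, compare $A(\boldsymbol{\delta}^{(j)},\mb q)$ with $A(\boldsymbol{\delta}^{(I(\rho,l))},\mb q)$ via Lemma~\ref{lm:robust-local-error}, and control $j/I(\rho,l)$ by the ratio bound on consecutive distinct entries of $I(\rho)$. The only cosmetic differences are that you re-derive the consecutive-ratio bound (this is exactly Lemma~\ref{lm:ratio-index}) inline rather than citing it, and you package both inequalities of Lemma~\ref{lm:robust-local-error} into a symmetric per-term bound $|A(\boldsymbol{\delta}^{(I(\rho,l))},\mb q)-A(\boldsymbol{\delta}^{(j)},\mb q)|\le 1-j/I(\rho,l)$, which yields the marginally tighter constant $(\rho-1)/\rho$ where the paper settles for $\rho-1$.
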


The second property we take advantage of is that $A(\cdot,\cdot)$ depends on $\mb p$ only through the sequence $(i\lambda_i(\mb p))_{i\in \N}$, and is $1$-Lipschitz in $(i\lambda_i)_{i\in \N}$ with respect to the $l_{\infty}$ norm. Thus, a good estimation of all the $\lambda_i$'s suffices for a good function approximation: 
\begin{lemma}\label{lm:approx-error-prop}
Let $\mb p\in \Delta$ be a fixed distribution, $\epsilon>0$. Let $n\in \N$, be such that for $i\geq n+1$, $i\lambda_i(\mb p) \leq \epsilon$. Let $\mb G\in [0,1]^{n}$ be a sequence of length $n$, and for $i\in [n]$, $|G_i-i\lambda_i(\mb p)|\leq \epsilon$. Then for any $\mb q\in \mc S$, 
$$|A(\mb p,\mb q) -\sum_{i=1}^{n}U_{i-1}(\mb q)\frac{q_i}{i}\cdot G_i|\leq \epsilon. $$
\end{lemma}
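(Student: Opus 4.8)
The plan is to write both $A(\mb p,\mb q)$ and the proposed approximation as weighted sums against the \emph{same} weight vector $w_i := U_{i-1}(\mb q)\,\tfrac{q_i}{i}$, $i\in\N$, and then bound the discrepancy coordinate-by-coordinate. The only structural fact needed is that $(w_i)_{i\in\N}$ is a sub-probability vector.

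First I would record this fact. From $U_i(\mb q)=\prod_{l=1}^i(1-q_l/l)$ we get the telescoping identity
$$U_{i-1}(\mb q)-U_i(\mb q)=U_{i-1}(\mb q)\Bigl(1-\bigl(1-\tfrac{q_i}{i}\bigr)\Bigr)=U_{i-1}(\mb q)\tfrac{q_i}{i}=w_i\ge 0,$$
so for every $m\in\N$, $\sum_{i=1}^m w_i = U_0(\mb q)-U_m(\mb q)=1-U_m(\mb q)\le 1$, and hence $\sum_{i=1}^\infty w_i\le 1$. By Lemma \ref{lm:secretary_prob}, $A(\mb p,\mb q)=\sum_{i=1}^\infty U_{i-1}(\mb q)q_i\lambda_i(\mb p)=\sum_{i=1}^\infty w_i\,(i\lambda_i(\mb p))$, a series of nonnegative terms converging to a value at most $1$; likewise $\sum_{i=1}^n U_{i-1}(\mb q)\tfrac{q_i}{i}G_i=\sum_{i=1}^n w_i G_i$.

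Then I would split the difference as
$$A(\mb p,\mb q)-\sum_{i=1}^n w_i G_i \;=\; \sum_{i=1}^n w_i\,\bigl(i\lambda_i(\mb p)-G_i\bigr)\;+\;\sum_{i=n+1}^\infty w_i\,\bigl(i\lambda_i(\mb p)\bigr),$$
and bound the two pieces separately. For the first, the hypothesis $|G_i-i\lambda_i(\mb p)|\le\epsilon$ on $[n]$ gives $\bigl|\sum_{i=1}^n w_i(i\lambda_i(\mb p)-G_i)\bigr|\le \epsilon\sum_{i=1}^n w_i$. For the tail, note $\lambda_i(\mb p)=\sum_{l\ge i}p_l/l\ge 0$, so every term is nonnegative, and the hypothesis $i\lambda_i(\mb p)\le\epsilon$ for $i\ge n+1$ gives $0\le\sum_{i=n+1}^\infty w_i\,(i\lambda_i(\mb p))\le\epsilon\sum_{i=n+1}^\infty w_i$. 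Adding the two estimates and using $\sum_{i=1}^\infty w_i\le 1$ yields $\bigl|A(\mb p,\mb q)-\sum_{i=1}^n w_i G_i\bigr|\le\epsilon$, which is the claim.

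There is essentially no obstacle here: once the coefficients are identified as a nonnegative vector summing to at most $1$, the result is a one-line triangle inequality. The only point requiring a little care is the tail term — one must check that it is nonnegative and is controlled by the tail mass $\sum_{i>n}w_i$ (absorbed into $\sum_i w_i\le 1$) rather than entering with an uncontrolled multiplier — and, trivially, that the infinite series in question converges, which follows from nonnegativity of all summands.
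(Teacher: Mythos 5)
Your proof is correct and follows essentially the same route as the paper's: establish that the weights $U_{i-1}(\mb q)q_i/i$ form a nonnegative, sub-probability sequence via the telescoping identity, split the error into a term over $[n]$ and a tail over $i>n$, and bound each coordinate by $\epsilon$ using the two hypotheses. Introducing the shorthand $w_i$ is cosmetic; the decomposition and the estimates are the same as in the paper.
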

In fact, if the sequence $\mb G$ above is defined as $G_i = i\lambda_i(\hat{\mb p})$ for some distribution $\hat{\mb p}$ whose support is a subset of the support of $\mb p$ (for example, $\hat{\mb p}$ is the empirical distribution of samples from $\mb p$), then the conditions in Lemma \ref{lm:approx-error-prop} can be relaxed to $|G_i-i\lambda_i(\mb p)|\leq \epsilon$ for $i\in [n]$ and $p_i>0$ (Lemma \ref{lm:subseq-i-lambda-i}). In other words, the number of parameters that needs to be estimated depends only on the number of non-zero terms in the (truncated) distribution, which makes the block-based technique extremely efficient. 
\subsubsection{Proof sketch of Theorem \ref{thm:sample-ub}}
The proof of Theorem \ref{thm:sample-ub} consists of two parts. First, we use concentration bounds to show that with high probability, the tail of $\mb p$ starting from $N_{\max}:=I(\rho,l_{\max})$ is $O(\epsilon)$, and for each $i= I(\rho,l)$ for some $l\in [l_{\max}]$, $G_i$ is a good estimate of $i\lambda_i(\block(\mb p,\rho))$. Precisely, we lower bound the probability of the following good event \eqref{eq:good-event}: 
\begin{equation}\tag{$\mc E^*$}\label{eq:good-event}
    \sum_{i=N_{\max}+1}^{\infty} p_i \leq \epsilon/12,\text{ and } |G_i - i\lambda_i(\block(\mb p,\rho))|\leq \epsilon/4~i=I(\rho,l),~l\in [l_{\max}].
\end{equation}
There are two technicalities. First, the number of $G_i$'s to be bounded is $l_{\max}$, a random variable which depends on how ``long'' the tail of $\mb p$ is, and which can potentially be much larger than $\log(T)$\footnote{For example, if $\mb p$ is such that $p_{T'} = \epsilon/20$ for some $T'>>T$, then if $m=\Omega(\frac{1}{\epsilon})$, with constant probability, $N^{(j)} = T'$ for some $j\in [m]$, and $l_{\max} \gtrsim \log(T')>>\log(T)$.} We fix this by showing that $\frac{1}{m}\sum_{j=1}^{m} \mb 1[N^{(j)}\geq \rho T] = O(\epsilon)$ with high probability, which implies $|G_i|=O(\epsilon)$ for all $i\geq T$ (jointly). Another issue is that if all of $|G_i-i\lambda_i|$'s in \eqref{eq:good-event} are bounded using Hoeffding's inequality, the sample complexity will be $O(\frac{\log\log(T)}{\epsilon^2})$. To further improve it to $O(\frac{1}{\epsilon^2}\max(\epsilon \log\log(T),1))$, we upper bound the sum of variances of $G_i$'s, thereby showing that at most $O(\frac{1}{\epsilon^2})$ of them can have large variances, and the rest of the terms can be bounded using the tighter Bernstein's inequality.

For the second part of the proof, we first use Lemma \ref{lm:subseq-i-lambda-i} to show that under Event \eqref{eq:good-event}, all $|G_i-i\lambda_i|$'s are small for all $i\in [N_{\max}]$, which together with Event \eqref{eq:good-event}, implies that the conditions in Lemma \ref{lm:approx-error-prop} holds with $\mb p$ replaced by $\block(\mb p,\rho)$. With the approximation bound in Corollary \ref{cor:blocking-error}, we show that $A(\mb p,\cdot)$ is approximated up to accuracy $\epsilon/2$, and so output $\hat{\mb q}$ is at least $\epsilon$-suboptimal.

\section{Beyond secretary: matroid, prophet, online LP}\label{sec:beyond}
We apply the idea of Theorem \ref{thm:minmax} -- randomizing over algorithms (strategies) designed for problems with known time horizon -- to three other classical online problems: matroid secretary problem, iid prophet inequality, and online packing LP. In the classical setting, the performance of algorithms for these problems are often evaluated using the competitive ratio, defined as the expectation of the ratio between the output of the algorithm and the optimal. When the time horizon $N$ for these problems is random, we can look at a modified performance metric: the expectation (over the randomness in $N$) of the classical metric restricted to the first $N$ arrivals. Detailed setup and comparison with previous results are deferred to Appendix \ref{sec:beyond-details}. 

We assume here that $N$ is a random variable with known upper and lower bound. Interestingly, for these three problems, all algorithms that achieve the optimal or best known performance require the time horizon as an input. Let $\{\mc A^{(s)}\}_{s\in \N}$ be a family of algorithms designed for problem instances when the time horizon is $s$. Just like in the random horizon secretary problem, the key quantity here is how sensitive the performance of $\{\mc A^{(s)}\}_{s\in \N}$ is to ``misspeficiation'' of the time horizon. More concretely, let $M:\N\times \N\to \R$ be the performance of $\mc A^{(s)}$ applied to a problem whose time horizon is $n$ in the metrics of interest, and assume that $\underline{n}\leq N\leq \overline{n}$ almost surely, where $\underline{n}, \overline{n}\in\N$ are known. We have the following result:
\begin{theorem}\label{thm:sample_algo_performance}
    Assume that there exists a constant $c_0>0$ and a nondecreasing function $f:\N\to \R_{>0}$, such that for each $s,n\in \N$, $M(s,n) \geq c_0\frac{f(s)}{f(n)}\mb 1[s\leq n]$. Then for any $\underline{n}\leq \overline{n}\in\N$, there exists a distribution $\mb x^{(\underline{n}, \overline{n})}\in \Delta$ (depending on $f,\underline{n}, \overline{n}$) supported on $\{\underline{n},\ldots, \overline{n}\}$, such that if $S\sim \mb x^{(\underline{n}, \overline{n})}$, and $\mc A^{(S)}$ is used, then the expected performance of this randomized algorithm is at least $ \frac{c_0}{1+\log(f(\overline{n})/f(\underline{n}))}$ for any distribution of $N$ that is supported on $\{\underline{n},\ldots, \overline{n}\}$.
\end{theorem}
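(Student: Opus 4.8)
The plan is to follow the idea behind Theorem~\ref{thm:minmax}: construct a single ``harmonic-type'' distribution $\mb x^{(\underline n,\overline n)}$ over $\{\underline n,\dots,\overline n\}$ whose $f$-weighted partial sums track $f(n)/C$, where $C:=1+\log(f(\overline n)/f(\underline n))$, so that one and the same randomized choice of black-box algorithm works simultaneously for every possible realized value of $N$.

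First I would reduce the statement to a deterministic inequality about fixed horizons. If $S\sim\mb x^{(\underline n,\overline n)}$ and the realized horizon is $N=n\in\{\underline n,\dots,\overline n\}$, then the hypothesis $M(s,n)\ge c_0\frac{f(s)}{f(n)}\mathbf 1[s\le n]$ (which in particular forces $M(s,n)\ge 0$ for $s>n$) gives
$$\E_S[M(S,n)]\;=\;\sum_{s=\underline n}^{\overline n} x_s\,M(s,n)\;\ge\;\frac{c_0}{f(n)}\sum_{s=\underline n}^{n} x_s\,f(s),$$
where I simply drop the nonnegative terms with $s>n$. Taking expectation over $N$ and using linearity, it then suffices to pick $\mb x^{(\underline n,\overline n)}$ so that $\sum_{s=\underline n}^{n} x_s f(s)\ge f(n)/C$ for every $n\in\{\underline n,\dots,\overline n\}$; this immediately yields expected performance at least $c_0/C$.

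The key step is the explicit choice of weights. Set $x_{\underline n}:=1/C$ and $x_s:=\frac1C\bigl(1-\frac{f(s-1)}{f(s)}\bigr)$ for $\underline n<s\le\overline n$; these are nonnegative since $f$ is nondecreasing, and the $f$-weighted partial sums telescope exactly:
$$\sum_{s=\underline n}^{n} x_s f(s)\;=\;\frac{f(\underline n)}{C}+\frac1C\sum_{s=\underline n+1}^{n}\bigl(f(s)-f(s-1)\bigr)\;=\;\frac{f(n)}{C},$$
so the required inequality holds with equality. To turn $\mb x^{(\underline n,\overline n)}$ into a bona fide distribution I would verify $\sum_s x_s\le 1$ using $1-1/t\le\log t$ for $t\ge 1$ (with $t=f(s)/f(s-1)$):
$$\sum_{s=\underline n}^{\overline n} x_s\;\le\;\frac1C\Bigl(1+\sum_{s=\underline n+1}^{\overline n}\log\tfrac{f(s)}{f(s-1)}\Bigr)\;=\;\frac1C\bigl(1+\log\tfrac{f(\overline n)}{f(\underline n)}\bigr)\;=\;1,$$
and then absorb the leftover mass $1-\sum_s x_s$ into the weight at $\overline n$; doing so increases $\sum_{s=\underline n}^{n}x_s f(s)$ for $n=\overline n$ and leaves it unchanged for $n<\overline n$, so the per-horizon guarantee is preserved. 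Combined with the reduction above, this gives the claimed bound $c_0/\bigl(1+\log(f(\overline n)/f(\underline n))\bigr)$, and the boundary case $\underline n=\overline n$ (where $C=1$ and $\mb x^{(\underline n,\overline n)}$ is the point mass at $\underline n$) is immediate.

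I do not expect a serious obstacle: the only genuinely creative ingredient is guessing the weights $x_s\propto 1-f(s-1)/f(s)$ — a discretization of the measure $d\log f$ — and the one place to be careful is the normalization, namely confirming $\sum_s x_s\le 1$ via $1-1/t\le\log t$ and checking that redistributing the residual mass at $\overline n$ harms none of the horizon-$n$ bounds. Everything else is linearity of expectation (which also handles internal randomness of the $\mc A^{(s)}$, since $M$ is then itself an expectation) together with the telescoping identity.
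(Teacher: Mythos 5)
Your proof is correct and follows essentially the same route as the paper: the same weights $x_s\propto 1-f(s-1)/f(s)$, the same telescoping of the $f$-weighted partial sums, and the same use of $1-1/t\le\log t$ to relate the mass to $1+\log(f(\overline n)/f(\underline n))$. The only cosmetic difference is that the paper normalizes by the exact total mass $D=1+\sum_{i=\underline n}^{\overline n-1}(1-f(i)/f(i+1))$ and then shows $D\le C$, whereas you divide by $C$ directly and dump the residual mass onto $x_{\overline n}$; both yield the same bound.
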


Surprisingly, independent of the actual algorithms $\{\mc A^{(s)}\}_{s\in\N}$, just due to the random permutation or iid assumption, the specific form $M(s,n) \geq c_0\frac{f(s)}{f(n)}\mb 1[s\leq n]$ holds for all three problems we consider:
\begin{itemize}
    \item matroid secretary problem: $c_0 = \Omega(\frac{1}{\log\log(k)})$ for rank-$k$ matroid, and $f(i) = i$;
    \item iid prophet inequality: $c_0 = 0.745$, and $f(i) = \E[\max_{j\in [i]}X_j]$ where $X_j\sim_{iid}F$ is the observed sequence, for instance
    \begin{itemize}
    \item if $X_i\sim Uniform([0,1])$, $\E[\max_{i\in [l]} X_i] = \frac{l}{l+1}$,
    \item if $X_i\sim Exp(1)$, $\E[\max_{i\in [l]} X_i] = \sum_{i=1}^{l} 1/i\approx \log(l)$;
\end{itemize}
    \item online packing LP: $c_0 = 1-O(\epsilon)$ where $\epsilon = \Omega(\frac{m}{\sqrt{B}})$ with $B$ the budget-to-bid ratio and $m$ the number of constraints in the LP, and $f(i) = i$.
\end{itemize}

The constant $c_0$ represents the algorithm's performance when the time horizon is fixed and known, and the additional factor of $\frac{1}{1+\log(f(\overline{n})/f(\underline{n}))}$ represents the extra difficulty due to the uncertainty in $N$.

In fact, for the iid prophet inequality, when there exists $x\in (0,1]$ such that all $X_i\in [x,1]$ almost surely, our randomization algorithm is at least $\frac{c_0}{1+\log(1/x)}$-competitive, and we show an upper bound of $\frac{1}{1+\log(1/x)}$ (Theorem \ref{thm:prophet_upper_bound}), meaning that our algorithm is optimal up to a constant factor.

\newpage

\newpage

\printbibliography

@article{PresmanSonin1973,
author = {Presman, E. L. and Sonin, I. M.},
title = {The Best Choice Problem for a Random Number of Objects},
journal = {Theory of Probability \& Its Applications},
volume = {17},
number = {4},
pages = {657-668},
year = {1973},
doi = {10.1137/1117078},
URL = {https://doi.org/10.1137/1117078},
eprint = {https://doi.org/10.1137/1117078}
}

@article{HamidBatherTrustrum1982,
 ISSN = {00219002},
 URL = {http://www.jstor.org/stable/3213519},
 author = {A. R. Abdel-Hamid and J. A. Bather and G. B. Trustrum},
 journal = {Journal of Applied Probability},
 number = {3},
 pages = {619--630},
 publisher = {Applied Probability Trust},
 title = {The Secretary Problem with an Unknown Number of Candidates},
 urldate = {2022-07-06},
 volume = {19},
 year = {1982}
}

@article{BuchbinderJainSingh2014,
author = {Buchbinder, Niv and Jain, Kamal and Singh, Mohit},
title = {Secretary Problems via Linear Programming},
journal = {Mathematics of Operations Research},
volume = {39},
number = {1},
pages = {190-206},
year = {2014},
doi = {10.1287/moor.2013.0604},
URL = { https://doi.org/10.1287/moor.2013.0604},
eprint = { https://doi.org/10.1287/moor.2013.0604 }
}

@InProceedings{GharanVondrak2011,
author="Oveis Gharan, Shayan
and Vondr{\'a}k, Jan",
editor="Demetrescu, Camil
and Halld{\'o}rsson, Magn{\'u}s M.",
title="On Variants of the Matroid Secretary Problem",
booktitle="Algorithms -- ESA 2011",
year="2011",
publisher="Springer Berlin Heidelberg",
address="Berlin, Heidelberg",
pages="335--346",
isbn="978-3-642-23719-5"
}

@inproceedings{BabaioffImmorlicaKleinberg2007,
author = {Babaioff, Moshe and Immorlica, Nicole and Kleinberg, Robert},
title = {Matroids, Secretary Problems, and Online Mechanisms},
year = {2007},
isbn = {9780898716245},
publisher = {Society for Industrial and Applied Mathematics},
address = {USA},
booktitle = {Proceedings of the Eighteenth Annual ACM-SIAM Symposium on Discrete Algorithms},
pages = {434–443},
numpages = {10},
location = {New Orleans, Louisiana},
series = {SODA '07}
}

@inproceedings{ChakrabortyLachish2012,
author = {Chakraborty, Sourav and Lachish, Oded},
title = {Improved Competitive Ratio for the Matroid Secretary Problem},
year = {2012},
publisher = {Society for Industrial and Applied Mathematics},
address = {USA},
booktitle = {Proceedings of the Twenty-Third Annual ACM-SIAM Symposium on Discrete Algorithms},
pages = {1702–1712},
numpages = {11},
location = {Kyoto, Japan},
series = {SODA '12}
}

@INPROCEEDINGS{Lachish2014,  author={Lachish, Oded},  booktitle={2014 IEEE 55th Annual Symposium on Foundations of Computer Science},   title={O(log log Rank) Competitive Ratio for the Matroid Secretary Problem},   year={2014},  volume={},  number={},  pages={326-335},  doi={10.1109/FOCS.2014.42}}

@article{FeldmanSvenssonZenklusen2018,
author = {Feldman, Moran and Svensson, Ola and Zenklusen, Rico},
title = {A Simple O(log log(rank))-Competitive Algorithm for the Matroid Secretary Problem},
journal = {Mathematics of Operations Research},
volume = {43},
number = {2},
pages = {638-650},
year = {2018},
doi = {10.1287/moor.2017.0876},

URL = { 
        https://doi.org/10.1287/moor.2017.0876    
},
eprint = { 
        https://doi.org/10.1287/moor.2017.0876  
}
}

@article{Ferguson1989,
author = {Thomas S. Ferguson},
title = {{Who Solved the Secretary Problem?}},
volume = {4},
journal = {Statistical Science},
number = {3},
publisher = {Institute of Mathematical Statistics},
pages = {282 -- 289},
keywords = {marriage problem, minimax rules, relative ranks, search problem, secretary problem, stopping times},
year = {1989},
doi = {10.1214/ss/1177012493},
URL = {https://doi.org/10.1214/ss/1177012493}
}

@article{GilbertMosteller1966,
 ISSN = {01621459},
 URL = {http://www.jstor.org/stable/2283044},
 author = {John P. Gilbert and Frederick Mosteller},
 journal = {Journal of the American Statistical Association},
 number = {313},
 pages = {35--73},
 publisher = {[American Statistical Association, Taylor & Francis, Ltd.]},
 title = {Recognizing the Maximum of a Sequence},
 urldate = {2023-01-27},
 volume = {61},
 year = {1966}
}

@inproceedings{Papadimitriou2021,
author = {Papadimitriou, Christos and Pollner, Tristan and Saberi, Amin and Wajc, David},
title = {Online Stochastic Max-Weight Bipartite Matching: Beyond Prophet Inequalities},
year = {2021},
isbn = {9781450385541},
publisher = {Association for Computing Machinery},
address = {New York, NY, USA},
url = {https://doi.org/10.1145/3465456.3467613},
doi = {10.1145/3465456.3467613},
booktitle = {Proceedings of the 22nd ACM Conference on Economics and Computation},
pages = {763–764},
numpages = {2},
keywords = {online matching, prophet inequalities, online optimum},
location = {Budapest, Hungary},
series = {EC '21}
}

@inproceedings{Braverman2022,
author = {Braverman, Mark and Derakhshan, Mahsa and Molina Lovett, Antonio},
title = {Max-Weight Online Stochastic Matching: Improved Approximations Against the Online Benchmark},
year = {2022},
isbn = {9781450391504},
publisher = {Association for Computing Machinery},
address = {New York, NY, USA},
url = {https://doi.org/10.1145/3490486.3538315},
doi = {10.1145/3490486.3538315},
booktitle = {Proceedings of the 23rd ACM Conference on Economics and Computation},
pages = {967–985},
numpages = {19},
keywords = {online matching, matching markets, stochastic matching},
location = {Boulder, CO, USA},
series = {EC '22}
}

@inproceedings{Anari2019,
author = {Anari, Nima and Niazadeh, Rad and Saberi, Amin and Shameli, Ali},
title = {Nearly Optimal Pricing Algorithms for Production Constrained and Laminar Bayesian Selection},
year = {2019},
isbn = {9781450367929},
publisher = {Association for Computing Machinery},
address = {New York, NY, USA},
url = {https://doi.org/10.1145/3328526.3329652},
doi = {10.1145/3328526.3329652},
booktitle = {Proceedings of the 2019 ACM Conference on Economics and Computation},
pages = {91–92},
numpages = {2},
keywords = {pricing, prophet inequality, bayesian selection, dynamic programming},
location = {Phoenix, AZ, USA},
series = {EC '19}
}

@inproceedings{Correa2017_ppm,
author = {Correa, Jos\'{e} and Foncea, Patricio and Hoeksma, Ruben and Oosterwijk, Tim and Vredeveld, Tjark},
title = {Posted Price Mechanisms for a Random Stream of Customers},
year = {2017},
isbn = {9781450345279},
publisher = {Association for Computing Machinery},
address = {New York, NY, USA},
url = {https://doi.org/10.1145/3033274.3085137},
doi = {10.1145/3033274.3085137},
booktitle = {Proceedings of the 2017 ACM Conference on Economics and Computation},
pages = {169–186},
numpages = {18},
keywords = {optimization, pricing mechanisms, approximation algorithms, auctions, mechanism design},
location = {Cambridge, Massachusetts, USA},
series = {EC '17}
}

@article{KERTZ198688,
title = {Stop rule and supremum expectations of i.i.d. random variables: A complete comparison by conjugate duality},
journal = {Journal of Multivariate Analysis},
volume = {19},
number = {1},
pages = {88-112},
year = {1986},
issn = {0047-259X},
doi = {https://doi.org/10.1016/0047-259X(86)90095-3},
url = {https://www.sciencedirect.com/science/article/pii/0047259X86900953},
author = {Robert P Kertz},
keywords = {Optimal stopping, extremal distributions, inequalities for stochastic processes, conjugate duality, Young's inequality},
}

@article{Correa2018_survey,
author = {Correa, Jose and Foncea, Patricio and Hoeksma, Ruben and Oosterwijk, Tim and Vredeveld, Tjark},
title = {Recent Developments in Prophet Inequalities},
year = {2019},
issue_date = {November 2018},
publisher = {Association for Computing Machinery},
address = {New York, NY, USA},
volume = {17},
number = {1},
url = {https://doi.org/10.1145/3331033.3331039},
doi = {10.1145/3331033.3331039},
journal = {SIGecom Exch.},
month = {may},
pages = {61–70},
numpages = {10}
}

@article{Buchbinder2009,
url = {http://dx.doi.org/10.1561/0400000024},
year = {2009},
volume = {3},
journal = {Foundations and Trends® in Theoretical Computer Science},
title = {The Design of Competitive Online Algorithms via a Primal–Dual Approach},
doi = {10.1561/0400000024},
issn = {1551-305X},
number = {2–3},
pages = {93-263},
author = {Niv Buchbinder and Joseph (Seffi) Naor}
}

@article{Agrawal2014,
author = {Agrawal, Shipra and Wang, Zizhuo and Ye, Yinyu},
title = {A Dynamic Near-Optimal Algorithm for Online Linear Programming},
year = {2014},
issue_date = {August 2014},
publisher = {INFORMS},
address = {Linthicum, MD, USA},
volume = {62},
number = {4},
issn = {0030-364X},
journal = {Oper. Res.},
month = {aug},
pages = {876–890},
numpages = {15},
keywords = {linear programming, primal-dual, dynamic price update, online algorithms}
}

@inproceedings{Kesselheim2014,
author = {Kesselheim, Thomas and T\"{o}nnis, Andreas and Radke, Klaus and V\"{o}cking, Berthold},
title = {Primal Beats Dual on Online Packing LPs in the Random-Order Model},
year = {2014},
isbn = {9781450327107},
publisher = {Association for Computing Machinery},
address = {New York, NY, USA},
url = {https://doi.org/10.1145/2591796.2591810},
doi = {10.1145/2591796.2591810},
booktitle = {Proceedings of the Forty-Sixth Annual ACM Symposium on Theory of Computing},
pages = {303–312},
numpages = {10},
keywords = {generalized assignment problem, secretary problem, random order, online packing LP},
location = {New York, New York},
series = {STOC '14}
}

@InProceedings{Molinaro2012,
author="Molinaro, Marco
and Ravi, R.",
editor="Czumaj, Artur
and Mehlhorn, Kurt
and Pitts, Andrew
and Wattenhofer, Roger",
title="Geometry of Online Packing Linear Programs",
booktitle="Automata, Languages, and Programming",
year="2012",
publisher="Springer Berlin Heidelberg",
address="Berlin, Heidelberg",
pages="701--713",
isbn="978-3-642-31594-7"
}

@inproceedings{Hubert2015,
author = {Chan, T-H. Hubert and Chen, Fei and Jiang, Shaofeng H.-C.},
title = {Revealing Optimal Thresholds for Generalized Secretary Problem via Continuous LP: Impacts on Online K-Item Auction and Bipartite K-Matching with Random Arrival Order},
year = {2015},
publisher = {Society for Industrial and Applied Mathematics},
address = {USA},
booktitle = {Proceedings of the Twenty-Sixth Annual ACM-SIAM Symposium on Discrete Algorithms},
pages = {1169–1188},
numpages = {20},
location = {San Diego, California},
series = {SODA '15}
}

@inproceedings{correa2021_sampling,
  TITLE = {{The Secretary Problem with Independent Sampling}},
  AUTHOR = {Correa, Jos{\'e} and Cristi, Andr{\'e}s and Feuilloley, Laurent and Oosterwijk, Tim and Tsigonias-Dimitriadis, Alexandros},
  URL = {https://hal.science/hal-03621793},
  BOOKTITLE = {{Proceedings of the 2021 ACM-SIAM Symposium on Discrete Algorithms, SODA 2021, Virtual Conference, January 10 - 13, 2021}},
  ADDRESS = {Virtual, United States},
  PUBLISHER = {{Society for Industrial and Applied Mathematics}},
  PAGES = {2047-2058},
  YEAR = {2021},
  MONTH = Jan,
  DOI = {10.1137/1.9781611976465.122},
  PDF = {https://hal.science/hal-03621793/file/CCFOT2020.pdf},
  HAL_ID = {hal-03621793},
  HAL_VERSION = {v1},
}

@inproceedings{Kleinberg2005,
author = {Kleinberg, Robert},
title = {A Multiple-Choice Secretary Algorithm with Applications to Online Auctions},
year = {2005},
isbn = {0898715857},
publisher = {Society for Industrial and Applied Mathematics},
address = {USA},
booktitle = {Proceedings of the Sixteenth Annual ACM-SIAM Symposium on Discrete Algorithms},
pages = {630–631},
numpages = {2},
location = {Vancouver, British Columbia},
series = {SODA '05}
}

@article{Mehta2007_adword,
author = {Mehta, Aranyak and Saberi, Amin and Vazirani, Umesh and Vazirani, Vijay},
title = {AdWords and Generalized Online Matching},
year = {2007},
issue_date = {October 2007},
publisher = {Association for Computing Machinery},
address = {New York, NY, USA},
volume = {54},
number = {5},
issn = {0004-5411},
url = {https://doi.org/10.1145/1284320.1284321},
doi = {10.1145/1284320.1284321},
journal = {J. ACM},
month = {oct},
pages = {22–es},
numpages = {19},
keywords = {online algorithms, search engines, Keyword auctions}
}

@article{McGill1999,
author = {McGill, Jeffrey I. and van Ryzin, Garrett J.},
title = {Revenue Management: Research Overview and Prospects},
journal = {Transportation Science},
volume = {33},
number = {2},
pages = {233-256},
year = {1999},
doi = {10.1287/trsc.33.2.233},

URL = {         https://doi.org/10.1287/trsc.33.2.233
},
eprint = { 
        https://doi.org/10.1287/trsc.33.2.233
}
}

@article{Babaioff2008,
author = {Babaioff, Moshe and Immorlica, Nicole and Kempe, David and Kleinberg, Robert},
title = {Online Auctions and Generalized Secretary Problems},
year = {2008},
issue_date = {June 2008},
publisher = {Association for Computing Machinery},
address = {New York, NY, USA},
volume = {7},
number = {2},
url = {https://doi.org/10.1145/1399589.1399596},
doi = {10.1145/1399589.1399596},
journal = {SIGecom Exch.},
month = {jun},
articleno = {7},
numpages = {11},
keywords = {online algorithms, secretary problems, auctions, matroids, knapsack}
}

@article{Devanur2019,
author = {Devanur, Nikhil R. and Jain, Kamal and Sivan, Balasubramanian and Wilkens, Christopher A.},
title = {Near Optimal Online Algorithms and Fast Approximation Algorithms for Resource Allocation Problems},
year = {2019},
issue_date = {February 2019},
publisher = {Association for Computing Machinery},
address = {New York, NY, USA},
volume = {66},
number = {1},
issn = {0004-5411},
url = {https://doi.org/10.1145/3284177},
doi = {10.1145/3284177},
journal = {J. ACM},
month = {jan},
articleno = {7},
numpages = {41},
keywords = {approximation algorithms, unknown distribution, greedy algorithm, Online algorithms}
}

@article{Elmaghraby2003,
author = {Elmaghraby, Wedad and Keskinocak, P\i{}nar},
title = {Dynamic Pricing in the Presence of Inventory Considerations: Research Overview, Current Practices, and Future Directions},
journal = {Management Science},
volume = {49},
number = {10},
pages = {1287-1309},
year = {2003},
doi = {10.1287/mnsc.49.10.1287.17315},

URL = { 
    
        https://doi.org/10.1287/mnsc.49.10.1287.17315
},
eprint = { 
        https://doi.org/10.1287/mnsc.49.10.1287.17315
}}

@article{Theodore1983,
 ISSN = {00029947},
 URL = {http://www.jstor.org/stable/1999311},
 author = {Theodore P. Hill and Robert P. Kertz},
 journal = {Transactions of the American Mathematical Society},
 number = {1},
 pages = {197--207},
 publisher = {American Mathematical Society},
 title = {Stop Rule Inequalities for Uniformly Bounded Sequences of Random Variables},
 urldate = {2023-04-07},
 volume = {278},
 year = {1983}
}

@book{chow_opt_stop,
author = {Chow, Yuan Shih and Robbins, Herbert and Siegmund, David
},
address = {New York},
booktitle = {The theory of optimal stopping},
isbn = {0486666506},
keywords = {Optimal stopping (Mathematical statistics)},
language = {eng},
lccn = {90019109},
publisher = {Dover},
title = {The theory of optimal stopping },
year = {1971 - 1991 },
}

@article{Correa2023,
author = {Correa, Jos\'{e} and Cristi, Andr\'{e}s and Epstein, Boris and Soto, Jos\'{e} A.},
title = {Sample-Driven Optimal Stopping: From the Secretary Problem to the i.i.d. Prophet Inequality},
journal = {Mathematics of Operations Research},
volume = {0},
number = {0},
pages = {null},
year = {0},
doi = {10.1287/moor.2023.1363},
URL = { https://doi.org/10.1287/moor.2023.1363},
eprint = { https://doi.org/10.1287/moor.2023.1363}
}

@inbook{Azar2014,
author = {Pablo D. Azar and Robert Kleinberg and S. Matthew Weinberg},
title = {Prophet Inequalities with Limited Information},
booktitle = {Proceedings of the 2014  Annual ACM-SIAM Symposium on Discrete Algorithms (SODA)},
chapter = {},
pages = {1358-1377},
doi = {10.1137/1.9781611973402.100},
URL = {https://epubs.siam.org/doi/abs/10.1137/1.9781611973402.100},
eprint = {https://epubs.siam.org/doi/pdf/10.1137/1.9781611973402.100}}

@InProceedings{rubinstein2020,
  author =	{Aviad Rubinstein and Jack Z. Wang and S. Matthew Weinberg},
  title =	{{Optimal Single-Choice Prophet Inequalities from Samples}},
  booktitle =	{11th Innovations in Theoretical Computer Science Conference (ITCS 2020)},
  pages =	{60:1--60:10},
  series =	{Leibniz International Proceedings in Informatics (LIPIcs)},
  ISBN =	{978-3-95977-134-4},
  ISSN =	{1868-8969},
  year =	{2020},
  volume =	{151},
  editor =	{Thomas Vidick},
  publisher =	{Schloss Dagstuhl--Leibniz-Zentrum fuer Informatik},
  address =	{Dagstuhl, Germany},
  URL =		{https://drops.dagstuhl.de/opus/volltexte/2020/11745},
  URN =		{urn:nbn:de:0030-drops-117452},
  doi =		{10.4230/LIPIcs.ITCS.2020.60},
  annote =	{Keywords: Online algorithms, Probability, Optimization, Prophet inequalities, Samples, Auctions}
}

@inproceedings{Correa2019,
author = {Correa, Jos\'{e} and D\"{u}tting, Paul and Fischer, Felix and Schewior, Kevin},
title = {Prophet Inequalities for I.I.D. Random Variables from an Unknown Distribution},
year = {2019},
isbn = {9781450367929},
publisher = {Association for Computing Machinery},
address = {New York, NY, USA},
url = {https://doi.org/10.1145/3328526.3329627},
doi = {10.1145/3328526.3329627},
booktitle = {Proceedings of the 2019 ACM Conference on Economics and Computation},
pages = {3–17},
numpages = {15},
keywords = {prophet inequalities, posted pricing},
location = {Phoenix, AZ, USA},
series = {EC '19}
}

@InProceedings{Correa2021b,
  author =	{Jos\'{e} Correa and Paul D\"{u}tting and Felix Fischer and Kevin Schewior and Bruno Ziliotto},
  title =	{{Unknown I.I.D. Prophets: Better Bounds, Streaming Algorithms, and a New Impossibility}},
  booktitle =	{12th Innovations in Theoretical Computer Science Conference (ITCS 2021)},
  pages =	{86:1--86:1},
  series =	{Leibniz International Proceedings in Informatics (LIPIcs)},
  ISBN =	{978-3-95977-177-1},
  ISSN =	{1868-8969},
  year =	{2021},
  volume =	{185},
  editor =	{James R. Lee},
  publisher =	{Schloss Dagstuhl--Leibniz-Zentrum f{\"u}r Informatik},
  address =	{Dagstuhl, Germany},
  URL =		{https://drops.dagstuhl.de/opus/volltexte/2021/13625},
  URN =		{urn:nbn:de:0030-drops-136255},
  doi =		{10.4230/LIPIcs.ITCS.2021.86},
  annote =	{Keywords: Prophet Inequalities, Stopping Theory, Unknown Distributions}
}

@INPROCEEDINGS{Han2015,
  author={Han, Yanjun and Jiao, Jiantao and Weissman, Tsachy},
  booktitle={2015 IEEE International Symposium on Information Theory (ISIT)}, 
  title={Minimax estimation of discrete distributions}, 
  year={2015},
  volume={},
  number={},
  pages={2291-2295},
  doi={10.1109/ISIT.2015.7282864}}

@misc{sason2015reverse,
      title={On Reverse Pinsker Inequalities}, 
      author={Igal Sason},
      year={2015},
      eprint={1503.07118},
      archivePrefix={arXiv},
      primaryClass={cs.IT}
}

@book{wainwright_2019, place={Cambridge}, series={Cambridge Series in Statistical and Probabilistic Mathematics}, title={High-Dimensional Statistics: A Non-Asymptotic Viewpoint}, DOI={10.1017/9781108627771}, publisher={Cambridge University Press}, author={Wainwright, Martin J.}, year={2019}, collection={Cambridge Series in Statistical and Probabilistic Mathematics}}

@article{Hill1991_minimax,
author = {Theodore P. Hill and Ulrich Krengel},
title = {{Minimax-Optimal Stop Rules and Distributions in Secretary Problems}},
volume = {19},
journal = {The Annals of Probability},
number = {1},
publisher = {Institute of Mathematical Statistics},
pages = {342 -- 353},
keywords = {best-choice problem, marriage-problem, minimax-optimal distribution, minimax-optimal stop rule, randomized stop rule, secretary problem},
year = {1991},
doi = {10.1214/aop/1176990548},
URL = {https://doi.org/10.1214/aop/1176990548}
}

@article{Hill1994expected,
author = {Hill, T. P. and Kennedy, D. P.},
title = {Minimax-Optimal Strategies for the Best-Choice Problem When a Bound is Known for the Expected Number of Objects},
journal = {SIAM Journal on Control and Optimization},
volume = {32},
number = {4},
pages = {937-951},
year = {1994},
doi = {10.1137/S0363012992234724},
URL = { 
        https://doi.org/10.1137/S0363012992234724
},
eprint = {  
        https://doi.org/10.1137/S0363012992234724
}
}

@article{Balseiro2023Horizon,
author = {Balseiro, Santiago and Kroer, Christian and Kumar, Rachitesh},
title = {Online Resource Allocation under Horizon Uncertainty},
year = {2023},
issue_date = {June 2023},
publisher = {Association for Computing Machinery},
address = {New York, NY, USA},
volume = {51},
number = {1},
issn = {0163-5999},
url = {https://doi.org/10.1145/3606376.3593559},
doi = {10.1145/3606376.3593559},
journal = {SIGMETRICS Perform. Eval. Rev.},
month = {jun},
pages = {63–64},
numpages = {2},
keywords = {traffic spikes, dual mirror descent, online resource allocation}
}

\newpage

\appendix

\section{Additional comments}

\subsection{Notations}
We use the following common mathematical notations:
\begin{itemize}
    \item $\lfloor\cdot\rfloor$ and $\lceil\cdot\rceil$ denote the floor function and the ceiling function, i.e. the largest integer no greater than and the smallest integer no less than the variable;
    \item $\mb 1[\mc E]$ denotes the indicator function of the event $\mc E$ which is $1$ when $\mc E$ happens and $0$ otherwise;
    \item $[n] = \{1,2,\ldots,n\}$ for each $n\in \N$; 
    \item $H_n:=\sum_{i=1}^n1/i$ is the $n$-th Harmonic number;
    \item $\Delta:= \{\mb x = (x_1,x_2,\ldots)|\sum_{i=1}^{\infty} x_i = 1,~x_i\geq  0~\forall i\in \N\}$ and $\Delta^{(n)} = \{\mb x\in \Delta|x_i = 0,~\forall i\geq n+1\}$ represents distributions supported on $\N$ and $[n]$ respectively; 
    \item for $n\in \N$, $\boldsymbol{\delta}^{(n)}\in \Delta^{(n)}$ is the delta distribution supported at $n$ , i.e. $\delta_i^{(n)} = \mb 1[i=n]$;
    \item $X \sim \mb p$ means $X$ is a random variable sampled from distribution $\mb p$; 
    \item all the $\log(\cdot)$ have base $e$, i.e. $\log(e)=1$.  
\end{itemize}
For quick reference, for $n\in \N$, the distribution $\mb p^{(n)}_*\in \Delta^{(n)}$ is defined in Equation \eqref{eq:p-star}, and the functions $A:\Delta\times \mc S\to [0,1]$, $U_{n-1}:\mc S\to [0,1]$, and $\lambda_n:\Delta\to [0,1]$ are defined in Lemma \ref{lm:secretary_prob}.

\subsection{Why the problem becomes harder?}\label{sec:intuition}
In this section, we provide some intuition on how the randomness in the time horizon increases the uncertainty in the reward of the decisions, thereby making the problem harder.

We start with the following optimal stopping problem: there is a sequence of random variables $\{X_i\}_{i\in [n]}$ defined on the same probability space. $\{\mc F_i\}_{i\in [n]}$ is a filtration on this probability space. (Different from the classical optimal stopping problem, we don't assume that $X_j$ is measurable with respect to $\mc F_i$ for $j\leq i\leq n$.)

For convenience, let ``$\tau \in \{\mc F_i\}_{i\in [n]}$'' denote a stopping time $\tau$ which is adapted to $\{\mc F_i\}_{i\in [n]}$ and satisfies $ \tau \leq n$ almost surely. In addition, we define 
$$V^*_{\mc F} := \sup_{\tau \in \{\mc F_i\}_{i\in [n]} } \E[X_{\tau}].$$

By a simple conditioning argument, we can reduce the problem to the classical setting:
\begin{align*}
    \E[X_{\tau}] &= \E[\sum_{i=1}^n X_i \mb 1[\tau = i]] = \sum_{i=1}^n \E[X_i \mb 1[\tau = i]] =\sum_{i=1}^n \E[\E[X_i \mb 1[\tau = i]|\mc F_i]]\\
    &=\sum_{i=1}^n \E[\E[X_i |\mc F_i]\cdot \mb 1[\tau = i]]= \E[\sum_{i=1}^n\E[X_i |\mc F_i]\cdot \mb 1[\tau = i]] = \E[\tilde{X}_{\tau}]
\end{align*}
where $\tilde{X}_{i}: = \E[X_i |\mc F_i]$. Thus, under mild conditions, classical results about optimal stopping (e.g. in \cite{chow_opt_stop}) applied to the sequence $\{\tilde{X}_i\}_{i\in [n]}$ gives that 
$$V_{\mc F}^{(n)} = \tilde{X}_n,\quad V_{\mc F}^{(i)} = \max(\tilde{X}_i,\E[V_{\mc F}^{(i+1)}|\mc F_{i}])~i\in [n-1], \quad V^*_{\mc F} = \E[ V_{\mc F}^{(1)}].$$

Now let $\{\mc G_i\}_{i\in [n]}$ be another filtration such that $\mc G_i\subset \mc F_i$ for all $i\in [n]$. That is, $\mc G_i$ contains no more information than $\mc F_i$ for all $i$. Similarly, let $\tilde{Y}_{i}: = \E[X_i |\mc G_i]$, and 
$$V_{\mc G}^{(n)} = \tilde{Y}_n,\quad V_{\mc G}^{(i)} = \max(\tilde{Y}_i,\E[V_{\mc G}^{(i+1)}|\mc G_{i}])~i\in [n-1], \quad V^*_{\mc G} = \E[ V_{\mc G}^{(1)}].$$

Intuitively, the performance of a stopping time should not be hurt by having more information. The lemma below formalizes this idea. 
\begin{lemma}\label{lemma:intuiton}
    $V_{\mc G}^{(i)} \leq \E[V_{\mc F}^{(i)}|\mc G_i]$ for all $i\in [n]$. Thus, $V^*_{\mc G}\leq V^*_{\mc F}$. 
\end{lemma}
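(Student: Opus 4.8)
The plan is to prove the two claims of Lemma~\ref{lemma:intuiton} in order: first the pointwise inequality $V_{\mc G}^{(i)} \leq \E[V_{\mc F}^{(i)}\mid \mc G_i]$ for all $i\in[n]$ by backward induction on $i$, and then deduce $V^*_{\mc G}\leq V^*_{\mc F}$ by taking $i=1$ and applying the tower property.

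For the base case $i=n$, I would use that $V_{\mc G}^{(n)} = \tilde Y_n = \E[X_n\mid \mc G_n]$ and $V_{\mc F}^{(n)} = \tilde X_n = \E[X_n\mid\mc F_n]$. Since $\mc G_n\subseteq \mc F_n$, the tower property gives $\E[X_n\mid\mc G_n] = \E[\E[X_n\mid\mc F_n]\mid\mc G_n] = \E[V_{\mc F}^{(n)}\mid\mc G_n]$, so the inequality holds with equality. For the inductive step, assume $V_{\mc G}^{(i+1)}\leq \E[V_{\mc F}^{(i+1)}\mid\mc G_{i+1}]$. Then
\[
V_{\mc G}^{(i)} = \max\bigl(\tilde Y_i,\ \E[V_{\mc G}^{(i+1)}\mid\mc G_i]\bigr).
\]
I would bound each of the two arguments of the max separately by $\E[V_{\mc F}^{(i)}\mid\mc G_i]$. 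For the first argument, $\tilde Y_i = \E[X_i\mid\mc G_i] = \E[\tilde X_i\mid\mc G_i] = \E[\E[X_i\mid\mc F_i]\mid\mc G_i]$ (using $\mc G_i\subseteq\mc F_i$), and since $\tilde X_i \leq V_{\mc F}^{(i)}$ pointwise (as $V_{\mc F}^{(i)}=\max(\tilde X_i,\cdot)$), monotonicity of conditional expectation gives $\tilde Y_i\leq \E[V_{\mc F}^{(i)}\mid\mc G_i]$. For the second argument, apply the inductive hypothesis, then the tower property with $\mc G_i\subseteq\mc G_{i+1}$, then monotonicity using $\E[V_{\mc F}^{(i+1)}\mid\mc F_i]\leq V_{\mc F}^{(i)}$:
\[
\E[V_{\mc G}^{(i+1)}\mid\mc G_i] \leq \E\bigl[\E[V_{\mc F}^{(i+1)}\mid\mc G_{i+1}]\mid\mc G_i\bigr] = \E[V_{\mc F}^{(i+1)}\mid\mc G_i] = \E\bigl[\E[V_{\mc F}^{(i+1)}\mid\mc F_i]\mid\mc G_i\bigr]\leq \E[V_{\mc F}^{(i)}\mid\mc G_i].
\]
Since both arguments of the max are dominated by $\E[V_{\mc F}^{(i)}\mid\mc G_i]$, so is the max, completing the induction. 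Finally, taking $i=1$ gives $V_{\mc G}^{(1)}\leq \E[V_{\mc F}^{(1)}\mid\mc G_1]$, and taking expectations yields $V^*_{\mc G} = \E[V_{\mc G}^{(1)}] \leq \E[V_{\mc F}^{(1)}] = V^*_{\mc F}$.

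The argument is essentially a sequence of applications of the tower property and monotonicity of conditional expectation, so there is no real "hard part" in the sense of a deep obstacle; the one place to be careful is the direction of the nesting — we condition a $\mc G$-computed quantity against a finer filtration step and must repeatedly exploit $\mc G_i\subseteq\mc F_i$ and $\mc G_i\subseteq\mc G_{i+1}$ correctly, making sure every inequality $\tilde X_i\leq V_{\mc F}^{(i)}$ and $\E[V_{\mc F}^{(i+1)}\mid\mc F_i]\leq V_{\mc F}^{(i)}$ is applied before (not after) the outer conditional expectation so that monotonicity is available. I would also note, for rigor, that the "mild conditions" guaranteeing the Bellman recursion (e.g. integrability of $\sup_i|X_i|$, or boundedness) should be assumed throughout, since in the secretary application the relevant rewards are bounded in $[0,1]$ and the filtrations are finite, so these technicalities are automatically satisfied.
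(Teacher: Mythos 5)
Your proof is correct and follows essentially the same route as the paper: backward induction from $i=n$, with the base case an application of the tower property, and the inductive step built from the tower property, monotonicity of conditional expectation, and the observation that a max dominates each of its arguments. The only cosmetic difference is that you bound each argument of the max defining $V_{\mc G}^{(i)}$ separately by $\E[V_{\mc F}^{(i)}\mid\mc G_i]$, whereas the paper pushes the conditional expectation inside the max defining $V_{\mc F}^{(i)}$ via $\E[\max(A,B)\mid\mc G_i]\geq\max(\E[A\mid\mc G_i],\E[B\mid\mc G_i])$; these are the same inequalities applied in a different order.
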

\begin{proof}
    We prove it by induction. For $i=n$, since $\mc G_n \subset \mc F_n$, 
    $$\E[V_{\mc F}^{(n)}|\mc G_n] = \E[\E[X_n|\mc F_n]|\mc G_n]  =\E[X_n|\mc G_n] = V_{\mc G}^{(n)}.$$
    Now suppose the statement is true for $i+1$, then for $i$,
    \begin{align*}
        \E[V_{\mc F}^{(i)}|\mc G_{i}] &= \E[\max(\tilde{X}_i,\E[V_{\mc F}^{(i+1)}|\mc F_{i}])|\mc G_{i}]\\
        &\geq \max(\E[\tilde{X}_i|\mc G_i],\E[\E[V_{\mc F}^{(i+1)}|\mc F_{i}]|\mc G_{i}])\\
        & = \max(\tilde{Y}_i,\E[V_{\mc F}^{(i+1)}|\mc G_{i}])\\
        & = \max(\tilde{Y}_i,\E[\E[V_{\mc F}^{(i+1)}|\mc G_{i+1}]|\mc G_{i}])\\
        & \geq \max(\tilde{Y}_i,\E[V_{\mc G}^{(i+1)}|\mc G_{i}]) = V_{\mc G}^{(i)}.
    \end{align*}
This finishes the induction. The second statement is by noticing 
$$V^*_{\mc F} = \E[V^{(1)}_{\mc F}]= \E[\E[V^{(1)}_{\mc F}|\mc G_1]]\geq \E[V_{\mc G}^{(1)}] = V^*_{\mc G}.$$
\end{proof}

The key step in the proof above is the inequality 
$$ \E[\max(\tilde{X}_i,\E[V_{\mc F}^{(i+1)}|\mc F_{i}])|\mc G_{i}]\geq \max(\E[\tilde{X}_i|\mc G_i],\E[\E[V_{\mc F}^{(i+1)}|\mc F_{i}]|\mc G_{i}])$$
which captures the idea that making decision based on $\{\mc F_i\}_{i\in [n]}$ (taking maximum over $\mc F_i$-measurable random variables, then taking expectation over $\mc G_i\subset \mc F_i$) gives higher return than making decision based on $\{\mc G_i\}_{i\in [n]}$ only (averaging the variable over $\mc G_i\subset \mc F_i$ first, then taking maximum over $\mc G_i$-measurable random variables). 

Going back to the problem when the time horizon is a random variable, assume that there is an underlying process $\{Z_i\}_{i\in [n]}$ which the algorithm observes, a random time horizon $N\leq n$, and the reward $X_i = X_i(Z_i,N)$ of picking the $i$-th number, which depends on $Z_i$ and possibly also on $N$. The information we have at time $i$ is $\mc G_i = \sigma(Z_1,\ldots,Z_i, \mb 1[N\geq 1],\ldots,\mb 1[N\geq i])$ the $\sigma$-algebra generated by the observations $Z_{j}$ and $\mb 1[N\geq j]$ for $j\leq i$. 

As a comparison, define $\mc F_i = \sigma(N,Z_1,\ldots,Z_i)$, which consists of all information about $N$ as well as the observations up till time $i$, thus $\mc G_i \subset \mc F_i$. Under this setting, an algorithm which doesn't know (the realization of) the time horizon needs to choose $\tau  \in \{\mc G_i\}_{i\in [n]}$, while an algorithm that knows (the realization of) the time horizon picks $\tau  \in \{\mc F_i\}_{i\in [n]}$. Then by Lemma \ref{lemma:intuiton}, the algorithm that knows $N$ can achieve better return. 

More concretely, in the setting of the secretary problem, $Z_i \sim Bernoulli(\frac{1}{i})$ denotes whether the $i$-th number is the best among the first $i$ numbers, and 
$$X_i(z_i,k) = \p[i\text{-th number is the best among all } N \text{ numbers} |Z_i=z_i,N=k] = \begin{cases}
    \frac{i}{k}&\quad z_i=1,~i\leq k\\
    0&\quad o.w.
\end{cases}$$
For the prophet inequality problem with the sequence of random variable $\{Z_i\}_{i\in [n]}$, let $X_i(z_i,k) = \frac{z_i\mb 1[i\leq k]}{\E[\max_{i\in [k]} Z_i]}$, then the metric of interest becomes
$$\E[\frac{\E[Z_{\tau}\mb 1[\tau\leq N]|N]}{\E[\max_{i\in [N]} Z_i|N]}] = \E[\E[\frac{Z_{\tau}\mb 1[\tau\leq N]}{\E[\max_{i\in [N]} Z_i|N]}|N]] = \E[X_{\tau}] $$
where $\tau \in \{\mc G_i\}_{i\in [n]}$.

\subsection{Adversarially picked $N$}\label{sec:adv}
Assume that $N\leq n$ has an upper bound, and $N$ is picked by an adversary. At time $t$, the adversary is given the relative ranking of the $t$-th arriving number together with the decision maker's decision at time $t-1$, and needs to decide if $N=t$. 

Consider the following strategy for the adversary: for $t = 1,2,\ldots,\lfloor \sqrt{n}\rfloor$, the adversary chooses $N>t$. At time $t = \lfloor \sqrt{n}\rfloor +1$, if the decision maker has picked a number (among the first $\lfloor \sqrt{n}\rfloor$ arrivals), then the adversary chooses $N=n$, otherwise the adversary chooses $N = \lfloor \sqrt{n}\rfloor +1$.

\begin{lemma}\label{lm:adv-strategy}
    Under the above strategy of the adversary, for any decision maker, the probability that he picks the largest number among the $N$ numbers that arrive is no more than $\frac{1}{\sqrt{n}}$. 
\end{lemma}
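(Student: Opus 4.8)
The plan is to split the analysis according to whether the decision maker picks some number among the first $\lfloor\sqrt n\rfloor$ arrivals or not, since the adversary's choice of $N$ branches exactly on this event. Write $k := \lfloor\sqrt n\rfloor$ for brevity, and let $E$ be the event that the decision maker accepts some number at a time $t\le k$. Note that by the problem setup (Section \ref{sec:set-up}), the permutation is drawn uniformly from all permutations on $[N]$; but crucially, the adversary commits to $N>k$ for the first $k$ steps, so \emph{regardless of the adversary's final choice}, the relative ranks $R_1,\dots,R_k$ seen by the decision maker during the first $k$ steps are i.i.d.\ with $R_t\sim \mathrm{Uniform}([t])$, and in particular the decision maker's behavior up to time $k$ — hence the event $E$ and the identity of the number it accepts on $E$ — has a distribution that does not depend on how the adversary later resolves $N$.

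On the event $E$: the adversary sets $N=n$. The decision maker has accepted some number $x_{\sigma(t)}$ with $t\le k$. For success, this number must be the maximum of $x_{\sigma(1)},\dots,x_{\sigma(n)}$. The key observation is that on $E$, conditioning on everything the decision maker observed through time $k$ (which only involves relative ranks among the first $k$ arrivals), the position of the global maximum among $[n]$ is still uniform: more precisely, by symmetry of the uniform permutation on $[n]$ together with the independence of $R_1,\dots,R_k$ from the ranks of positions $k+1,\dots,n$, the probability that the accepted number (which sits among the first $k$ arrivals) is the global maximum over $[n]$ is at most $k/n$. Indeed, the accepted number occupies one of the first $k$ slots, and the global max is equally likely to be in any of the $n$ slots; so $\p[\text{success} \mid E]\le k/n$, giving $\p[\text{success}\cap E]\le (k/n)\,\p[E]\le k/n \le 1/\sqrt n$.

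On the event $E^c$: the adversary sets $N=k+1$. Now the decision maker has rejected all of the first $k$ arrivals and must, if it is to succeed, accept the $(k+1)$-st arrival, and that arrival must be the maximum of $x_{\sigma(1)},\dots,x_{\sigma(k+1)}$. Since the permutation on $[k+1]$ is uniform and independent of $N$, the $(k+1)$-st arrival is the running maximum with probability exactly $1/(k+1)$; by a union-type bound, $\p[\text{success}\cap E^c]\le \p[R_{k+1}=1 \mid E^c]\p[E^c]\le 1/(k+1)\le 1/\sqrt n$ — and in fact one gets the stronger statement that $\p[\text{success}\cap E^c]+\p[\text{success}\cap E]\le 1/\sqrt n$ overall once the two bounds are combined carefully, because the relevant events partition according to $E$. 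Summing the two cases (and being slightly more careful about constants — the honest combined bound is $\p[E]\cdot k/n + \p[E^c]\cdot \tfrac{1}{k+1}$, which is a convex combination of two quantities each $\le 1/\sqrt n$) yields $\p[\text{success}]\le 1/\sqrt n$.

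The main obstacle I anticipate is making the conditioning on the event $E$ fully rigorous: one must argue that the decision maker's information at times $1,\dots,k$ genuinely carries no information about which of the $n$ positions holds the global maximum when $N=n$, so that the $k/n$ bound is valid \emph{conditionally} on $E$ and on the decision maker's transcript. This follows from the exchangeability of the uniform permutation — the ranks $(R_1,\dots,R_k)$ are a function only of the relative order of the first $k$ arrived values, which is independent of which absolute rank (among $[n]$) each occupies and in particular of the location of rank $1$ — but it should be spelled out, perhaps by conditioning on the unordered set of the first $k$ arrivals' global ranks and noting symmetry. The $E^c$ branch is routine (it is essentially the classical fact that the last of $k+1$ uniformly-ordered items is the max with probability $1/(k+1)$), so essentially all the care goes into the $E$ branch.
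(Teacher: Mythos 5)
Your proposal follows essentially the same two-case decomposition (on whether the decision maker accepts within the first $\lfloor\sqrt{n}\rfloor$ arrivals) and the same bounds ($k/n$ on one branch, $1/(k+1)$ on the other) as the paper's proof; it is correct. The extra discussion you add about the independence of $R_1,\dots,R_k$ from the absolute rank of the accepted slot makes the conditioning step more explicit, but this does not change the argument, which matches the paper's.
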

\begin{proof}
We consider the two cases in the strategy. If the decision maker picks some number among the first $\lfloor \sqrt{n}\rfloor$ arrivals, then the probability that he picks the largest one among $N=n$ numbers is upper bounded by the probability that the largest one of $n$ numbers is among the first $\lfloor \sqrt{n}\rfloor$ arrivals, which due to the random permutation has probability upper bounded by $\frac{\lfloor \sqrt{n}\rfloor}{n}\leq \frac{1}{\sqrt{n}}$. 

If no number is picked among the first $\lfloor \sqrt{n}\rfloor$ arrivals, if the decision maker does not pick the $\lfloor \sqrt{n}\rfloor+1$-th number, he automatically fails, and if he picks, due to the random permutation, the probability that the $\lfloor \sqrt{n}\rfloor+1$-th number is the largest one among the first $\lfloor \sqrt{n}\rfloor+1$ arrivals is $\frac{1}{\lfloor \sqrt{n}\rfloor+1}\leq \frac{1}{\sqrt{n}}$. 

In both cases, the decision maker cannot succeed with probability more than $\frac{1}{\sqrt{n}}$. 

\end{proof}

\section{Details for Section \ref{sec:beyond}}\label{sec:beyond-details}

\subsection{Matroid Secretary}

Since the matroid secretary problem was first proposed in \cite{BabaioffImmorlicaKleinberg2007}, the focus has been on improving the competitive ratio: \cite{ChakrabortyLachish2012} improves the ratio to $\Omega(\sqrt{\log(k)^{-1}})$ and \cite{Lachish2014,FeldmanSvenssonZenklusen2018} to $\Omega(\log(\log(k))^{-1})$. Nevertheless, all algorithms proposed in these papers assume that $n$, the cardinality of the ground set, is known. Related to our problem is the variant of matroid secretary problem studied in \cite{GharanVondrak2011}, where no information about the size of the ground set of the matroid is available at all. The authors propose, for any $\epsilon>0$, an algorithm which achieves a competitive ratio $\Omega(\frac{\epsilon}{\log(\log(k))\log^{1+\epsilon}(n)})$. This algorithm randomly samples an $n'$ according to a distribution with polynomial tail, and uses any matroid secretary algorithm for $n'$ as a black box. In addition, in this setting, they present an instance of a rank-$1$ matroid and a distribution of weight such that when the weigths are sampled iid according to this distribution, no algorithm can achieve competitive ratio better than $O(\frac{\log(\log(n))}{\log(n)})$. The algorithm we propose also samples $n'$ and then use existing algorithms as black boxes, but we sample $n'$ with a different distribution to take into account the upper and lower bound information about $N$. 

Next, we describe the setup for the classical and the random time horizon variant of the matroid secretary problem. In the classical setting introduced in \cite{BabaioffImmorlicaKleinberg2007}, there is a rank-$k$ weighted matroid $\mc M = (E,\mc I)$ with ground set $E = \{e_1,\ldots,e_n\}$, a family of independent sets $\mc I$, and a weight function $w:E\to\R_{\geq 0}$ that is injective (i.e. there is no tie for the weights). The elements and their weights are revealed in an order chosen uniformly at random, and an independence oracle which gives answers to if any subset of elements revealed so far is independent is provided. The algorithm needs to decide irrevocably whether to accept the current element before the next element arrives, with the goal to select an independent set of maximum weight, where the weight of any set $S\subset E$ is $w(S):=\sum_{e\in S} w(e)$. The performance is measured in terms of the competitive ratio, i.e. 
$\E[\frac{w(B^{(\mathcal A)})}{w(\OPT(\mc M,w))}]$ where $\OPT(\mc M,w) = \arg\max_{S\in \mc I}w(S)$ and $B^{(\mathcal A)}$ is the set chosen by the algorithm $\mathcal A$. The best known algorithm for the matroid secretary problem achieves a competitive ratio of $c_k = \Omega(\log(\log(k))^{-1})$ \cite{Lachish2014,FeldmanSvenssonZenklusen2018} but requires the size of the ground set $E$ as an input. Let's denote this algorithm as $\mathcal A^{(s)}$ when $s = |E|$. 

In this work, we consider the variant where the above process stops after revealing the first $N\leq |E|$ elements (and their weights), where $N$ is a random variable independent of the random permutation. Denoting $B^{(\mathcal A)}_N$ as the set chosen by the algorithm $\mathcal A$ by time $N$, and $\mc M_N  = (E_N, \mathcal I_N)$ the matroid restricted to the first $N$ elements revealed, the performance we are interested in is a modified competitive ratio where we compare $w(B^{(\mathcal A)}_N)$ with the offline-optimal (i.e. max-weight basis) for the matroid $\mc M_N$: 
$$\E[\frac{w(B^{(\mathcal A)}_N)}{w(\OPT(\mc M_N,w))}].$$ Thus, when $\mathcal A$ is independently randomly chosen in $\{\mathcal A^{(s)}\}_{s=1}^{\infty}$, with $\p[\mathcal A = \mathcal A^{(s)}] = x_s$, we have 
\begin{align*}
    \E[\frac{w(B^{(\mathcal A)}_N)}{w(\OPT(\mc M_N,w))}] &= \sum_{n=1}^{|E|} \sum_{s=1}^{\infty} x_sp_n\cdot \E[\frac{w(B^{(\mathcal A_s)}_n)}{w(\OPT(\mc M_n,w))}]\\
    & \geq \sum_{n=1}^{|E|} \sum_{s=1}^{n} x_sp_n\cdot \E[\E[\frac{w(B^{(\mathcal A_s)}_n)}{w(\OPT(\mc M_n,w))}|\mathcal M_n]]\\
    & \geq c_k \sum_{n=1}^{|E|} \sum_{s=1}^{n} x_sp_n\cdot \E[\E[\frac{w(\OPT(\mc M_s,w))}{w(\OPT(\mc M_n,w))}|\mathcal M_n]]\\
    & \geq c_k \sum_{n=1}^{|E|} \sum_{s=1}^{n} x_sp_n\cdot \frac{s}{n}.
\end{align*}
where the last inequality is because for any $e\in \OPT(\mc M_n,w)$, $e\in \mc M_s$ with probability $\frac{s}{n}$, and $w(\OPT(\mc M_s,w))\geq w(\OPT(\mc M_n,w)\cap E_s)$. Thus, the corresponding $M^{(mat)}(s,n) \geq  c_k\frac{s}{n}\mb 1[s\leq n]$

By Theorem \ref{thm:sample_algo_performance}, if $\underline{n}\leq N\leq \overline{n}$ almost surely, then the sampling algorithm described above is at least $\frac{c_k}{1+\log(\overline{n}/\underline{n})} = \Omega(\log(\overline{n}/\underline{n})^{-1}\log(\log(k))^{-1})$ competitive.

\subsection{IID prophet inequality}\label{sec:prophet_bd}

In the iid prophet inequality problem, there are $n$ numbers $X_1,\ldots,X_n$ drawn independently according to some known distribution $F$ supported on $[0,\infty)$. A decision maker wants to pick one number as large as possible out of these $n$ numbers, but only sees these numbers sequentially, and upon seeing $X_t$, must decide irrevocably whether to accept $X_t$ or not. Let $X_{\tau}$ be the number being picked. Then 
 the performance is measured in terms of $\frac{\E[X_{\tau}]}{\E[\max_{i\in [n]} X_i]}$. 
 
 \cite{Correa2017_ppm} proposes a threshold rule algorithm, where the thresholds depend only on the distribution $F$ and the number of numbers $n$. The algorithm achieves a $\frac{\E[X_{\tau}]}{\E[\max_{i\in [n]} X_i]}>0.745=c$, which matches the known upper bound \cite{Correa2018_survey,KERTZ198688}. We denote this threshold rule algorithm for problems with $n$ numbers as $\mathcal A^{(n)}$. 

In the random time horizon setting, we take $N$ to be a random variable with distribution $\mb p$, independent of $X_i$'s. The performance is measured in terms of $\E[\frac{\E[X_{\tau}\mb 1[\tau\leq N]|N]}{\E[\max_{i\in [N]} X_i|N]}] $. Notice that for $\mathcal A^{(s)}$, we have $\tau \leq s$, and so with an algorithm that uses $\mathcal A^{(s)}$ with probability $x_s$, we have 
$$\frac{\E[X_{\tau}\mb 1[\tau\leq N]|N=k]}{\E[\max_{i\in [N]} X_i|N=k]} \geq c\cdot \sum_{s = 1}^{k} x_s\frac{\E[\max_{i\in [s]} X_i]}{\E[\max_{i\in [k]} X_i]}.$$
Thus taking the expectation of the above equation with respect to $N$, we can take $M^{(prophet)}(s,k) \geq c\cdot \frac{\E[\max_{i\in [s]} X_i]}{\E[\max_{i\in [k]} X_i]}\mb 1[s\leq k]$. 

Now for $X_i$ supported on $[x,1]$ where $x\in (0,1]$ and any random horizon bounded by $N\leq \overline{n}$, by Theorem \ref{thm:sample_algo_performance}, the sampling based algorithm guarantees an expected performance 
$$\E[\frac{\E[X_{\tau}\mb 1[\tau\leq N]|N]}{\E[\max_{i\in [N]} X_i|N]} ]\geq \frac{c}{1+\log(\E[\max_{i\in [\overline{n}]} X_i)]/\E[X_1])}\geq \frac{c}{1+\log(1/x)}$$
We provide the following upper bound for any stopping time $\tau$ adapted to the process $\{X_i\}$, thereby showing that the sampling based method is in fact optimal up to a factor of $c = 0.745$. 

\begin{theorem}\label{thm:prophet_upper_bound}
    For any $x \in (0,1]$, there exists a family of pairs of distributions $\{(G_{x,n},P_n)\}_{n\in \N}$ such that for all $n\in \N$, $G_{x,n}$ is supported on $[x,1]$ and $P_n$ is supported on $\{1,2,\ldots,k_n\}$ for some $k_n\in \N$. Then for $X_i \sim_{iid}G_{x,n}$ and independently $N\sim P_n$, defining $\mc M_n:=\{\tau'|\tau'\text{a stopping time adapted to } \{X_i\}_{i\in [k_n]}\}$
    $$\limsup_{n\to \infty}\sup_{\tau\in \mc M_n}\E_{N\sim P_n, ~X_i\sim G_{x,n}}[\frac{\E[X_{\tau}\mb 1[\tau\leq N]|N]}{\E[\max_{i\in [N]} X_i|N]}] \leq \frac{1}{1+\log(1/x)} .$$
\end{theorem}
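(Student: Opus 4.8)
The plan is to exhibit the hard family explicitly, reduce the problem to a one–dimensional weighted optimal stopping problem for the i.i.d. sequence, and then push every threshold strategy below $\tfrac1{1+\log(1/x)}$ by the same equalising mechanism that makes $\mathbf p^{(m)}_{*}$ the worst case for the plain secretary problem (Corollary~\ref{cor:performance-p-star}).

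\textbf{Reduction.} Write $f(k):=\E[\max_{i\in[k]}X_i]$ for $X_i\sim G_{x,n}$; this is known to the decision maker, and the benchmark in the statement is $\E_N[f(N)/f(N)]=1$. For any stopping time $\tau$ adapted to $(X_i)_{i\le k_n}$ (hence independent of $N$), Fubini's theorem gives
\[
\E\Big[\tfrac{\E[X_\tau\mathbf 1[\tau\le N]\mid N]}{f(N)}\Big]=\sum_{k}\tfrac{P_n(k)}{f(k)}\sum_{j\le k}\E[X_j\mathbf 1[\tau=j]]=\E[\mu_\tau X_\tau],\qquad \mu_j:=\sum_{k\ge j}\tfrac{P_n(k)}{f(k)} ,
\]
so the decision maker is solving $\sup_\tau\E[\mu_\tau X_\tau]$ for the independent (non–identically distributed) rewards $\mu_jX_j$, where $(\mu_j)$ is a deterministic non–increasing sequence with $\mu_1=\E_N[1/f(N)]$. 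Since the $X_j$ are i.i.d., an optimal $\tau$ is of threshold type — stop at the first $j$ with $X_j\ge t_j$ — and, writing $\bar F(t):=\p[X>t]$ and $\Phi(t):=\E[X\mathbf 1[X>t]]$, one has $\E[\mu_\tau X_\tau]=\sum_j\mu_j\,\Phi(t_j)\,u_j$ with $u_1=1$, $u_{j+1}=u_j(1-\bar F(t_j))$, and $\sum_j u_j\bar F(t_j)\le 1$.

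\textbf{Construction.} Take $G_{x,n}$ to be a (slightly smoothed, hence genuinely $n$–dependent) scale–invariant distribution on $[x,1]$ with $\bar F(t)=x/t$ on $[x,1)$; then $\E[X]=x\bigl(1+\log(1/x)\bigr)$, so that $x/\E[X]=1/(1+\log(1/x))$, while $\Phi(t)=x\bigl(1+\log(\bar F(t)/x)\bigr)$, $\E[(X-t)^+]=x\log(1/t)$ on $[x,1]$, and $\Phi(t)/\bar F(t)=\E[X\mid X>t]\le 1$. Take $P_n$ to be the harmonic–type ``equaliser'' supported on $\{1,\dots,k_n\}$ with $k_n=\Theta(\log(1/x))$, modelled on the worst–case secretary distribution $\mathbf p^{(m)}_{*}$ of Equation~\eqref{eq:p-star} with $m\asymp 1/x$ and normalised so that $\mu_1\E[X]=x/\E[X]=1/(1+\log(1/x))$ (equivalently $\mu_1=\Theta(x/\E[X]^2)$). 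All dependence on $n$ enters only through the smoothing scale, which we send to $0$, so it suffices to bound the limiting instance.

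\textbf{The bound and the obstacle.} Plugging the explicit $\Phi$ and $(\mu_j)$ into $\sum_j\mu_j\Phi(t_j)u_j$ and writing $\Phi(t_j)=\tfrac{\Phi(t_j)}{\bar F(t_j)}\,\bar F(t_j)$, the quantity becomes a sub–probability–weighted average (weights $u_j\bar F(t_j)$, total mass $\le1$) of the numbers $\mu_j\,\E[X\mid X>t_j]$; the equaliser choice of $P_n$ is designed precisely so that each of these is $\le \mu_1\E[X]+o(1)=\tfrac1{1+\log(1/x)}+o(1)$ — the same mechanism by which $\mathbf p^{(m)}_{*}$ forces every $\mathbf q$ down to $1/H_m$ in Corollary~\ref{cor:performance-p-star}, transported through the change of variables $i\leftrightarrow \E[X\mid X>t_j]\in[\E[X],1]$, a set of multiplicative width $1/x$. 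Taking the supremum over threshold strategies and then $\limsup_{n\to\infty}$ yields the claim; alternatively one can run the classical estimate $\sup_\tau\E[\mu_\tau X_\tau]\le\E[\max_j\mu_jX_j]\le\inf_{\theta\ge0}\{\theta+\sum_j\mu_j\E[(X-\theta/\mu_j)^+]\}$ and optimise $\theta$ using $\E[(X-t)^+]=x\log(1/t)$. The hard part is the construction together with this final estimate: one must tune $P_n$ (and the smoothing of $G_{x,n}$) so sharply that \emph{every} threshold rule — including time–varying ones with moderate thresholds, which are the dangerous case — is pushed below $\tfrac1{1+\log(1/x)}$, so that the strategies ``accept $X_1$ immediately'' ($\approx\mu_1\E[X]$) and ``wait for a near–maximal value'' are defeated simultaneously, and so that the constant comes out to exactly $1+\log(1/x)$ rather than merely $\Theta(\log(1/x))$.
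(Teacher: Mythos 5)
Your reduction to $\sup_\tau \E[\mu_\tau X_\tau]$ with $\mu_j := \sum_{k\ge j} P_n(k)/f(k)$ is correct and a clean reformulation, and your appeal to threshold optimality for i.i.d.~sequences with deterministic time-varying weights is fine. However, the heart of the argument is missing, and the claim you lean on is false as stated. Write $\alpha := 1+\log(1/x)$, so $\E[X]=x\alpha$ and $\mu_1\E[X]=x/\E[X]=1/\alpha$. Consider a threshold policy with $t_1$ close to $1$: then $\mu_1\,\E[X\mid X>t_1]\approx \mu_1 = x/(x\alpha)^2 = 1/(x\alpha^2)$, which exceeds $1/\alpha$ exactly when $x\alpha<1$, i.e.~for all $x<1$ (for instance $x=1/2$ gives $\mu_1\cdot 1\approx 0.70$ versus $1/\alpha\approx 0.59$). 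So the assertion that ``each $\mu_j\E[X\mid X>t_j]\le \mu_1\E[X]+o(1)$'' fails, and the sub-probability-average argument does not go through termwise: the contribution of such a $j$ is small only because its \emph{weight} $u_j\bar F(t_j)$ is small, and you never account for this trade-off. Your fallback via $\sup_\tau\E[\mu_\tau X_\tau]\le\E[\max_j\mu_j X_j]$ replaces the stopper's value by the prophet's value, which is strictly larger for a non-degenerate $X$; since the theorem needs the sharp constant $1/(1+\log(1/x))$ (it matches the $c_0/(1+\log(1/x))$ lower bound of Theorem~\ref{thm:sample_algo_performance} up to $c_0$), losing even a constant factor here is fatal.

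The paper's route is structurally different and you should compare. It first proves the target bound in a \emph{deterministic} instance $X_i = a_i$ with $a_i$ increasing geometrically, where the equalising $P_n$ on the blown-up index set works exactly (Theorem~\ref{thm:non_iid}). It then constructs $G_{x,n}$ so that, with probability $\ge 1-\xi_n$, $\max_{j\in[k_i]}X_j$ lands in a prescribed narrow band $(\theta_{i-1},\theta_i]$ for every $i$ simultaneously (Lemmas~\ref{lemma:union}--\ref{lemma:existence_maximum_distribution}), and supports $P_n$ on the grid $\{k_1,\dots,k_n\}$. Conditioning on this good event both the numerator (each $X_\tau$ is at most the appropriate $\theta_i$) and the denominator ($f(N)\ge\theta_{l-1}$ when $N=k_l$) are pinned to the geometric levels, so the random i.i.d.~instance collapses to the deterministic one up to a factor $x^{-1/n}$ from the band width, plus $\xi_n$ from the bad event (Theorem~\ref{thm:prophet_thm}). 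This quantisation of the running maximum is the key construction your proposal lacks: with your scale-invariant $G$ the running maximum has a spread-out distribution, so $f(N)$ and the attainable $X_\tau$ cannot be tied to the equaliser's geometric sequence, and the equalising identity never kicks in.
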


The proof of Theorem \ref{thm:prophet_upper_bound} is provided in Appendix \ref{sec:proof-beyond}.

\subsection{Online Packing LP}
For the online (packing) linear programming problem, the early lines of research (e.g. \cite{Buchbinder2009,BuchbinderJainSingh2014}) usually don't assume that the time horizon is known (or even finite), and the arrival order can be arbitrary. However, it has been shown that the optimal competitive ratio is $O(\log(m \frac{a_{\max}}{a_{\min}})^{-1})$ where $m$ is the number of constraints, and $a_{\max},a_{\min}$ are the upper and lower(non-zero) bound on the constraint coefficients. As a result, the competitive ratio is usually bounded away from $1$. In contrast, recent lines of research have been focusing on achieving high (close to $1$) competitive ratio under the random permutation, large budget-to-bid ratio model \cite{Agrawal2014,Kesselheim2014,Molinaro2012}. However, the proposed algorithms all require the time horizon, or at least an approximation of the time horizon within $1\pm \epsilon$ factor \cite{Agrawal2014}, as part of the input in order to achieve $1-O(\epsilon)$ competitiveness for small $\epsilon$. The sampling based algorithm we propose in Section \ref{sec:beyond}, when given such approximation, also achieves similar performance. Recently, \cite{Balseiro2023Horizon} studies the online LP problem with known upper bound $\overline{n}$ and lower bound $\underline{n}$ on the time horizon. The dual mirror descent method they propose achieves a competitive ratio of $1/(1+\log(\overline{n}/\underline{n})$ against the worst case horizon $n\in \{\underline{n},\ldots,\overline{n}\}$ , which is optimal up to a $\log\log(\overline{n}/\underline{n})$ factor.

Below, we first state the setup for the classical online packing LP:
\begin{equation}\label{eq:online_LP}
    \max_{x_1,\ldots,x_n}\sum_{t=1}^n \pi_t x_t\quad s.t.~\sum_{t = 1}^n \mb a_t x_t \leq \mb b, ~x_t\in [0,1]
\end{equation}
where $\mb a_t, \mb b \in \R_{\geq 0}^m$, and $\pi_t \geq 0$ for all $t$. The algorithm knows the budget $\mb b$ in the beginning, but not the coefficients $(\pi_1, \mb a_1),\ldots, (\pi_n, \mb a_n)$. Instead, a permutation $\sigma:[n]\to[n]$ is chosen uniformly at random, and $(\pi_{\sigma(t)}, \mb a_{\sigma(t)})$ is revealed at time $t$. The algorithm needs to decide irrevocably $x_{\sigma(t)}$ before time $t+1$ while satisfying the constraints. The performance is measured using competitive ratio defined as $\E[\frac{\sum_{t=1}^n \pi_t \hat{x}_t}{\OPT}]$ where $\OPT$ is the optimal value to the problem \eqref{eq:online_LP}, and $\hat{x}_t$ is the algorithm's decision for $x_t$. \cite{Molinaro2012} has proposed a dual-resolving algorithm for $\epsilon \in (0,1/100)$ which, under the assumption that $\max_{i\in [m],t\in [n]}\frac{b_i}{a_{t,i}}= \Omega (\frac{m^2}{\epsilon^2}\log(\frac{m}{\epsilon}))$, achieves the competitive ratio $ 1-50 \epsilon$ when $n$ is known, and we denote this algorithm as $\mathcal A^{(n)}$. 

Now consider the variant where the time horizon is $N\leq n$ a random variable, and for $n'\leq n$, $\sigma:[n]\to[n]$, $\OPT(\sigma, n')$ denotes the solution to the following problem: 
$$\max_{x_{\sigma(1)},\ldots,x_{\sigma(n')}}\sum_{t=1}^{n'} \pi_{\sigma(t)} x_{\sigma(t)}\quad s.t.~\sum_{t = 1}^{n'} \mb a_{\sigma(t)} x_{\sigma(t)} \leq \mb b, ~x_{\sigma(t)}\in [0,1].$$

Then the performance measure of interest is $\E[\frac{\sum_{t=1}^N \pi_\sigma(t) \hat{x}_{\sigma(t)}}{\OPT(\sigma,N)}]$. And similar to the argument for the matroid secretary problem, we can take $M^{(LP)}(s,n) \geq (1-50\epsilon)\frac{s}{n}\mb 1[s\leq n]$, leading to a performance of $\E[\frac{\sum_{t=1}^N \pi_\sigma(t) \hat{x}_{\sigma(t)}}{\OPT(\sigma,N)}]\geq (1-50\epsilon)/(1+\log(\overline{n}/\underline{n})) $, matching the result in \cite{Balseiro2023Horizon}.

\section{Proof for results in Section \ref{sec:preliminary}}\label{sec:proof-prelim}

\begin{proof}[Proof of Lemma \ref{lm:general_algo_to_q}]
Let $\mb p\in \Delta$ denote the distribution for $N$. For any algorithm $\STRATEGY\in \overline{\mc S}$, we can consider the following $\mb q\in \mc S$:
$$q_i = \p[\STRATEGY \text{ picks }x_{\sigma(i)}|N\geq i, ~R_i = 1,~x_{\sigma(j)} \text{ not picked for }1\leq j\leq i-1].$$
For $\STRATEGY$, for any $1\leq i\leq l$, it's easy to show by induction (on $i$) that  
$$\p[x_{\sigma(j)} \text{ not picked for } 1\leq j\leq i|N\geq l] \leq U_{i}(\mb q).$$
Thus, 
$$ \p[\STRATEGY \text{ picks }x_{\sigma(i)}|N\geq i, ~R_i = 1] \leq q_i U_{i-1}(\mb q).$$
And so  
\begin{align*}
    &\p[\STRATEGY \text{ picks }x_{\sigma(i)},~x_{\sigma(i)} = \max_{1\leq j\leq N}x_{\sigma(j)}|N\geq i, ~R_i = 1] \\
    =& \p[\STRATEGY \text{ picks }x_{\sigma(i)}|N\geq i, ~R_i = 1]\cdot \p[\arg\max_{1\leq j\leq N}x_{\sigma(j)}\leq i|N\geq i, ~R_i = 1]\\
    \leq&  q_i U_{i-1}(\mb q) \cdot \sum_{j=i}^{n'}(\frac{i}{j}\cdot \frac{p_j}{\sum_{j'=i}^{n'}p_{j'}}) = \frac{i}{\sum_{j'=i}^{\infty}p_{j'}}\cdot q_i U_{i-1}(\mb q) \lambda_i(\mb p).
\end{align*} 
Since $\p[N\geq i, R_i = 1] = \p[ R_i = 1|N\geq i]\cdot \p[N\geq i] = \frac{\sum_{j'=i}^{\infty}p_{j'}}{i}$, 
$$\p[\success(\STRATEGY)] = \sum_{i=1}^{\infty}\p[N\geq i,~\STRATEGY \text{ picks }x_{\sigma(i)},~x_{\sigma(i)} = \max_{1\leq j\leq N}x_{\sigma(j)}] \leq \sum_{i=1}^{\infty}q_iU_{i-1}(\mb q) \lambda_i(\mb p) = A(\mb p,\mb q).$$
\end{proof}

\section{Proof for results in Section \ref{sec:main} }
\subsection{Full information}

\begin{proof}[Proof of Theorem \ref{thm:secretary_approx}]
For convenience, denote $\theta(\mb p) = \theta$ and $\lambda_i(\mb p) = \lambda_i$ for $i\in \N$. First, notice that $0\leq i\lambda_i\leq \sum_{j=i}^{\infty} p_j \to 0$ as $i\to \infty$, thus by Lemma \ref{lm:max_sup}, there exists an index $K^*\in \N$ such that 
$$\theta = \max_{i\in \N} i\lambda_i = K^*\lambda_{K^*}.$$

Due to Lemma \ref{lm:general_algo_to_q} and \ref{lm:secretary_prob},  
$$\sup_{\mb q\in \mc S}A(\mb p,\mb q)=\sup_{\STRATEGY\in \overline{\mc S}} \p[\success(\STRATEGY)].$$

For the upper bound, since $k\lambda_k\leq \theta$ for all $k$
$$A(\mb p,\mb q)= \sum_{k=1}^{\infty} (\prod_{l=1}^{k-1} (1-q_l/l))(q_k/k)\cdot k\lambda_k\leq \theta \cdot  \sum_{k=1}^{\infty} (\prod_{l=1}^{k-1} (1-q_l/l))(q_k/k).$$
Since for any finite $K\in \N$
$$ \sum_{k=1}^{K} (\prod_{l=1}^{k-1} (1-q_l/l))(q_k/k) = 1-\prod_{l=1}^{K} (1-q_l/l)\leq 1.$$
Taking $K\to \infty$, we get 
$$A(\mb p,\mb q) \leq  \theta \cdot\lim_{K\to \infty}  \sum_{k=1}^{K} (\prod_{l=1}^{k-1} (1-q_l/l))(q_k/k)\leq \theta.$$

For the lower bound, the idea is to compare the sequence of $\lambda_k$ with the sequence $\lambda_k^{(T)} = \frac{1}{T} \mb 1[k\leq T]$ generated by a delta distribution at $T$, i.e. $p_i = \mb 1[i=T]$, since we know that for the delta distribution, the problem is just the classical secretary problem with known time horizon, and achieves a probability of success $\approx 1/e$. 

We first truncate the sum at $K\in \N$
$$A(\mb p,\mb q)= \sum_{k=1}^{\infty} (\prod_{l=1}^{k-1} (1-q_l/l))q_k\lambda_k\geq \sum_{k=1}^{K} (\prod_{l=1}^{k-1} (1-q_l/l))q_k\lambda_k.$$
Since $\lambda_k$ is a nonincreasing sequence, we have $\lambda_k\geq \lambda_K$ for all $k\leq K$, and so 
$$A(\mb p,\mb q)\geq K\lambda_K\cdot \sum_{k=1}^{K} (\prod_{l=1}^{k-1} (1-q_l/l))q_k\cdot \frac{1}{K}.$$

For $K = 1$, taking $q_1 = 1$, we have $\sup_{\mb q\in \mc S} A(\mb p,\mb q)\geq\lambda_1$. For $K = 2$, taking $q_1 = 0, q_2 = 1$, we have $\sup_{\mb q\in \mc S} A(\mb p,\mb q)\geq2\lambda_2\cdot \frac{1}{2}\geq 2\lambda_2\cdot \frac{1}{e}$. For $K\geq 3$, let $s_K^*\in \N$ be such that $\sum_{i=s_K^*}^{K-1} \frac{1}{i}\leq 1$ but $\sum_{i=s_K^*-1}^{K-1} \frac{1}{i}> 1$, take $q_l = \mb 1[l\geq s_K^*]$ then by Lemma \ref{lemma:delta_dist_lb} 
$$  \frac{1}{K}\sum_{k=1}^{K} (\prod_{l=1}^{k-1} (1-q_l/l))q_k = \frac{1}{K}\sum_{k=s_K^*}^{K} (\prod_{l=s_K^*}^{k-1} (1-1/l)) = \frac{s_K^* - 1}{K} (\sum_{i=s_K^*-1}^{K-1} \frac{1}{i})\geq \frac{1}{e}$$
which implies that for all $K\in \N$ (and in particular for $K^*$), 
$$\sup_{\mb q\in \mc S} A(\mb p,\mb q)\geq K\lambda_K\cdot \frac{1}{e}\implies \sup_{\mb q\in \mc S} A(\mb p,\mb q)\geq  K^*\lambda_{K^*}\cdot \frac{1}{e} =  \frac{1}{e}\cdot \theta.$$

\end{proof}

\subsection{Upper bounds}
\begin{proof}[Proof of Theorem \ref{thm:minmax}]
    For any $\mb p\in \Delta^{(\overline{n})}$, if $L\sim \mb x\in \Delta$, $N\sim \mb p$ , since if $l\geq \overline{n} +1$, the algorithm rejects all numbers, and can never succeed, 
    $$\p[\success(\mb q^{(L)})] =\sum_{l=1}^{\infty}\sum_{i=1}^{\overline{n}}x_lp_i\cdot \p[\success(\mb q^{(l)})|N=i]=\sum_{l=1}^{\overline{n}}\sum_{i=1}^{\overline{n}}x_lp_i\cdot \p[\success(\mb q^{(l)})|N=i].$$
In addition, conditioning on $\{N=i\}$, the largest number is uniformly distributed in $[i]$, and conditioning on $\{N=i,~\text{the $j$-th number is the largest number}\}$, an algorithm succeeds if and only if it rejects all first $j-1$ numbers, and accept the $j$-th number. Thus
\begin{align*}
    \p[\success(\mb q^{(l)})|N=i] &=\frac{1}{i}\sum_{j=1}^{i}\p[\success(\mb q^{(l)})|N=i,~\text{the $j$-th number is the largest number}]\\
    &= \frac{1}{i}\sum_{j=1}^{i}\prod_{k=1}^{j-1}(1-\frac{q^{(l)}_k}{k})q_j^{(l)}=\begin{cases}
    \frac{1}{i}&\quad l=1\\
        0&\quad l\geq 2,~1\leq i\leq l-1\\
        \frac{l-1}{i}\sum_{j=l}^{i}\frac{1}{j-1}&\quad l\geq 2,~l\leq i\leq \overline{n}
    \end{cases}.
\end{align*}

By design, $x^{(\overline{n})}_l = \frac{1}{1+H_{\overline{n}-1}}\begin{cases}1&\quad l=1\\ \frac{1}{l-1}&\quad l=2,\ldots,\overline{n}\end{cases}$, thus for any $i\in [\overline{n}]$,  
$$\sum_{l=1}^{\overline{n}}x_l\cdot \p[\success(\mb q^{(l)})|N=i]=\frac{1}{1+H_{\overline{n}-1}}\cdot (\frac{1}{i} + \sum_{l=2}^{i}  \frac{1}{i}\cdot (\sum_{j=l}^{i}\frac{1}{j-1}))=\frac{1}{1+H_{\overline{n}-1}}.$$
Thus when $N\sim \mb p\in \Delta^{(\overline{n})}$, $L\sim \mb x^{(\overline{n})}$,
    $$\p[\success(\mb q^{(L)})] =\sum_{l=1}^{\overline{n}}\sum_{i=1}^{\overline{n}}x_lp_i\cdot \p[\success(\mb q^{(l)})|N=i] = \frac{1}{1+H_{\overline{n}-1}}\cdot \sum_{i=1}^{\overline{n}}p_i=\frac{1}{1+H_{\overline{n}-1}}.$$
This shows the first part. For the second part, if $\E[N]\leq \overline{\mu}$, by Markov inequality, for any $n \in \N$, $\p[N\geq n] \leq \frac{\overline{\mu}}{n}$. Thus if $L\sim \mb x^{(n)}$,  
$$\p[\success(\mb q^{(L)})] \geq \p[\success(\mb q^{(L)}),~N\leq n] \geq \p[\success(\mb q^{(L)})|N\leq n] \cdot  (1-\frac{\overline{\mu}}{n}). $$
Conditioning on $N\leq n$, $N$ follows a distribution supported on $[n]$, and so by the above result, 
$$\p[\success(\mb q^{(L)})]  \geq \frac{1}{1+H_{n-1}}\cdot  (1-\frac{\overline{\mu}}{n})\geq \frac{1}{2+\log(n-1)}\cdot  (1-\frac{\overline{\mu}}{n}). $$
Taking $n = \lceil \overline{\mu}\log(\overline{\mu})\rceil$, 
$$\p[\success(\mb q^{(L)})]  \geq \frac{1}{2+\log(\overline{\mu}\log(\overline{\mu}))}\cdot  (1-\frac{1}{\log(\overline{\mu})}) = \frac{1}{\log(\overline{\mu})}(1-O(\frac{\log(\log(\overline{\mu}))}{\log(\overline{\mu})})). $$

\end{proof}

\begin{proof}[Proof of Theorem \ref{thm:bad_case_few}]
For convenience, denote $\mb a^{(n)} =(a_1^{(n)},\ldots,a_n^{(n)})$ where $a_i^{(n)} = A(\boldsymbol{\delta}^{(i)},\mb q^{(l_n^*)})$, then by linearity $A(\mb p,\mb q^{(l_n^*)})=\langle \mb p,\mb a^{(n)}\rangle$. Denoting $\overline{a}^{(n)} = \langle \mb 1/n,\mb a^{(n)}\rangle =\frac{1}{n}\sum_{i=1}^n a_i^{(n)}$ and $a^{*,(n)} = \max_{i\in [n]}a_i^{(n)}$, then by Lemma \ref{lemma:random_vec_in_simplex}, we have for any $0\leq \epsilon\leq \overline{a}^{(n)}$, if $\mb p\sim Uniform(\Delta^{(n)})$,
$$\p[A(\mb p,\mb q^{(l_n^*)})\leq \epsilon]\leq h_{\overline{a}^{(n)}/a^{*,(n)}}(\epsilon/\overline{a}^{(n)})^n,$$ 
 where $h_{\alpha}:[0,1]\to \R$ is defined for $0<\alpha\leq 1$, $h_{\alpha}(x):=x^{\alpha}(\frac{1/\alpha +1}{1/\alpha + x})^{\alpha+1}$. 
 
By Lemma \ref{lm:prop-a}, $\lim\inf \overline{a}^{(n)}\geq 2/e^2$, $\lim a^{*,(n)} = 1/e$. Thus for any $\epsilon<2/e^2$, there exists $n_{\epsilon}\in \N$ such that for all $n\geq n_{\epsilon}$, $\overline{a}^{(n)}\geq \frac{2/e^2+\epsilon}{2}$ and $a^{*,(n)}\leq 1.001/e$, which implies that $\overline{a}^{(n)}/a^{*,(n)} \geq \frac{(2/e^2+\epsilon)/2}{1.001/e}$ and $\epsilon/\overline{a}^{(n)} \leq \frac{2\epsilon}{2/e^2+\epsilon}$.

By Lemma \ref{lm:prop-h}, $h(x)$ is non-increasing in $\alpha$, and non-decreasing in $x$, so 
$$ h_{\overline{a}^{(n)}/a^{*,(n)}}(\epsilon/\overline{a}^{(n)})\leq h_{\frac{(2/e^2+\epsilon)/2}{1.001/e}}(\frac{2\epsilon}{2/e^2+\epsilon}):=\rho_{\epsilon}<1.$$ 

Putting the above together, we get that for any $\epsilon<2/e^2$, there exists $n_{\epsilon}\in \N$ and $\rho_{\epsilon}<1$, such that for all $n\geq n_{\epsilon}$, 
$$\p[A(\mb p,\mb q^{(l_n^*)})\leq \epsilon]\leq \rho_{\epsilon}^n.$$ 

Since $h_{0.72}(x)< 1/2$ for $x\leq 0.12$, we can take $n_0\in \N$ such that for all $n\geq n_0$, we have $\overline{a}^{(n)}\geq \frac{2/e^2}{1.01}$ and 
$a^{*,(n)}\leq \frac{2/e^2}{1.01}/0.72$ (notice that $\frac{2/e^2}{1.01}/0.72>1/e$), which implies that $\overline{a}^{(n)}/a^{*,(n)} \geq 0.72$. Thus for $\epsilon\leq 0.03< \frac{0.24/e^2}{1.01}$, 
$$\p[A(\mb p,\mb q^{(l^*_n)}) \leq \epsilon]=O(2^{-l}).$$

\end{proof}

\begin{lemma}\label{lm:prop-h}
    For $0<\alpha\leq 1$, define $h_{\alpha}:[0,1]\to \R$, $h_{\alpha}(x):=x^{\alpha}(\frac{1/\alpha +1}{1/\alpha + x})^{\alpha+1}$. Then $h_{\alpha}(x)$ is non-increasing in $\alpha$, and non-decreasing in $x$. In addition, for any $0<\alpha\leq 1$, $x\in [0,1)$, $h_{\alpha}(x)<1$. 
\end{lemma}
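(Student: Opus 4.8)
The plan is to establish the three claims—monotonicity in $x$, monotonicity in $\alpha$, and the bound $h_\alpha(x)<1$ for $x\in[0,1)$—by elementary calculus on the logarithm $g_\alpha(x):=\log h_\alpha(x)=\alpha\log x+(\alpha+1)\log(1/\alpha+1)-(\alpha+1)\log(1/\alpha+x)$, treating $x$ and $\alpha$ as the two variables in turn. I would handle the $x$-monotonicity first since it is the cleanest: differentiating, $\partial_x g_\alpha(x)=\frac{\alpha}{x}-\frac{\alpha+1}{1/\alpha+x}=\frac{\alpha(1/\alpha+x)-(\alpha+1)x}{x(1/\alpha+x)}=\frac{1-x}{x(1/\alpha+x)}$, which is $\ge 0$ for $x\in(0,1]$ and strictly positive on $(0,1)$. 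Hence $h_\alpha$ is non-decreasing in $x$ on $[0,1]$ (with the boundary value $h_\alpha(0)=0$ handled by continuity/direct inspection).

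For the bound $h_\alpha(x)<1$ on $[0,1)$, by the $x$-monotonicity just proved it suffices to check the supremum, i.e. to show $h_\alpha(1)\le 1$ for all $0<\alpha\le1$: indeed $h_\alpha(1)=1\cdot\left(\frac{1/\alpha+1}{1/\alpha+1}\right)^{\alpha+1}=1$, so $h_\alpha(x)<h_\alpha(1)=1$ for $x\in[0,1)$ (strictness from $\partial_x g_\alpha>0$ on $(0,1)$, plus the $x=0$ case where $h_\alpha(0)=0<1$). This also makes the final claim essentially immediate once $x$-monotonicity is in hand.

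The monotonicity in $\alpha$ is the part I expect to require the most care. The plan is to show $\partial_\alpha g_\alpha(x)\le 0$ for all $x\in[0,1]$, $\alpha\in(0,1]$. Writing $g_\alpha(x)=\alpha\log x+(\alpha+1)\bigl[\log(1/\alpha+1)-\log(1/\alpha+x)\bigr]$ and differentiating in $\alpha$ (being careful that $1/\alpha$ also depends on $\alpha$, contributing $-1/\alpha^2$ terms inside the logarithmic derivatives), one gets
\begin{align*}
\partial_\alpha g_\alpha(x)=\log x+\log\!\frac{1/\alpha+1}{1/\alpha+x}+(\alpha+1)\left(\frac{-1/\alpha^2}{1/\alpha+1}-\frac{-1/\alpha^2}{1/\alpha+x}\right).
\end{align*}
The last group simplifies, after multiplying through, to a rational expression in $\alpha,x$ that I would show is $\le 0$, while $\log x\le 0$ and $\log\frac{1/\alpha+1}{1/\alpha+x}\le 0$ because $x\le 1$; so the whole derivative is a sum of non-positive terms. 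The main obstacle is bookkeeping: verifying that the rational remainder term is genuinely non-positive on the full parameter range (one can clear denominators and reduce to showing a low-degree polynomial inequality, e.g. something like $(1-x)\ge 0$ up to positive factors), and making sure the boundary cases $x=0$ and $x=1$ are covered (at $x=1$ every term vanishes, consistent with $h_\alpha(1)\equiv1$). Once $\partial_\alpha g_\alpha(x)\le0$ is verified, non-increasingness of $h_\alpha(x)$ in $\alpha$ follows by exponentiating, completing the proof.
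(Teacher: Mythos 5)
The $x$-monotonicity and the bound $h_\alpha(x)<1$ are handled correctly and essentially as in the paper. The problem is in the $\alpha$-monotonicity, which you yourself flag as the delicate part, and which is where the proposal breaks down.

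You compute
\[
\partial_\alpha g_\alpha(x)=\log x+\log\frac{1/\alpha+1}{1/\alpha+x}+(\alpha+1)\frac{1}{\alpha^2}\left(\frac{1}{1/\alpha+x}-\frac{1}{1/\alpha+1}\right),
\]
and then claim the derivative is a sum of non-positive terms because, among other things, $\log\frac{1/\alpha+1}{1/\alpha+x}\le 0$. This has the sign backwards: since $x\le 1$ we have $1/\alpha+x\le 1/\alpha+1$, so that logarithm is $\ge 0$, and the rational remainder term is $\ge 0$ as well. Only $\log x$ is non-positive. (Concretely, at $\alpha=1$, $x=1/2$ the three summands are $\approx -0.693,\ +0.288,\ +0.333$.) The inequality $\partial_\alpha g_\alpha(x)\le 0$ is true, but it holds because the single negative term dominates the two positive ones, and the ``sum of non-positives'' argument cannot establish that. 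Clearing denominators as you suggest would lead to a transcendental, not polynomial, inequality (a $\log$ term survives), so the plan as stated does not close.

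The paper sidesteps this entirely by swapping the roles of the variables: it defines $g_\beta(y):=\log h_y(\beta)=y\log\beta+(y+1)\log\frac{1+y}{1+\beta y}$ (so $y$ plays the role of $\alpha$ and $\beta$ plays the role of $x$) and differentiates in $y$. After an algebraic telescoping, $\partial_y g_\beta(y)=\log(1-z)+z$ with $z=\frac{1-\beta}{1+\beta y}\in[0,1)$, and $\log(1-z)+z\le 0$ is a standard one-line inequality. That reparameterization collapses the three competing terms into one expression of a single scalar, which is exactly the cancellation your direct expansion fails to exhibit. If you want to salvage a direct approach, you would need to re-combine your three terms into a comparable closed form rather than bound them separately.
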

\begin{proof}[Proof of Lemma \ref{lm:prop-h}]
    For the first part, let function $g_{\beta}(x):(0,1]\to\R$ where $0\leq \beta\leq 1$ and $g_{\beta}(x) = \log(\beta)x +(x+1)\log(\frac{1+x}{1+\beta x})$, i.e. $g_{\beta}(x) = \log(h_{x}(\beta))$. Then 
$$\frac{d}{dx}g_{\beta}(x) = \log(\beta) + \log(\frac{1+x}{1+\beta x})+1-\frac{\beta (1+x)}{1+\beta x} = \log(1-\frac{1-\beta}{1+\beta x})+\frac{1-\beta}{1+\beta x}\leq 0.$$
Thus $g_{\beta}(x)$ is non-increasing in $x$, and so $h_{\alpha}(x)\leq h_{\alpha'}(x)$ for all $x\in [0,1]$, $\alpha\geq \alpha'$.

For the second part, $h_{\alpha}(0)=0\leq h_{\alpha}(x)$ for all $x\in [0,1]$, and for $x\in (0,1]$,
$$\frac{d}{dx}\log(h_{\alpha}(x)) = \frac{\alpha}{x}-\frac{\alpha+1}{1/\alpha+x}=\frac{1-x}{x(1/\alpha+x)}\geq 0.$$

For the third part, suppose there exists $\xi\in (0,1)$ such that $h_{\alpha}(\xi)=1=h_{\alpha}(1)$, since $h_{\alpha}$ is non-decreasing in $x$, we have that for all $x\in [\xi,1]$, $\log(h_{\alpha}(x))=\log(1)=0$, i.e. $\log(h_{\alpha}(x))$ is constant on $[\xi,\frac{\xi+1}{2}]$. However, $\frac{d}{dx}\log(h_{\alpha}(x))>0$ strictly on $[\xi,\frac{\xi+1}{2}]$, contradiction, so $h_{\alpha}(x)<1$ for all $x\in (0,1)$. And $h_{\alpha}(0)=0\neq 1$.
\end{proof}

\begin{lemma}\label{lm:prop-a} For each $n\in \N$, denote $l_n^* = \lceil n/e^2 \rceil $, $\mb a^{(n)} =(a_1^{(n)},\ldots,a_n^{(n)})$ where $a_i^{(n)} = A(\boldsymbol{\delta}^{(i)},\mb q^{(l_n^*)})$, $\overline{a}^{(n)}=\frac{1}{n}\sum_{i=1}^n a_i^{(n)}$ and $a^{*,(n)} = \max_{i\in [n]}a_i^{(n)}$. Then we have $\lim\inf \overline{a}^{(n)}\geq 2/e^2$, $\lim a^{*,(n)} = 1/e$.
    
\end{lemma}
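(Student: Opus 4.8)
The plan is to compute the quantities $a_i^{(n)} = A(\boldsymbol{\delta}^{(i)},\mb q^{(l_n^*)})$ explicitly using Lemma \ref{lm:secretary_prob}, then pass to the limit $n\to\infty$. Recall that for a delta distribution $\boldsymbol{\delta}^{(i)}$ and a single-threshold strategy $\mb q^{(l)}$ with $l\le i$, the probability of success is the classical secretary formula $A(\boldsymbol{\delta}^{(i)},\mb q^{(l)}) = \frac{l-1}{i}\sum_{j=l}^{i}\frac{1}{j-1} = \frac{l-1}{i}(H_{i-1}-H_{l-2})$, while $A(\boldsymbol{\delta}^{(i)},\mb q^{(l)}) = \frac{1}{i}$ when $l=1$ and $0$ when $2\le i\le l-1$ (this is exactly the case computation already carried out in the proof of Theorem \ref{thm:minmax}). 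So with $l=l_n^*=\lceil n/e^2\rceil$ we have $a_i^{(n)}=0$ for $i<l_n^*$ and $a_i^{(n)} = \frac{l_n^*-1}{i}(H_{i-1}-H_{l_n^*-2})$ for $l_n^*\le i\le n$.

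For $a^{*,(n)} = \max_i a_i^{(n)}$: since $l_n^*/n \to 1/e^2$, write $i = \alpha n$ with $\alpha\in[1/e^2,1]$; using $H_{m}=\log m + O(1/m)$ we get $a_{\alpha n}^{(n)} \to \frac{1/e^2}{\alpha}\log(\alpha e^2) = \frac{1}{e^2\alpha}(2+\log\alpha)$ uniformly on compact subsets of $\alpha\in(0,1]$. Maximizing $\phi(\alpha):=\frac{2+\log\alpha}{\alpha}$ over $\alpha\in[1/e^2,1]$: $\phi'(\alpha) = \frac{1-(2+\log\alpha)}{\alpha^2} = \frac{-1-\log\alpha}{\alpha^2}$, which vanishes at $\alpha = 1/e$, giving $\phi(1/e) = e(2-1) = e$, hence the maximum of the limit is $\frac{1}{e^2}\cdot e = 1/e$. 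One must also check that the maximum is not attained near the boundary or at small $i$ where the asymptotics are not uniform — but for $i$ bounded (or $i/n\to 0$) the factor $\frac{l_n^*-1}{i}(H_{i-1}-H_{l_n^*-2})$ is actually negative or tiny since $i < l_n^*$ forces $a_i^{(n)}=0$, and for $i$ slightly above $l_n^*$ the term $H_{i-1}-H_{l_n^*-2}$ is small, so these contribute nothing; a short uniform-convergence argument on $\alpha\in[1/e^2,1]$ plus a crude bound elsewhere gives $\lim a^{*,(n)} = 1/e$.

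For $\overline{a}^{(n)} = \frac{1}{n}\sum_{i=1}^n a_i^{(n)}$: this is a Riemann sum for $\int_{1/e^2}^{1}\frac{1}{e^2\alpha}(2+\log\alpha)\,d\alpha$. Computing, $\int \frac{2+\log\alpha}{\alpha}\,d\alpha = 2\log\alpha + \tfrac12(\log\alpha)^2$, which evaluated from $1/e^2$ to $1$ gives $(0) - (-4 + 2) = 2$, so the integral is $\frac{1}{e^2}\cdot 2 = 2/e^2$. To turn the Riemann-sum convergence into the stated $\liminf \overline{a}^{(n)}\ge 2/e^2$ I would use the uniform convergence of the summands together with a dominated-convergence / monotonicity bound to control the error from the region near $\alpha = l_n^*/n$ where the summands transition from $0$; a clean way is to note $a_i^{(n)}\ge 0$ everywhere and lower-bound $\frac{1}{n}\sum_{i=l_n^*}^{n}\frac{l_n^*-1}{i}(H_{i-1}-H_{l_n^*-2})$ by replacing $H_m$ with $\log m$ minus an explicit $O(1/m)$ correction, then recognizing the resulting sum as a lower Riemann sum converging to $2/e^2$. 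The main obstacle is the non-uniformity of the harmonic-number asymptotics near the left endpoint $i\approx l_n^*$ (where $H_{i-1}-H_{l_n^*-2}$ is a difference of two large logs that nearly cancel); handling this carefully — e.g. splitting the sum at $i = n/\log n$ or using $H_{i-1}-H_{l_n^*-2} = \log\frac{i-1}{l_n^*-2} + O(1/l_n^*)$ which is valid uniformly once $i\ge l_n^*$ — is the one place that needs genuine care, and everything else is routine calculus.
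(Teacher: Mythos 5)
Your proposal is correct and takes essentially the same route as the paper: you compute $a_i^{(n)}$ explicitly from the single-threshold formula, reduce the asymptotics of $a^{*,(n)}$ to maximizing $\phi(\alpha)=\tfrac{2+\log\alpha}{\alpha}$ (critical point $\alpha=1/e$, value $e$, so $a^{*,(n)}\to 1/e$), and reduce $\overline{a}^{(n)}$ to the integral $\frac{1}{e^2}\int_1^{e^2}\frac{\log x}{x}\,dx = 2/e^2$. The only packaging difference is that you handle the error via $H_m=\log m+\gamma+O(1/m)$ and Riemann-sum uniform convergence, whereas the paper works with direct two-sided integral bounds $\log\frac{k_2+1}{k_1}\le\sum_{i=k_1}^{k_2}\frac{1}{i}\le\log\frac{k_2}{k_1-1}$ and splits the sum at the turning point of $\log(x)/x$ near $x=e$ to keep the monotone comparison valid — but both arguments rest on the same observations and the same calculus.
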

\begin{proof}[Proof of Lemma \ref{lm:prop-a}]
Since the theorem's statement is for the limit as $n\to \infty$, below we consider the case when $n\geq 15$, which implies that $l_n^*\geq 3$. By direct computation, 
$$a_i^{(n)} = \begin{cases}0\quad&1\leq i\leq l_n^*-1\\
\frac{l_n^*-1}{i}\sum_{j=l_n^*}^i\frac{1}{j-1}\geq \frac{l_n^*-1}{i} \log(\frac{i}{l_n^*-1})\quad &l_n^* \leq i\leq n\end{cases}$$
where we use the fact that for $1\leq k_1\leq k_2\in \N$, 
$$\sum_{i=k_1}^{k_2}\frac{1}{i}\geq \int_{k_1}^{k_2+1}\frac{1}{x}dx = \log(\frac{k_2+1}{k_1}).$$

Notice that $x\to \log(x)/x$ is increasing on $(0,e]$ and decreasing on $[e,\infty)$. Since $l_n^*-1\leq n/e^2$, we have $n/(l_n^*-1)>e$, thus there exists $i^*_n\in\{l_n^*,\ldots,n\}$ such that $i^*_n/(l_n^*-1)\leq e <(i^*_n+1)/(l_n^*-1)$
\begin{align*}
\overline{a}^{(n)}&\geq \frac{1}{n}\sum_{i=l_n^*}^n  \frac{l_n^*-1}{i}\log(\frac{i}{l_n^*-1})\\
&\geq \frac{1}{n}\sum_{i=l_n^*}^{i_n^*}  \frac{l_n^*-1}{i}\log(\frac{i}{l_n^*-1})+\frac{1}{n}\sum_{i=l_n^*+2}^n  \frac{l_n^*-1}{i}\log(\frac{i}{l_n^*-1})\\
    &\geq \frac{l_n^*-1}{n}(\int_{1}^{\frac{i^*_n}{l_n^*-1}}\frac{\log(x)}{x} dx +\int_{\frac{i^*_n+1}{n-1}}^{\frac{n}{l_n^*-1}}\frac{\log(x)}{x} dx )\\
    &= \frac{l_n^*-1}{2n}\cdot( \log^2(\frac{i^*_n}{l_n^*-1}) + (\log^2(\frac{n}{l_n^*-1})-\log^2(\frac{i^*_n+1}{l_n^*-1})))\\
    &\geq \frac{l_n^*-1}{2n}\cdot( \log^2(\frac{n}{l_n^*-1}) + \log^2(e-\frac{1}{l_n^*-1}) -\log^2(e+\frac{1}{l_n^*-1})).
\end{align*}
where the last step uses $i^*_n/(l_n^*-1)\leq e <(i^*_n+1)/(l_n^*-1)$. Thus for $l^*_n:=\lceil n/e^2\rceil$, 
$$\liminf_{n\to \infty}\overline{a}^{(n)}\geq \liminf_{n\to \infty} \frac{l^*_n}{2n}\cdot\log^2(\frac{n}{l^*_n}) =2/e^2.$$

For the second part, notice that for $2\leq k_1\leq k_2\in \N$, we have 
$$\sum_{i=k_1}^{k_2}\frac{1}{i}\leq \int_{k_1-1}^{k_2}\frac{1}{x}dx = \log(\frac{k_2}{k_1-1}). $$
For $3\leq l_n^*\leq i\leq n$, using $\log(x)/x\leq 1/e$ for $x\geq 1$, we have 
$$a_i^{(n)} = \frac{l_n^*-1}{i}\sum_{j=l_n^*}^i\frac{1}{j-1}\leq \frac{l_n^*-1}{i}\log(\frac{i-1}{l_n^*-2})\leq \frac{l_n^*-1}{l_n^*-2}\cdot \frac{i-1}{i}\cdot \frac{1}{e}\leq \frac{l_n^*-1}{l_n^*-2}\cdot \frac{1}{e}.$$
Thus we have $a^{*,(n)} \leq \frac{l_n^*-1}{l_n^*-2}\cdot \frac{1}{e}$, and so $\lim\sup a^{*,(n)}\leq \frac{1}{e}$. On the other hand, for $k_n^* = \lceil n/e\rceil$, we have $l_n^* \leq k_n^*\leq n$, 
$$a^{*,(n)}\geq a_{k_n^*}^{(n)}\geq \frac{l_n^*-1}{k_n^*} \log(\frac{k_n^*}{l_n^*-1}).$$
Since $\frac{k_n^*}{l_n^*}\to e$, we have $\lim\inf a^{*,(n)}\geq \frac{1}{e}$.

\end{proof}

\subsection{Samples}

\begin{lemma}\label{lm:error-approximation-guarantee}
Let $\mb p\in \Delta$ be a fixed distribution, $\epsilon>0$. Let $f_{\mb p}:\mc S\to \R$ be a function such that $|A(\mb p,\mb q)-f_{\mb p}(\mb q)|\leq \epsilon$ for all $\mb q\in \mc S$, and assume that there exists $\mb q^*\in \mc S$ such that $f_{\mb p}(\mb q^*)=\sup_{\mb q\in \mc S} f(\mb q)$ then 
$$A(\mb p,\mb q^*)\geq \sup_{\mb q\in \mc S}A(\mb p,\mb q)-2\epsilon.$$
\end{lemma}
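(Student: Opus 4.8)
The plan is to run the standard ``uniform approximation transfers to the argmax'' argument, being slightly careful about the fact that $\sup_{\mb q\in \mc S} A(\mb p,\mb q)$ need not be attained.

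First I would pick, for an arbitrary $\eta>0$, a strategy $\mb q^{\eta}\in \mc S$ with $A(\mb p,\mb q^{\eta})\geq \sup_{\mb q\in \mc S}A(\mb p,\mb q)-\eta$; such a $\mb q^{\eta}$ exists by definition of the supremum. Then I would chain the four inequalities: by the uniform approximation hypothesis applied at $\mb q^*$, $A(\mb p,\mb q^*)\geq f_{\mb p}(\mb q^*)-\epsilon$; by optimality of $\mb q^*$ for $f_{\mb p}$, $f_{\mb p}(\mb q^*)\geq f_{\mb p}(\mb q^{\eta})$; by the approximation hypothesis applied at $\mb q^{\eta}$, $f_{\mb p}(\mb q^{\eta})\geq A(\mb p,\mb q^{\eta})-\epsilon$; and by the choice of $\mb q^{\eta}$, $A(\mb p,\mb q^{\eta})\geq \sup_{\mb q\in \mc S}A(\mb p,\mb q)-\eta$. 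Concatenating gives $A(\mb p,\mb q^*)\geq \sup_{\mb q\in \mc S}A(\mb p,\mb q)-2\epsilon-\eta$.

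Finally, since $\eta>0$ was arbitrary and the left-hand side $A(\mb p,\mb q^*)$ does not depend on $\eta$, I would let $\eta\to 0$ to conclude $A(\mb p,\mb q^*)\geq \sup_{\mb q\in \mc S}A(\mb p,\mb q)-2\epsilon$, which is exactly the claim. (Alternatively, if one is content to assume the sup is attained — which it effectively is in the applications, where the relevant optimization is over a finite horizon — one can skip the $\eta$ bookkeeping and take $\mb q^{\eta}$ to be an exact maximizer.)

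There is essentially no obstacle here: the lemma is a soft ``epsilon-transfer'' statement and the only thing to watch is not to implicitly assume the supremum of $A(\mb p,\cdot)$ is achieved, which the $\eta$-argument handles. The factor $2\epsilon$ is forced because the approximation error is paid once at $\mb q^*$ (to pass from $f_{\mb p}$ back to $A$) and once at the comparison point (to pass from $A$ to $f_{\mb p}$).
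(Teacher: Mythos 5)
Your proof is correct and is essentially the paper's argument: the paper chains $A(\mb p,\mb q^*)\geq f_{\mb p}(\mb q^*)-\epsilon = \sup_{\mb q} f_{\mb p}(\mb q)-\epsilon \geq \sup_{\mb q} A(\mb p,\mb q)-2\epsilon$, where the last step uses that the pointwise bound $f_{\mb p}(\mb q)\geq A(\mb p,\mb q)-\epsilon$ passes to suprema. Your $\eta$-bookkeeping just unwinds that last supremum comparison by hand, so the two arguments are the same.
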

\begin{proof}[Proof of Lemma \ref{lm:error-approximation-guarantee}]
$$A(\mb p,\mb q^*)\geq f_{\mb p}(\mb q^*)-\epsilon = \sup_{\mb q\in \mc S} f_{\mb p}(\mb q)-\epsilon\geq \sup_{\mb q\in \mc S} A({\mb p},\mb q) - 2\epsilon .$$
\end{proof}

\begin{lemma}\label{lm:ratio-index}
    For any $\rho>1$, we have $I(\rho,l)\leq \rho \cdot (I(\rho,l-1)+1)$ for all $l\in \N$, and $I(\rho,l')\geq \frac{\rho^{l'-l}}{2} \cdot I(\rho,l)$ for all $l'\geq l\geq 1$. 
\end{lemma}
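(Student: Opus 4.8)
The plan is to track, for each index $l$, the exponent $k_l := \min\{k \in \{0\}\cup\N : \lceil \rho^k\rceil = I(\rho,l)\}$ that first produces the $l$-th distinct element, so that $I(\rho,l) = \lceil \rho^{k_l}\rceil$ with $k_1 = 0$. First I would record the elementary reformulation $k_l = \min\{k\geq 0 : \rho^k > I(\rho,l-1)\}$ for $l \geq 2$: the map $k\mapsto\lceil\rho^k\rceil$ is nondecreasing, so the $k$'s attaining the value $I(\rho,l-1)$ form a block of consecutive exponents ending at $k_l-1$, and since $I(\rho,l)$ is by definition the \emph{next} distinct value, the first exponent at which $\lceil\rho^k\rceil$ exceeds the integer $I(\rho,l-1)$ — equivalently, at which $\rho^k > I(\rho,l-1)$ — is exactly $k_l$. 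In particular $k_l\geq 1$ for $l\geq 2$, since $I(\rho,l-1)\geq I(\rho,1)=1=\rho^0$.

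For the first inequality, minimality of $k_l$ gives, for $l\geq 2$, that $\rho^{k_l-1}\leq I(\rho,l-1)$, hence $\rho^{k_l}\leq\rho\,I(\rho,l-1)$ and
$$I(\rho,l) = \lceil \rho^{k_l}\rceil \leq \rho^{k_l} + 1 \leq \rho\,I(\rho,l-1) + 1 \leq \rho\bigl(I(\rho,l-1)+1\bigr),$$
the last step using $\rho>1$. For $l=1$ the convention $I(\rho,0)=0$ reduces the claim to $1=I(\rho,1)\leq\rho$, which holds.

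For the second inequality I would first show $k_{l+1}\geq k_l+1$: by the reformulation $\rho^{k_{l+1}} > I(\rho,l) = \lceil\rho^{k_l}\rceil \geq \rho^{k_l}$, and since the exponents are integers this forces $k_{l+1}>k_l$. Iterating gives $k_{l'}\geq k_l+(l'-l)$ for all $l'\geq l\geq 1$, so
$$I(\rho,l') = \lceil \rho^{k_{l'}}\rceil \geq \rho^{k_{l'}} \geq \rho^{\,l'-l}\,\rho^{k_l}.$$
Finally, since $l\geq 1$ implies $k_l\geq 0$ and hence $\rho^{k_l}\geq 1$, we get $I(\rho,l) = \lceil \rho^{k_l}\rceil \leq \rho^{k_l}+1\leq 2\rho^{k_l}$, i.e. $\rho^{k_l}\geq \tfrac12 I(\rho,l)$; substituting yields $I(\rho,l')\geq \tfrac{\rho^{\,l'-l}}{2}\,I(\rho,l)$, as claimed.

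The only delicate point is the reformulation of $k_l$ in the first paragraph, which is precisely where the awkward small-$\rho$ regime (several consecutive ceilings coinciding) is absorbed; once that characterization is in place, both inequalities are one-line ceiling estimates, and I expect no hidden difficulty in the write-up.
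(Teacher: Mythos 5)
Your proof is correct and follows essentially the same route as the paper's: both track the integer exponent $k$ attached to each distinct value $I(\rho,l)$, exploit that $\lceil\rho^{k}\rceil\leq\rho^{k}+1$, and use $\rho^{k}\geq 1$ to absorb the $+1$ into a factor of $2$. The only difference is cosmetic — you parameterize by the \emph{smallest} exponent producing $I(\rho,l)$ while the paper uses the \emph{largest} exponent producing $I(\rho,l-1)$, which are the same block of consecutive exponents viewed from opposite ends — and the paper states the second inequality via $I(\rho,l')\geq\lceil\rho^{k+l'-l}\rceil$ rather than your cleaner $k_{l'}\geq k_l+(l'-l)$.
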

\begin{proof}[Proof of Lemma \ref{lm:ratio-index}]
Recall that by definition, $I(\rho):=\{ \lceil \rho^l \rceil,~l\in \{0\}\cup \N\} $, and has elements $I(\rho)=\{1=I(\rho,1)<I(\rho,2)<\cdots\}$, with $I(\rho,0) = 0$. 

For $l=1$, $I(\rho,l) = 1$ and $I(\rho,l-1)+1 = 1$, the statement holds since $\rho>1$. 

For any $l\in \N$, $l\geq 2$, let $k\in \N$ be the largest integer, such that $ I(\rho,l-1) = \lceil \rho^k\rceil$, then 
$$ I(\rho,l) = \lceil \rho^{k+1}\rceil\leq \rho^{k+1} + 1\leq\rho^{k+1} + \rho = \rho (\rho^{k} + 1) \leq  \rho (\lceil \rho^{k}\rceil + 1)= \rho \cdot (I(\rho,l-1)+1).$$

For the second part, let $I(\rho,l) = \lceil \rho^k\rceil$ for some $k\in \N$, then $I(\rho,l')\geq \lceil \rho^{k+l'-l}\rceil$. 
$$ \frac{I(\rho,l)}{I(\rho,l')} \leq \frac{\lceil \rho^{k}\rceil }{\lceil \rho^{k+l'-l}\rceil }\leq \frac{ \rho^{k}+1}{ \rho^{k+l'-l} }\leq \frac{ 2\rho^{k}}{ \rho^{k+l'-l} } = \frac{2}{\rho^{l'-l}}.$$
\end{proof}

\begin{proof}[Proof of Lemma \ref{lm:robust-local-error}]
    For convenience, denote $s_j := A(\boldsymbol{\delta}^{(j)},\mb q) = \frac{1}{j}\sum_{l=1}^{j}U_{l-1}(\mb q)q_l$ for $j\in \N$. 
    
    For the lower bound, for any $j'\geq j$, we have
$$j's_{j'} = \sum_{l=1}^{j'}U_{l-1}(\mb q)q_l\geq \sum_{l=1}^{j}U_{l-1}(\mb q)q_l= js_j.$$

For the other direction,
$$s_{j'} = \frac{1}{j'}\sum_{l=1}^{j'}U_{l-1}(\mb q)q_l=  \frac{j}{j'}s_j+\frac{1}{j'}\sum_{l=j+1}^{j'}U_{l-1}(\mb q)q_l\leq \frac{j}{j'}s_j+\sum_{l=j+1}^{j'}U_{l-1}(\mb q)\frac{q_l}{l}.$$

Plugging in the definition $U_i(\mb q):=\prod_{l=1}^{i}(1-q_l/l)$, the second term above becomes 
$$\sum_{l=j+1}^{j'}U_{l-1}(\mb q)\frac{q_l}{l} = U_j(\mb q)\cdot (1-\prod_{l=j+1}^{j'}(1-\frac{q_l}{l}))\leq 1\cdot (1-\frac{j}{j+1}\cdot \frac{j+1}{j+2}\cdots\frac{j'-1}{j'})=1-\frac{j}{j'}$$
thus we get 
$$s_{j'}\leq \frac{j}{j'}s_j + (1-\frac{j}{j'}).$$
\end{proof}

\begin{proof}[Proof of Corollary \ref{cor:blocking-error}]
For convenience, denote $s_i := \frac{1}{i}\sum_{l=1}^{i}U_{l-1}(\mb q)q_l$ and $k_i:=I(\rho,i)$ for $i\in \N$, $k_0=0$. With these notations, 
$$A(\mb p,\mb q) = \sum_{i=1}^{\infty}p_i s_i = \sum_{l=1}^{\infty} \sum_{j=k_{l-1}+1}^{k_l} p_js_j.$$

By Lemma \ref{lm:ratio-index}, $\rho(k_{l-1}+1) \geq k_{l}$ for all $l\in \N$, thus by Lemma \ref{lm:robust-local-error} we get
$$A(\mb p,\mb q) \leq \rho \cdot \sum_{l=1}^{\infty} \sum_{j=k_{l-1}+1}^{k_l} p_js_{k_l}=\rho \cdot \sum_{l=1}^{\infty}s_{k_l} \cdot (\sum_{j=k_{l-1}+1}^{k_l} p_j) = \rho \cdot \sum_{l=1}^{\infty}s_{k_l} \tilde{p}_{k_{l}} = \rho \cdot A(\tilde{\mb p},\mb q).$$

For the upper bound, by Lemma \ref{lm:robust-local-error} again,
$$A(\mb p,\mb q) \geq \sum_{l=1}^{\infty} \sum_{j=k_{l-1}+1}^{k_l} p_j(s_{k_l}-(1-\frac{1}{\rho}))=\sum_{l=1}^{\infty} \sum_{j=k_{l-1}+1}^{k_l} p_js_{k_l}-(1-\frac{1}{\rho}) = A(\tilde{\mb p},\mb q)-(1-\frac{1}{\rho}).$$

Thus we get the following bounds
$$A(\mb p,\mb q)\geq A(\tilde{\mb p},\mb q)-(1 - \frac{1}{\rho})\geq A(\tilde{\mb p},\mb q)-(\rho-1),\quad A(\mb p,\mb q)\leq \rho A(\tilde{\mb p},\mb q)\leq  A(\tilde{\mb p},\mb q)+(\rho-1)$$
which together implies that for all $\mb q$, 
$$|A(\mb p,\mb q)- A(\tilde{\mb p},\mb q)|\leq \rho - 1.$$
\end{proof}

\begin{lemma}\label{lm:subseq-i-lambda-i}
    Let $\mb p\in \Delta$ be a distribution. Let $k_1\leq k_2\in \N$ be two indices such that $p_i = 0$ for $i=k_1,\ldots,k_2-1$. If $|G-k_2\lambda_{k_2}(\mb p)|\leq \epsilon$ for some $G\in \R$, then $|\frac{i}{k_2}G-i\lambda_i(\mb p)|\leq \epsilon$ for all $i=k_1,\ldots,k_2-1$.
\end{lemma}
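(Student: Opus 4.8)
The key observation is that $\lambda_i(\mb p)$ is constant on the gap $\{k_1,\ldots,k_2-1\}$. Recall $\lambda_i(\mb p) = \sum_{l=i}^{\infty} p_l/l$. For any $i$ with $k_1 \le i \le k_2-1$, the terms indexed by $l = i, i+1, \ldots, k_2-1$ all lie in the range $\{k_1,\ldots,k_2-1\}$, where by hypothesis $p_l = 0$. Hence those terms vanish and
\[
\lambda_i(\mb p) = \sum_{l=k_2}^{\infty} \frac{p_l}{l} = \lambda_{k_2}(\mb p),
\qquad i = k_1,\ldots,k_2-1,
\]
so in particular $i\lambda_i(\mb p) = i\,\lambda_{k_2}(\mb p)$.

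\textbf{Finishing the bound.} With this identity, for each such $i$ we can factor out $i/k_2$:
\[
\Bigl|\tfrac{i}{k_2}G - i\lambda_i(\mb p)\Bigr|
= \Bigl|\tfrac{i}{k_2}G - i\,\lambda_{k_2}(\mb p)\Bigr|
= \tfrac{i}{k_2}\,\bigl|G - k_2\lambda_{k_2}(\mb p)\bigr|
\le \tfrac{i}{k_2}\,\epsilon \le \epsilon,
\]
where the first inequality is the hypothesis $|G - k_2\lambda_{k_2}(\mb p)| \le \epsilon$ and the last uses $i \le k_2-1 < k_2$. This is exactly the claimed bound for all $i = k_1,\ldots,k_2-1$.

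\textbf{Remarks.} There is essentially no obstacle here; the only thing to be careful about is the indexing — one must verify that the summation range $l = i,\ldots,k_2-1$ is genuinely contained in the set $\{k_1,\ldots,k_2-1\}$ on which $p_l$ vanishes, which holds precisely because $i \ge k_1$. The statement is used in the sampling analysis as the bridge from Lemma~\ref{lm:approx-error-prop}: when the estimate $\mb G$ comes from the empirical distribution $\hat{\mb p}$ (supported inside the support of $\mb p$), a blocked version of $\mb p$ creates exactly such gaps, and this lemma propagates accuracy on the block endpoints $i = k_2$ to accuracy on the whole preceding block $i = k_1,\ldots,k_2-1$, so that only one parameter per nonempty block needs to be estimated well.
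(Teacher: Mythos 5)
Your proof is correct and follows essentially the same route as the paper's: observe that $\lambda_i(\mb p)=\lambda_{k_2}(\mb p)$ on the gap $\{k_1,\ldots,k_2-1\}$ because the intervening $p_l$ vanish, then factor out $i/k_2$ and apply the hypothesis together with $i/k_2\le 1$.
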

\begin{proof}[Proof of Lemma \ref{lm:subseq-i-lambda-i}]
Since $p_i = 0$ for $i=k_1,\ldots,k_2-1$, $\lambda_i(\mb p) = \lambda_{k_2}(\mb p)$ for $i=k_1,\ldots,k_2-1$. Thus
    $$|\frac{i}{k_2}G- i\lambda_{i}(\mb p)|= \frac{i}{k_2}\cdot |G - k_2\lambda_{i}(\mb p)|=\frac{i}{k_2}\cdot |G - k_2\lambda_{k_2
    }(\mb p)|\leq \frac{i}{k_2}\cdot \epsilon\leq \epsilon.$$
\end{proof}

\begin{proof}[Proof of Lemma \ref{lm:approx-error-prop}]
For any $\mb q\in \mc S$, notice that
\begin{equation}\label{eq:sum-U}
    \forall~k\in \N,~\sum_{i=1}^{k}(U_{i-1}(\mb q)q_i/i) = 1-\prod_{i=1}^{k}(1-\frac{q_i}{i})\leq 1\implies \sum_{i=1}^{\infty}(U_{i-1}(\mb q)q_i/i)\leq 1
\end{equation}
Thus for any $\mb q\in \mc S$, 
\begin{align*}
    |A(\mb p,\mb q) -\sum_{i=1}^{n}U_{i-1}(\mb q)\frac{q_i}{i}\cdot G_i|&=|\sum_{i=1}^{n}(U_{i-1}(\mb q)q_i/i) \cdot (i\lambda_i(\mb p)-G_i)+ \sum_{i=n+1}^{\infty}(U_{i-1}(\mb q)q_i/i) \cdot i\lambda_i(\mb p)|\\
    &\leq \sum_{i=1}^{n}(U_{i-1}(\mb q)q_i/i) \cdot |i\lambda_i(\mb p)-G_i| + \sum_{i=n+1}^{\infty}(U_{i-1}(\mb q)q_i/i) \cdot |i\lambda_i(\mb p)|\\
    &\leq \epsilon \cdot \sum_{i=1}^{\infty}(U_{i-1}(\mb q)q_i/i)\leq \epsilon
\end{align*}
\end{proof}

\begin{proof}[Proof of Theorem \ref{thm:sample-ub}]
The proof consists of two parts. First, we show that the good event \ref{eq:good-event} holds with high probability. Then, we show that under this event, the output $\hat{\mb q}$ is at least $\epsilon$-suboptimal. 

For convenience, denote $\overline{\mb p} = \block(\mb p,\rho)$, and $k_i = I(\rho,i)$ for $i\in \N$, $k_0 = 0$, and $N_{\max} = I(\rho, l_{\max})$. 

\textbf{Step 1:} Recall that we define the following event as good (denoted as \ref{eq:good-event}), 
\begin{equation*}
    \sum_{i=N_{\max}+1}^{\infty} p_i \leq \epsilon/12,\text{ and } |G_i - i\lambda_i(\overline{\mb p})|\leq \epsilon/4~i=I(\rho,l),~l\in [l_{\max}].
\end{equation*}

For the first part of the event,
\begin{align*}
    \p[\sum_{i=N_{\max}+1}^{\infty} p_i > \epsilon/12]&\leq \p[\sum_{i=\max_{i\in [m]} N^{(i)}+1}^{\infty} p_i > \epsilon/12]\quad \text{since } N_{\max}\geq \max_{j\in [m]} N^{(j)}\\
    &= \p[\sum_{i=N^{(j)}+1}^{\infty} p_i > \epsilon/12,~\forall ~j\in [m]] \\
    &= \p[\sum_{i=N+1}^{\infty} p_i > \epsilon/12]^m\quad \text{by assumption }N^{(j)}\sim_{iid} \mb p,~\forall j\in [m]\\
    &\leq (1-\epsilon/12)^{m}\leq e^{-\epsilon m/12}.
\end{align*}

For the second part, we consider two cases: $i\leq \rho T$, which we apply the standard Hoeffding or Bernstein concentration bound, and $i> \rho T$ (if $N_{\max}> \rho T$), which we show by bounding the tail of $\hat{\mb p}$. 

For $i\leq  \rho T$: first, notice that we can rewrite $\hat{\mb p}$ in the following way: 
$$\hat{\mb p} = \frac{1}{m}\sum_{j=1}^{m} \sum_{l=1}^{l_{\max}} \mb 1[k_{l-1}+1\leq N^{(j)}\leq k_l]\cdot \boldsymbol{\delta}^{(k_l)}= \frac{1}{m}\sum_{j=1}^{m} \block(\boldsymbol{\delta}^{N^{(j)}},\rho).$$
Then, for any $i = I(\rho,l)$ for some $l\in [l_{\max}]$, by linearity of $\lambda_i$
$$G_i= i\lambda_i(\hat{\mb p}) =\frac{1}{m}\sum_{j=1}^{m} i\lambda_i(\block(\boldsymbol{\delta}^{(N^{(j)})},\rho)).$$
By linearity of the mapping $\block$, if $N^{(j)}\sim \mb p$ for all $j\in [m]$, 
$$\E[G_i] = \E[i\lambda_i(\block(\boldsymbol{\delta}^{(N)},\rho))] = i\lambda_i(\block(\E[\boldsymbol{\delta}^{(N)}],\rho)) = i\lambda_i(\block(\mb p,\rho)) = i\lambda_i(\overline{\mb p}).$$

Thus, $G_i$ is an unbiased estimator for $i\lambda_i(\overline{\mb p})$ for $i=I(\rho,l)$, $l\in [l_{\max}]$. Next, we show that $G_i$'s are concentrated around their means. To do so, we first notice that for any distribution $\mb p'\in \Delta$ and any $i\in \N$,
$$i\lambda_i(\mb p') = \sum_{j=i}^{\infty} \frac{i}{j}p'_j\leq  \sum_{j=i}^{\infty} p'_j\leq 1$$
and since $G_i = i\lambda_i(\hat{\mb p})$, implying that $0\leq G_i\leq 1$ for all $i=I(\rho,l)$, $l\in [l_{\max}]$. By Hoeffding's inequality (Lemma \ref{lm:hoeffding})
$$\p[|G_i - i\lambda_i(\overline{\mb p})|> \epsilon/4]\leq 2e^{-2m\epsilon^2}.$$

Next, we will bound $\sum_{l=1}^{l_{\max}} Var[G_i]$, thereby showing that most of the $G_i$'s have small variance, which allows tighter concentration bound such as Bernstein's inequality. 

To be precise, for $i = k_l$ for some $l\in [l_{\max}]$, and $j\in \N$ where $k_{s-1}<j\leq k_{s}$ for some $s\in \N$, 
$$i\lambda_i(\block(\boldsymbol{\delta}^{(j)},\rho))=\frac{k_l}{k_s}\mb 1[s\geq l].$$

Since for $N\sim \mb p$, by our definition that $\overline{\mb p} = \block(\mb p,\rho)$, we have
$\p[k_{l'-1}+1 \leq N\leq k_{l'}]=\overline{p}_{k_{l'}}$, thus
$$\E[(i\lambda_i(\block(\boldsymbol{\delta}^{(N)},\rho)))^2] = \sum_{s=l}^{\infty} \overline{p}_{k_s}(\frac{k_l}{k_s})^2\leq \sum_{s=l}^{\infty} \overline{p}_{k_s}\cdot \frac{4}{\rho^{2(s-l)}}. $$
where the last step is by Lemma \ref{lm:ratio-index}. In particular, we get
$$\E[(k_1\lambda_{k_1}(\block(\boldsymbol{\delta}^{(N)},\rho)))^2] + (1-\frac{1}{\rho^2})\sum_{l=2}^{\infty}\E[(k_l\lambda_{k_l}(\block(\boldsymbol{\delta}^{(N)},\rho)))^2]\leq \sum_{s=1}^{\infty} 4\overline{p}_{k_s} = 4$$
which implies that 
$$ \sum_{l=1}^{l_{\max}} Var[k_l\lambda_{k_l}(\block(\boldsymbol{\delta}^{(N)},\rho))] \leq \sum_{l=1}^{l_{\max}}\E[(k_l\lambda_{k_l}(\block(\boldsymbol{\delta}^{(N)},\rho)))^2]\leq \frac{4}{1-1/\rho^2}\leq \frac{25}{\epsilon}.$$

In particular, at most $\min(300/\epsilon^2,l_{\max})$ of them can have variance larger than $\epsilon/12$. 
For the rest of these terms, we use Bernstein's inequality (Lemma \ref{lm:bernstein})
$$\p[|G_i - i\lambda_i(\overline{\mb p})|> \epsilon/4]\leq 2e^{-m\epsilon/8}.$$

Next, we consider the second case where $i>\rho T$. Notice that 
$$\sum_{j=1}^m \mb 1[N^{(j)}\geq T+1] \sim Bino(m,\sum_{j=T+1}^{\infty} p_j),$$
Under the assumption, $\sum_{j=T+1}^{\infty} p_j \leq \epsilon/12$, thus by Chernoff bound (Lemma \ref{lm:chernoff}) we have 
$$\p[\frac{1}{m}\sum_{j=1}^m \mb 1[N^{(j)}\geq T+1]  \geq \epsilon/6]\leq e^{-m\epsilon/18}.$$

In addition, for any $i =k_l$ for some $l\in [l_{\max}]$, by triangle inequality,
$$\sum_{j=i}^{\infty} |\hat{p}_j - \overline{p}_j| \leq \sum_{j=i}^{\infty} \hat{p}_j+\sum_{j=i}^{\infty}\overline{p}_j\leq \frac{1}{m}\sum_{j=1}^m \mb 1[N^{(j)}\geq k_{l-1}+1]+\sum_{j=k_{l-1}+1}^{\infty}\overline{p}_j.$$

However, for any $k_l>\rho T$, we have $k_{l-1}\geq T$. This is because $k_l>\rho T$ implies $l\geq 2$, in particular $l-1\geq 1$, and so $k_{l-1}\in I(\rho)$. Now let $l_0$ be the largest integer such that $k_{l-1} = \lceil \rho^{l_0}\rceil$, then $k_{l} = \lceil \rho^{l_0+1}\rceil$. By assumption, $ \lceil \rho^{l_0+1}\rceil> \rho T$. If $ \lceil \rho^{l_0}\rceil< T$, then $\rho^{l_0}\leq T-1$, which implies $\lceil \rho^{l_0+1}\rceil\leq \lceil \rho T -\rho\rceil\leq  \rho T -\rho+1 <\rho T$ contradiction. Thus, under the event $\{\frac{1}{m}\sum_{j=1}^m \mb 1[N^{(j)}\geq T+1]<\frac{\epsilon}{6}\}$, which has probability at least $1-e^{-m\epsilon/18}$, for all $i\in I(\rho)$ such that $i>\rho T$, 
$$\sum_{j=i}^{\infty} |\hat{p}_j - \overline{p}_j| \leq \frac{1}{m}\sum_{j=1}^m \mb 1[N^{(j)}\geq T+1]+\sum_{j=T+1}^{\infty}\overline{p}_j\leq \frac{\epsilon}{6} +\frac{\epsilon}{12}\leq \frac{\epsilon}{4}.$$

Thus 
$$|G_i -i\lambda_i(\overline{\mb p})| = |\sum_{j=i}^{\infty} \frac{i}{j}(\hat{p}_j-\overline{p}_j)| \leq \sum_{j=i}^{\infty} \frac{i}{j}|\hat{p}_j-\overline{p}_j|\leq \sum_{j=i}^{\infty} |\hat{p}_j-\overline{p}_j| \leq \epsilon/4. $$

Since the indices in the first case satisfies $i\leq \rho T$ and $i\in I(\rho)$, i.e. $i = \lceil \rho^l\rceil$ for some $l\in \N$. If $T\geq 2$, the number of terms in the first case is upper bounded by 
$$1+\frac{\log(T)}{\log(\rho)} \leq 1+\frac{5\log(T)}{\epsilon}\leq \frac{10\log(T)}{\epsilon}$$
where we use $\log(\rho) = \log(1+ \epsilon/4)\geq 0.2\epsilon$ for $0<\epsilon<1$\footnote{$\log(1+x)\geq 0.8x$ for $x\in [0,0.5]$.}. Thus, if $m\geq \max( \frac{18}{\epsilon}\log(\frac{50\log(T)}{\epsilon\delta}),\frac{1}{2\epsilon^2}\log(\frac{1200}{\epsilon^2\delta}))$,
$$\p[\mc E^*]\geq 1-e^{-\epsilon m/12}- \frac{600}{\epsilon^2}\cdot e^{-2m\epsilon^2}-\frac{20\log(T)}{\epsilon}\cdot e^{-\epsilon m/8}-e^{-\epsilon m/18}\geq 1-\delta.$$

If $T=1$, since $\lceil \rho \rceil>\rho\cdot T$, there is no term in the first case, with $m\geq \frac{18}{\epsilon}\log(\frac{2}{\delta})$
$$\p[\mc E^*]\geq 1-e^{-\epsilon m/12}-e^{-\epsilon m/18}\geq 1-\delta.$$

\textbf{Step 2:} If the event \ref{eq:good-event} holds, the following series of approximation results holds. 
\begin{itemize}
    \item The sequence $\mb G$ approximates the sequence $(i\lambda_i(\overline{\mb p}))_{i\in \N}$ well. Notice that under \ref{eq:good-event}, this is already true for $i\in I(\rho)\cap [N_{\max}]$, and here we show that this is true for all $i\in \N$. First, notice that $\hat{\mb p}$ is supported only on $I(\rho)\cap[N_{\max}]$. Thus, for $i\in [N_{\max}]$ but $i\notin I(\rho)$, let $i'$ be the smallest index such that $i'\geq i$ and $i'\in I(\rho)$, then $i'\leq N_{\max}$, and by the second part in \ref{eq:good-event}, $|G_{i'} - i'\lambda_{i'}(\overline{\mb p})|\leq \epsilon/4 $. By Lemma \ref{lm:subseq-i-lambda-i},
 $$|G_i - i\lambda_i(\overline{\mb p})|\leq \epsilon/4.$$
    \item $\sum_{i=1}^{N_{\max}}U_{i-1}(\mb q)\frac{q_i}{i}\cdot G_i$ approximates $A(\overline{\mb p},\mb q)$ well. This follows by observing that the sequence $\mb G$ satisfies the conditions in Lemma \ref{lm:approx-error-prop} with $n = N_{\max}$, which implies that for any $\mb q\in \mc S$, 
$$|A(\overline{\mb p},\mb q) -\sum_{i=1}^{N_{\max}}U_{i-1}(\mb q)\frac{q_i}{i}\cdot G_i|\leq \epsilon/4. $$
    \item $A(\overline{\mb p},\mb q)$ approximates $A(\mb p,\mb q)$ well since $\overline{\mb p}$ is the blocked version of $\mb p$: by Corollary \ref{cor:blocking-error},
    $$|A(\mb p,\mb q)- A(\overline{\mb p},\mb q)|\leq \rho - 1 = \epsilon/4.$$
\end{itemize}
Given these approximations, by triangle inequality,
    $$|A(\mb p,\mb q) -\sum_{i=1}^{N_{\max}}U_{i-1}(\mb q)\frac{q_i}{i}\cdot G_i|\leq |A(\mb p,\mb q)- A(\overline{\mb p},\mb q)| + |A(\overline{\mb p},\mb q) -\sum_{i=1}^{N_{\max}}U_{i-1}(\mb q)\frac{q_i}{i}\cdot G_i|\leq \epsilon/2. $$
By Lemma \ref{lm:error-approximation-guarantee}, $\hat{\mb q}$ is at least $\epsilon$-suboptimal. 
    
\end{proof}

\begin{proof}[Proof of Theorem \ref{thm:sample-lb}]
For convenience, define $A^*:\Delta \to [0,1]$ as 
$$A^*(\mb p) = \sup_{\mb q\in \mc S} A(\mb p,\mb q).$$
For $n\geq 3$ and $s\in [0,1]$, let $\mb w^{(n,s)}$ be the distribution supported on $\{1,n\}$ with $w^{(n,s)}_1 = s$ and $w^{(n,s)}_n = 1-s$. For any $\mb q\in \mc S$, by linearity, 
$$A(\mb w^{(n,s)},\mb q) = q_1s+(1-s)A(\boldsymbol{\delta}^{(n)},\mb q)= q_1s+\frac{1-s}{n}(q_1+(1-q_1)\sum_{i=2}^{n}\prod_{j=2}^{i-1}(1-q_j/j)q_{i}).$$
Let $\mb q^{(n)}_*$ be an optimal strategy for $\boldsymbol{\delta}^{(n)}$, with $A_n:=A(\boldsymbol{\delta}^{(n)},\mb q^{(n)}_*)$. From \cite{GilbertMosteller1966}, for $n\geq 3$, we have $q^{(n)}_{*,1} = 0$, and so 
$$\sup_{q_2,\ldots,q_n\in [0,1]}\sum_{i=2}^{n}\prod_{j=2}^{i-1}(1-q_j/j)q_{i}=\sum_{i=2}^{n}\prod_{j=2}^{i-1}(1-q^{(n)}_{*,j}/j)q^{(n)}_{*,i}=nA_n.$$
Thus 
$$\sup_{q_2,\ldots,q_n\in [0,1]}A(\mb w^{(n,s)},\mb q) = q_1s+(1-s)(q_1/n+(1-q_1)A_n).$$
To maximize the above expression over $q_1\in [0,1]$, it's easy to see that for $s\geq s_n^*:=\frac{A_n-\frac{1}{n}}{1+A_n-\frac{1}{n}}\approx \frac{1}{1+e}$, taking $q_1=1$ is optimal, otherwise $q_1=0$ is optimal. Take $0<\epsilon<\min(s_n^*,1-s_n^*)$, using $A_n\geq 1/e$, 
$$A^*(\mb w^{(n,s_n^*+\epsilon)}) = (1-\frac{1}{n})(s_n^*+\epsilon)+\frac{1}{n} =    A^*(\mb w^{(n,s_n^*)}) + (1-\frac{1}{n})\epsilon\geq  A^*(\mb w^{(n,s_n^*)}) + \epsilon/e,$$
$$ A^*(\mb w^{(n,s_n^*-\epsilon)}) =(1-s_n^*+\epsilon)A_n= A^*(\mb w^{(n,s_n^*)}) + \epsilon A_n\geq  A^*(\mb w^{(n,s_n^*)}) + \epsilon/e.$$
Thus for any $\mb q\in \mc S$,
$$\frac{1}{2}(A(\mb w^{(n,s_n^*+\epsilon)},\mb q) + A(\mb w^{(n,s_n^*-\epsilon)},\mb q)) = A(\mb w^{(n,s_n^*)},\mb q) \leq A^*(\mb w^{(n,s_n^*)})\leq \frac{1}{2}(A^*(\mb w^{(n,s_n^*+\epsilon)})+A^*(\mb w^{(n,s_n^*-\epsilon)}))-\epsilon/e.$$
That is, there is no $\mb q\in \mc S$, such that at the same time 
$$A(\mb w^{(n,s_n^*+\epsilon)},\mb q) \geq A^*(\mb w^{(n,s_n^*+\epsilon)}) -\epsilon/3, \quad A(\mb w^{(n,s_n^*-\epsilon)},\mb q) \geq A^*(\mb w^{(n,s_n^*-\epsilon)}) -\epsilon/3.$$

As a result, consider $S\sim Uniform\{s_n^*+\epsilon,s_n^*-\epsilon\}$, the algorithm is given $m$ iid samples from $\mb w^{(n,S)}$. If an algorithm can find a $\mb q\in \mc S$ which is at least $\epsilon/3$ suboptimal for $\mb w^{(n,S)}$, with probability at least $2/3$, using $m$ samples, then this algorithm can correctly predict which value $S$ takes with probability at least $2/3$ using $m$ samples. 

For simplicity, for any distribution $\mb p$, use $\otimes^m \mb p$ to denote $\mb p\times \cdots \times \mb p$ the $m$-product distribution of $\mb p$. Then from classical results (textbook reference \cite{wainwright_2019}), the optimal test for $S$ makes mistake with probability $p_e$,
$$p_e = \frac{1}{2}(1-D_{TV}(\otimes^m \mb w^{(n,s_n^*+\epsilon)},\otimes^m \mb w^{(n,s_n^*-\epsilon)})).$$
To achieve $p_e\leq 1/3$, $m$ needs to be large enough such that $D_{TV}(\otimes^m \mb w^{(n,s_n^*+\epsilon)},\otimes^m \mb w^{(n,s_n^*-\epsilon)}) \geq 1/3$.
\begin{align*}
    &\quad 2D^2_{TV}(\otimes^m \mb w^{(n,s_n^*+\epsilon)},\otimes^m \mb w^{(n,s_n^*-\epsilon)})\\
    &\leq D_{KL}(\otimes^m \mb w^{(n,s_n^*+\epsilon)}||\otimes^m \mb w^{(n,s_n^*-\epsilon)}) \quad \text{Pinsker's inequality}\\
    &= mD_{KL}(\mb w^{(n,s_n^*+\epsilon)}||\mb w^{(n,s_n^*-\epsilon)})\quad \text{tensorization for KL-divergence}
\end{align*}
By reverse Pinsker inequality \cite{sason2015reverse}, we have for $\epsilon<s_n^*$, 
$$D_{KL}(\mb w^{(n,s_n^*+\epsilon)}||\mb w^{(n,s_n^*-\epsilon)})\leq \frac{2}{\min(s_n^*-\epsilon,1-s_n^*+\epsilon)}D^2_{TV}(\mb w^{(n,s_n^*+\epsilon)},\mb w^{(n,s_n^*-\epsilon)})=\frac{2\epsilon^2}{\min(s_n^*-\epsilon,1-s_n^*+\epsilon)}.$$
However, for any $n\geq 3$, we have $A_n-\frac{1}{n}<A_n<1$, and so $s_n^* =\frac{A_n-\frac{1}{n}}{1+A_n-\frac{1}{n}}<\frac{1}{2} $, simplifying the above to 
$$D_{KL}(\mb w^{(n,s_n^*+\epsilon)}||\mb w^{(n,s_n^*-\epsilon)})\leq \frac{2\epsilon^2}{s_n^*-\epsilon}.$$

Thus we have the following bound:
$$m\geq \frac{s_n^*-\epsilon}{9\epsilon^2}.$$
Taking $c_n =s_n^*  = \frac{A_n-\frac{1}{n}}{1+A_n-\frac{1}{n}}$, and using $\lim_{n\to \infty} A_n = 1/e$, we get $\lim_{n\to \infty}c_n = \frac{1}{e+1}$. 
    
\end{proof}

\begin{theorem}\label{thm:sample-lb-large-eps}
Let $0<\epsilon<\frac{1}{2e}$, then for $n\in \N$ large enough, it requires at least $\Omega(\log(\frac{\log(n)}{\epsilon})$ bits to write down an $\epsilon$ suboptimal strategy $\mb q\in \mc S$ for distributions supported on $[n]$. 
\end{theorem}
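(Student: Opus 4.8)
The plan is to recast the statement as a \emph{covering} lower bound. Writing down an $\epsilon$-suboptimal strategy ``for distributions supported on $[n]$'' using $b$ bits amounts to fixing, independently of the realized $\mb p$, a decoding map $\{0,1\}^b\to\mc S$ (recall by Lemma \ref{lm:general_algo_to_q} it is without loss to restrict to $\mc S$); its image $\mc Q$ has $|\mc Q|\le 2^b$ and must be an $\epsilon$-cover of $\Delta^{(n)}$, i.e. for every $\mb p\in\Delta^{(n)}$ there is $\mb q\in\mc Q$ with $A(\mb p,\mb q)\ge \sup_{\mb q'\in\mc S}A(\mb p,\mb q')-\epsilon$. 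So it suffices to prove that every $\epsilon$-cover of $\Delta^{(n)}$ has size $\Omega_\epsilon(\log n)$: then $b\ge\log_2|\mc Q|=\Omega(\log\log n)$, and since $\epsilon$ is a fixed constant $<\tfrac1{2e}$ this is $\Omega(\log(\log(n)/\epsilon))$.

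For the hard instances I would use delta distributions on a doubling grid. Fix a constant $T_0=T_0(\epsilon)$ large enough that $A_{T_0}<2(\tfrac1e-\epsilon)$, where $A_T:=\sup_{\mb q\in\mc S}A(\boldsymbol{\delta}^{(T)},\mb q)$ is the classical secretary optimum; this is possible precisely because $A_T$ decreases to $1/e$ and $1/e<2(\tfrac1e-\epsilon)\iff\epsilon<\tfrac1{2e}$. Set $T_j:=2^jT_0$ for $j=0,\dots,J$ with $J:=\lfloor\log_2(n/T_0)\rfloor=\Theta(\log n)$, so all $\boldsymbol{\delta}^{(T_j)}\in\Delta^{(n)}$. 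By Lemma \ref{lm:secretary_prob}, $A(\boldsymbol{\delta}^{(T)},\mb q)=\tfrac1T S_T(\mb q)$ where $S_l(\mb q):=\sum_{i=1}^{l}U_{i-1}(\mb q)q_i$ (with $S_0:=0$), and $S_T/T\le A_T\le A_{T_0}$ whenever $T\ge T_0$.

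The key lemma is a counting bound via the ``acceptance budget'': since $\tfrac{S_l-S_{l-1}}{l}=U_{l-1}(\mb q)\tfrac{q_l}{l}=U_{l-1}(\mb q)-U_l(\mb q)$ telescopes, $\sum_{l\ge1}\tfrac{S_l-S_{l-1}}{l}=1-U_\infty(\mb q)\le1$ (this is Equation \eqref{eq:sum-U}). Suppose some $\mb q$ is $\epsilon$-suboptimal for $\boldsymbol{\delta}^{(T_{j_1})},\dots,\boldsymbol{\delta}^{(T_{j_k})}$ with $j_1<\cdots<j_k$; then $S_{T_{j_i}}\ge T_{j_i}(A_{T_{j_i}}-\epsilon)\ge T_{j_i}(\tfrac1e-\epsilon)$, while always $S_{T_{j_{i-1}}}\le T_{j_{i-1}}A_{T_0}$ and $T_{j_{i-1}}/T_{j_i}\le\tfrac12$. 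Splitting the budget over the blocks $(0,T_{j_1}],(T_{j_1},T_{j_2}],\dots$ and bounding each block's contribution below by $(S_{T_{j_i}}-S_{T_{j_{i-1}}})/T_{j_i}$ gives
\[1\ \ge\ \frac{S_{T_{j_1}}}{T_{j_1}}+\sum_{i=2}^{k}\frac{S_{T_{j_i}}-S_{T_{j_{i-1}}}}{T_{j_i}}\ \ge\ \Big(\tfrac1e-\epsilon\Big)+(k-1)\Big(\big(\tfrac1e-\epsilon\big)-\tfrac12 A_{T_0}\Big).\]
By the choice of $T_0$ the bracket $c_\epsilon:=(\tfrac1e-\epsilon)-\tfrac12 A_{T_0}$ is a positive constant, so $k\le 1+\tfrac1{c_\epsilon}=:k_\epsilon=O_\epsilon(1)$. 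Hence each $\mb q$ in an $\epsilon$-cover is $\epsilon$-suboptimal for at most $k_\epsilon$ of the $J+1$ distributions $\boldsymbol{\delta}^{(T_j)}$, so a cover needs $|\mc Q|\ge (J+1)/k_\epsilon=\Omega_\epsilon(\log n)$, completing the reduction.

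I expect the main obstacle to be pinning down the budget lemma so that it bites for the full range $\epsilon<\tfrac1{2e}$: the doubling ratio $2$ is exactly what makes the per-block contribution $(\tfrac1e-\epsilon)-\tfrac12 A_{T_0}$ positive there (a ratio $\rho$ would only give the weaker threshold $\epsilon<\tfrac1e-\tfrac1{\rho e}$, so this route cannot reach beyond $\epsilon<\tfrac1e$). Everything else is routine, relying only on the monotone convergence $A_T\downarrow 1/e$ already used in the proofs of Theorems \ref{thm:secretary_approx} and \ref{thm:sample-lb}. I would also remark that, as the theorem concerns $n\to\infty$ with $\epsilon$ fixed, $\Omega(\log\log n)$ already equals $\Omega(\log(\log(n)/\epsilon))$; isolating a genuine $\log(1/\epsilon)$ term as $\epsilon\to0$ would require a family sharper than delta distributions, whose $\epsilon$-optimal strategy sets are too large for the present counting argument to gain an $\epsilon$-factor.
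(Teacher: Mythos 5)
Your proof is correct, and it takes a genuinely different route from the paper's. The paper argues \emph{pairwise} incompatibility: it mixes two deltas, computes $\theta\bigl(\tfrac12\boldsymbol{\delta}^{(k_1)}+\tfrac12\boldsymbol{\delta}^{(k_2)}\bigr)=\tfrac{1+k_1/k_2}{2}$, and then invokes Theorem \ref{thm:secretary_approx} to claim $A(\boldsymbol{\delta}^{(k_1)},\mb q)+A(\boldsymbol{\delta}^{(k_2)},\mb q)\le\tfrac{1+k_1/k_2}{e}$, concluding that no single $\mb q$ is $\epsilon$-suboptimal for both whenever $\tfrac{k_1}{k_2}<1-2e\epsilon$, and hence taking a geometric grid at the $\epsilon$-dependent ratio $\tfrac{1}{1-2e\epsilon}$ to get $\Omega(\tfrac{\log n}{\epsilon})$ mutually incompatible instances. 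However, Theorem \ref{thm:secretary_approx} only yields $A(\mb p,\mb q)\le\theta(\mb p)$, so the sum is bounded by $1+k_1/k_2$, not $\tfrac{1+k_1/k_2}{e}$; with the corrected bound the contradiction evaporates (since $2/e-2\epsilon<1\le1+k_1/k_2$). Indeed the pairwise claim is false for large $k_1$: e.g.\ for $k_1=100,k_2=200$, the single threshold $\mb q^{(37)}$ is roughly $0.06$-suboptimal for both deltas, violating the claimed threshold $\tfrac{1/2}{2e}\approx0.09$. Your weaker ``bounded incompatibility degree'' replacement — the telescoping budget $\sum_l U_{l-1}(\mb q)q_l/l\le1$ implies a fixed $\mb q$ can be $\epsilon$-suboptimal for at most $O_\epsilon(1)$ deltas on a doubling grid once $\epsilon<\tfrac1{2e}$ — is exactly what is needed, and it repairs the proof at the cost of losing the $\tfrac1\epsilon$ factor in the instance count (which, as you correctly observe, is immaterial here since the theorem is asymptotic in $n$ with $\epsilon$ fixed, so $\Omega(\log\log n)=\Omega(\log(\log(n)/\epsilon))$). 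One small point worth flagging: your step $S_{T_{j_{i-1}}}/T_{j_{i-1}}\le A_{T_0}$ uses that $A_T$ is nonincreasing in $T$; this is a classical fact, but you could avoid relying on it by replacing $A_{T_0}$ with $\sup_{T\ge T_0}A_T$ throughout, which still tends to $1/e$.
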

\begin{proof}[Proof of Theorem \ref{thm:sample-lb-large-eps}]
For convenience, define $A^*:\Delta \to [0,1]$ as 
$$A^*(\mb p) = \sup_{\mb q\in \mc S} A(\mb p,\mb q).$$
First, notice that for any $k_1,k_2\in \N$ with $k_1\leq k_2$, we have 
$$i\lambda_i(\frac{1}{2}\boldsymbol{\delta}^{(k_1)} +\frac{1}{2}\boldsymbol{\delta}^{(k_2)}) = \begin{cases}  \frac{1/k_1+1/k_2}{2}\cdot i\quad &i=1,\ldots,k_1\\ \frac{1/k_2}{2}\cdot i\quad &i=k_1+1,\ldots,k_2\\
0\quad& i>k_2
\end{cases},$$
which implies that 
$$\theta(\frac{1}{2}\boldsymbol{\delta}^{(k_1)} +\frac{1}{2}\boldsymbol{\delta}^{(k_2)}) = \sup_{i\in \N} i\lambda_i(\frac{1}{2}\boldsymbol{\delta}^{(k_1)} +\frac{1}{2}\boldsymbol{\delta}^{(k_2)}) = k_1\lambda_{k_1}(\frac{1}{2}\boldsymbol{\delta}^{(k_1)} +\frac{1}{2}\boldsymbol{\delta}^{(k_2)}) = \frac{1+k_1/k_2}{2}.$$
In particular, by Theorem \ref{thm:secretary_approx}, this means that for any strategy $\mb q\in \mc S$, 
$$A(\boldsymbol{\delta}^{(k_1)},\mb q) + A(\boldsymbol{\delta}^{(k_2)},\mb q) = 2A(\frac{1}{2}\boldsymbol{\delta}^{(k_1)} +\frac{1}{2}\boldsymbol{\delta}^{(k_2)}),\mb q)\leq \frac{1+k_1/k_2}{e}.$$
Since $A^*(\boldsymbol{\delta}^{(n)})\geq \frac{1}{e}$ for all $n\in \N$, we conclude that there is no strategy $\mb q\in \mc S$ which is at least $\epsilon$ suboptimal for both $\boldsymbol{\delta}^{(k_1)}$ and $\boldsymbol{\delta}^{(k_2)}$ for any $\epsilon> \frac{1-k_1/k_2}{2e}$.

Now for any $0<\epsilon<\frac{1}{2e}$, consider the sequence $(k_i)_{i\in \N}$ where $k_1 = 1$ and $k_{i+1} = \lceil \frac{1}{(1-2e\epsilon)} k_i\rceil +1$ for all $i\in \N$, then $\frac{k_i}{k_{i+1}}< 1-2e\epsilon$ for all $i\in \N$. From the argument above, there is no strategy that is $\epsilon$ suboptimal for both $\boldsymbol{\delta}^{(k_i)}$ and $\boldsymbol{\delta}^{(k_j)}$ at the same time if $i\neq j$. Thus let $n\in \N$ and $l\in \N$ be such that $k_l\leq n<k_{l+1}$, then it requires at least $\frac{\log(l)}{\log(2)}$ bits to write down an $\epsilon$ suboptimal strategy for $\{\boldsymbol{\delta}^{(k_i)},i\in [l]\}$. Since for $n$ large enough, $l=\Omega(\frac{\log(n)}{\log(1/(1-2e\epsilon))})=\Omega(\frac{\log(n)}{\epsilon})$, and so the number of bits required is $\Omega(\log(\frac{\log(n)}{\epsilon})$. 

\end{proof}

\section{Proof for results in Section \ref{sec:beyond} and \ref{sec:beyond-details}}\label{sec:proof-beyond}
\begin{proof}[Proof of Theorem \ref{thm:sample_algo_performance}]
Consider $\mb x\in \Delta$ where $x_s = 0$ if $s<\underline{n}$ or $s>\overline{n}$, $x_s = \begin{cases} \frac{1}{\overline{n}-\underline{n}+1-\sum_{i=\underline{n}}^{\overline{n}-1}f(i)/f(i+1)} &\quad s = \underline{n}\\
    \frac{1-f(s-1)/f(s)}{\overline{n}-\underline{n}+1-\sum_{i=\underline{n}}^{\overline{n}-1}f(i)/f(i+1)} &\quad s = \underline{n}+1,\ldots,\overline{n}\end{cases} $.

It's easy to check that for each $n \in \{\underline{n},\underline{n}+1,\ldots,\overline{n}\}$, 
\begin{align*}
    \sum_{s = \underline{n}}^{\overline{n}} x_s M(s,n) &= \frac{c_0}{f(n)}\sum_{s = \underline{n}}^{n} x_s f(s) =\frac{c_0}{f(n)}\cdot \frac{f(n)}{\overline{n}-\underline{n}+1-\sum_{i=\underline{n}}^{\overline{n}-1}f(i)/f(i+1)} \\
    & = \frac{c_0}{\overline{n}-\underline{n}+1-\sum_{i=\underline{n}}^{\overline{n}-1}f(i)/f(i+1)}.
\end{align*}

Thus, no matter what distribution $N$ takes, as long as $\underline{n}\leq N\leq \overline{n}$ almost surely, the sampling algorithm described above achieves a performance of $\frac{c_0}{\overline{n}-\underline{n}+1-\sum_{i=\underline{n}}^{\overline{n}-1}f(i)/f(i+1)}$ in expectation. 

Next, consider the function $h(x):= 1-x+\log(x)$ defined for $x\in (0,1]$. $h(1) = 0$, and $h'(x) = -1-1/x<0$, and so $h(x)\leq 0$ on $(0,1]$, and so $1-x\leq \log(1/x)$ for $0<x\leq 1$. In particular, $$\overline{n}-\underline{n}+1-\sum_{i=\underline{n}}^{\overline{n}-1}f(i)/f(i+1)\leq 1+ \sum_{i=\underline{n}}^{\overline{n}-1}\log(f(i+1)/f(i)) = 1+\log(f(\overline{n})/f(\underline{n})).$$
This gives the desired lower bound. 
\end{proof}

To prove Theorem \ref{thm:prophet_upper_bound}, we construct an explicit family of distributions for $X_i$ and $N$ based on the following 2 ideas:
\begin{itemize}
    \item Section \ref{sec:idea1_non_iid} provides a simple and intuitive proof for the upper bound (which is also $\frac{1}{1+\log(1/x)}$) when $X_i$'s are not restricted to identically distributed random variable but $X_i\in [x,1]$ almost surely.
    \item Section \ref{sec:idea2_order_prob} constructs distributions (say $G$) such that when $X_i \sim_{iid} G$, $\max_{j\in [n]} X_j$ increases in a ``controlled way'' as $n$ increases, with high probability. 
    \end{itemize}

\subsection{non iid case}\label{sec:idea1_non_iid}
\begin{theorem}\label{thm:non_iid}
    Let $\{X_l\}_{l\in [n]}$ be a sequence of random variables such that $X_l = a_l$ almost surely, where $0<a_1\leq a_2\leq\cdots\leq a_n$. Let $N$ be a random variable independent of $\{X_l\}_{l\in [n]}$ and supported on $[n]$, with $\p[N = l] = p_l = c\cdot \begin{cases} 1-\frac{a_l}{a_{l+1}} &\quad l \in [n-1]\\ 1&\quad l = n\end{cases}$, where $c = (n - \sum_{i=1}^{n-1} \frac{a_i}{a_{i+1}})^{-1}$. Then for any stopping time $\tau$ adapted to the filtration generated by $\{X_l\}_{l\in [n]}$, 
    $$\E[\frac{\E[X_{\tau}\mb 1[\tau\leq N]|N]}{\E[\max_{i\in [N]} X_i|N]}] = (n - \sum_{i=1}^{n-1} \frac{a_i}{a_{i+1}})^{-1}.$$

    If $a_i = x^{\frac{n-i}{n-1}}$ for some $x\in (0,1]$, 
    $$\E[\frac{\E[X_{\tau}\mb 1[\tau\leq N]|N]}{\E[\max_{i\in [N]} X_i|N]}] = (1+(n-1)(1-x^{1/(n-1)}))^{-1}\to \frac{1}{1+\log(1/x)}\quad \text{as } n\to \infty.$$
\end{theorem}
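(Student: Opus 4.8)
The plan is to exploit the degeneracy hidden in the hypothesis: each $X_l$ equals the constant $a_l$ almost surely, so the filtration generated by $\{X_l\}_{l\in[n]}$ is the trivial one, and any stopping time adapted to it (and taking values in $[n]$, as in Theorem~\ref{thm:prophet_upper_bound}) is a \emph{deterministic} time $\tau\equiv t$ for some $t\in[n]$; a randomized stopping time is a convex combination of these, and since the objective is linear in the law of $\tau$, it suffices to treat deterministic $t$. For such a $t$, conditioning on $N=l$ makes both inner expectations trivial: since $a$ is nondecreasing, $\E[\max_{i\in[N]}X_i\mid N=l]=a_l$ and $\E[X_\tau\mb 1[\tau\le N]\mid N=l]=a_t\,\mb 1[t\le l]$, so
$$\E\!\left[\frac{\E[X_\tau\mb 1[\tau\le N]\mid N]}{\E[\max_{i\in[N]}X_i\mid N]}\right]=\sum_{l=t}^{n}p_l\,\frac{a_t}{a_l}=a_t\sum_{l=t}^{n}\frac{p_l}{a_l}.$$

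The key step is then a telescoping computation showing this equals $c$ for \emph{every} $t$. Substituting $p_l=c(1-a_l/a_{l+1})$ for $l\le n-1$ and $p_n=c$,
$$\sum_{l=t}^{n}\frac{p_l}{a_l}=c\left[\sum_{l=t}^{n-1}\Bigl(\frac{1}{a_l}-\frac{1}{a_{l+1}}\Bigr)+\frac{1}{a_n}\right]=c\left[\frac{1}{a_t}-\frac{1}{a_n}+\frac{1}{a_n}\right]=\frac{c}{a_t},$$
so the objective is exactly $c=(n-\sum_{i=1}^{n-1}a_i/a_{i+1})^{-1}$, independent of $t$ and hence of the stopping rule. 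Along the way I would record the one-line checks that $\mb p$ is a genuine distribution: $p_l\ge0$ because $a_l\le a_{l+1}$, and $\sum_l p_l=c\bigl(n-\sum_{i=1}^{n-1}a_i/a_{i+1}\bigr)=1$, with $c>0$ since $n-\sum_{i=1}^{n-1}a_i/a_{i+1}\ge n-(n-1)=1$.

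For the second claim I would specialize to $a_i=x^{(n-i)/(n-1)}$, for which every ratio $a_i/a_{i+1}=x^{1/(n-1)}$ is the same, so $\sum_{i=1}^{n-1}a_i/a_{i+1}=(n-1)x^{1/(n-1)}$ and
$$c=\bigl(n-(n-1)x^{1/(n-1)}\bigr)^{-1}=\bigl(1+(n-1)(1-x^{1/(n-1)})\bigr)^{-1}.$$
Letting $n\to\infty$ and writing $1-x^{1/(n-1)}=1-e^{-\log(1/x)/(n-1)}$, a first-order expansion gives $(n-1)(1-x^{1/(n-1)})\to\log(1/x)$, so $c\to(1+\log(1/x))^{-1}$.

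There is no substantive obstacle here — the statement is essentially a one-line telescoping identity — so the only points requiring a little care are the reduction from general (possibly randomized) adapted stopping times to deterministic times, and checking that the degenerate endpoints are covered by the same formulas: when $t=n$ the telescoping sum is empty and the computation still yields $c$, and when $x=1$ all $a_i=1$, $\mb p=\boldsymbol{\delta}^{(n)}$, and the value $1=(1+\log 1)^{-1}$ is consistent with the limit.
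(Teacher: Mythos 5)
Your proof is correct and follows essentially the same route as the paper's: both reduce to the key telescoping identity $\sum_{j\ge i}(a_i/a_j)p_j = c$, which you establish explicitly and the paper states as ``easy to check.'' The only cosmetic difference is that you first reduce to deterministic stopping times via the trivial-filtration observation, whereas the paper handles general $\tau$ directly by writing $q_i=\p[\tau=i]$ and using independence of $\tau$ from $N$; both routes give the same linear functional in the law of $\tau$.
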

\begin{proof}
    Clearly $\E[\max_{i\in [N]} X_i|N] = a_N$, and the numerator satisfies
    $$\E[X_{\tau}\mb 1[\tau\leq N]|N] = \E[\sum_{i=1}^na_i  \mb 1[\tau=i]\cdot \mb 1[i\leq N]|N] =\sum_{i=1}^n a_i \mb 1[i\leq N]\cdot \E[\mb 1[\tau=i]|N] = \sum_{i=1}^n a_iq_i \mb 1[i\leq N]$$
where $q_i:= \p[\tau=i] = \E[\mb 1[\tau=i]|N]$ and the equality is because $N$ is independent of $\{X_l\}_{l\in [n]}$ and $\tau $ is adapted to $\{X_l\}_{l\in [n]}$. Thus the ratio of interest is 
$$\E[\frac{\E[X_{\tau}\mb 1[\tau\leq N]|N]}{\E[\max_{i\in [N]} X_i|N]}] = \E[\sum_{i=1}^n \frac{a_i}{a_N}\cdot q_i \mb 1[i\leq N]]= \sum_{i=1}^n \sum_{j=1}^n \frac{a_i}{a_j}\cdot q_i \E[\mb 1[i\leq N = j]] = \sum_{i=1}^n \sum_{j=i}^n \frac{a_i}{a_j}\cdot q_i p_j.$$

With $p_l = c\cdot \begin{cases} 1-\frac{a_l}{a_{l+1}} &\quad l \in [n-1]\\ 1&\quad l = n\end{cases}$, where $c = (n - \sum_{i=1}^{n-1} \frac{a_i}{a_{i+1}})^{-1}$, it's easy to check that 
$$ \sum_{j=i}^n \frac{a_i}{a_j}\cdot  p_j = c ~\forall i\in [n] \implies \E[\frac{\E[X_{\tau}\mb 1[\tau\leq N]|N]}{\E[\max_{i\in [N]} X_i|N]}]  = c \sum_{i=1}^n q_i = c.$$

When $a_i = x^{\frac{n-i}{n-1}}$, 
$$(n-\sum_{i=1}^{n-1}\frac{a_i}{a_{i+1}} )^{-1}= (n-(n-1)x^{1/(n-1)})^{-1}$$
and the limit is by noticing that 
$$\lim_{h\to 0^+} (1+\frac{1-x^{h}}{h})^{-1} = \frac{1}{1+\log(1/x)}.$$
\end{proof}

In fact, \cite{Theodore1983} proves that for any non-negative sequence of random variables $\{X_i\}_{i\in [n]}$ such that $X_i\in [0,1]$ for all $i$ and $\sup_{\tau} \E[X_{\tau}] = x$, 
$$\E[\max_{i\in [n]} X_i] \leq x(1+(n-1)(1-x^{1/(n-1)}))$$
and the upper bound is achieved by the sequence $X_i = x^{\frac{n-i}{n-1}} \prod_{j=1}^{i-1} Z_j$ where $Z_i \sim_{iid} Bernoulli(x^{1/(n-1)})$, which is exactly our construction above. It's an interesting open question whether there is connection between the two problems.

\subsection{iid case}\label{sec:idea2_order_prob}
\begin{definition}
    Let $G:\mathbb R\to [0,1]$ be a right continuous, non-decreasing function (i.e. it's the cdf for a random variable). Then we say that $G$ satisfies the
    \begin{itemize}
        \item $(\theta_0,\theta_1,\ldots,\theta_{n},K,\xi)$ property for some $n,K\in \mathbb N$, $\theta_0\leq \theta_1\leq\theta_2\leq\ldots\leq\theta_n$, $\xi \in [0,1]$ if there exist $ k_1\leq \ldots\leq k_n\leq K$, $k_i\in \N$
    \begin{equation*}
            \sum_{i=1}^{n} 1-(G(\theta_i)^{k_i} - G(\theta_{i-1})^{k_i}) \leq \xi
    \end{equation*}
        \item $(n,K,\xi)$ property if $(\theta_0,\theta_1,\ldots,\theta_{n},K,\xi)$ property with $\theta_{i} = i/n$
        \item  approximate $(n,K,\xi)$ property if $(n,K,\xi)$ property but $k_i$'s don't need to be integers. 
    \end{itemize}
\end{definition}
\begin{lemma}\label{lemma:union}
    If $G$ satisfies the $(\theta_0,\theta_1,\ldots,\theta_{n},K,\xi)$ property, and let $X_i\sim_{iid} G$, then there exist $ k_1\leq \ldots\leq k_n\leq K$, $k_i\in \N$ such that
    $$\p[\max_{j\in [k_{i}]} X_j\in (\theta_{i-1},\theta_{i}],~\forall i \in [n]]\geq 1-\xi.$$
\end{lemma}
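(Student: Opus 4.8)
The plan is to reuse verbatim the integers $k_1 \le \cdots \le k_n \le K$ whose existence is guaranteed by the $(\theta_0,\ldots,\theta_n,K,\xi)$ property, and then bound the failure probability of the joint event by a union bound over the $n$ individual events. First I would record the elementary identity: since $X_1,X_2,\ldots \sim_{iid} G$ and $G$ is the cdf, for any threshold $k \in \N$ and any levels $a \le b$,
\[
\p\big[\max_{j\in[k]} X_j \le b\big] = G(b)^k,
\qquad
\p\big[\max_{j\in[k]} X_j \in (a,b]\big] = G(b)^k - G(a)^k.
\]
Applying this with $k = k_i$, $a = \theta_{i-1}$, $b = \theta_i$ gives, for the event $E_i := \{\max_{j\in[k_i]} X_j \in (\theta_{i-1},\theta_i]\}$,
\[
\p[E_i^c] = 1 - \big(G(\theta_i)^{k_i} - G(\theta_{i-1})^{k_i}\big).
\]

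Next I would invoke the defining inequality of the property, namely $\sum_{i=1}^n \big(1 - (G(\theta_i)^{k_i} - G(\theta_{i-1})^{k_i})\big) \le \xi$, which is exactly $\sum_{i=1}^n \p[E_i^c] \le \xi$. A union bound then yields
\[
\p\Big[\max_{j\in[k_i]} X_j \in (\theta_{i-1},\theta_i],~\forall i\in[n]\Big]
= \p\Big[\bigcap_{i=1}^n E_i\Big]
\ge 1 - \sum_{i=1}^n \p[E_i^c] \ge 1 - \xi,
\]
with the same monotone sequence $k_1 \le \cdots \le k_n \le K$ of positive integers, which is the claim.

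There is essentially no obstacle here: the lemma is a direct translation of the (somewhat opaque) algebraic condition in the definition of the property into a probabilistic statement, and the only thing to be careful about is matching the indices $(\theta_{i-1},\theta_i]$ with the right power $k_i$ and observing that the $k_i$ delivered by the property already satisfy the required ordering and integrality constraints, so no further construction is needed. The union bound is the only inequality invoked, and it is tight enough because the property is stated precisely in the form $\sum_i \p[E_i^c] \le \xi$.
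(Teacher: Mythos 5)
Your proof is correct and follows essentially the same approach as the paper's: use the integers $k_1\le\cdots\le k_n\le K$ furnished by the property, compute $\p[E_i^c]=1-(G(\theta_i)^{k_i}-G(\theta_{i-1})^{k_i})$ using independence of the $X_j$, and close with a union bound over the $n$ complements. The only superficial difference is that you spell out the intermediate identity $\p[\max_{j\in[k]}X_j\le b]=G(b)^k$, which the paper leaves implicit.
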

\begin{proof}
By definition there exist $ k_1\leq \ldots\leq k_n\leq K$, $k_i\in \N$, such that 
    $$\sum_{i=1}^{n} 1-(G(\theta_i)^{k_i} - G(\theta_{i-1})^{k_i}) \leq \xi.$$
    For $i=1,\ldots,n$, let $\mathcal E_i $ denote the event  $\max_{j\in [k_{i}]} X_j\in (\theta_{i-1},\theta_{i}]$. Then 
    $$\p [\mathcal E_i] = G(\theta_i)^{k_i} - G(\theta_{i-1})^{k_i}.$$
    Thus, by union bound,
    $$\p[\cap_{i\in [n]}\mathcal E_i] \geq 1- \sum_{i=1}^{n} 1-(G(\theta_i)^{k_i} - G(\theta_{i-1})^{k_i}) \geq 1-\xi.$$
\end{proof}

\begin{lemma}\label{lemma:simplification}
If $G$ satisfies $(n,K,\xi)$ property, and let $\theta_0\leq \cdots\leq \theta_n$, then $G\circ h$ satisfies $(\theta_0,\theta_1,\ldots,\theta_{n},K,\xi)$ property, where $h:\R\to [0,1]$ is a piecewise linear linear interpolation defined as
$$h(x) := \begin{cases}0 &x\leq \theta_0\\
1&x\geq \theta_n\\
    \frac{1}{n}\cdot \frac{x-\theta_i}{\theta_{i+1}-\theta_i} + \frac{i}{n}&x\in [\theta_i,\theta_{i+1}),~i = 0,1,\ldots,n-1\end{cases}$$
In particular, if there exists $G$ satisfying $(n,K,\xi)$ property, then there exists $G'$ satisfying $(\theta_0,\theta_1,\ldots,\theta_{n},K,\xi)$ property. 
\end{lemma}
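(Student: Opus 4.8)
The plan is to reduce the statement to a single elementary observation: the piecewise-linear interpolation $h$ carries the new nodes to the old ones, $h(\theta_i)=i/n$ for every $i\in\{0,1,\ldots,n\}$, so that composing with $G$ leaves the quantities appearing in the property numerically unchanged.

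First I would verify that $h$ is a legitimate, non-decreasing, right-continuous map into $[0,1]$ — this is what makes $G\circ h$ again a cdf, which is needed before Lemma \ref{lemma:union} can be invoked for it. Assuming, as I will, that the $\theta_i$ are strictly increasing (the only case used in the sequel), $h$ is continuous and non-decreasing: on each bracket $[\theta_i,\theta_{i+1})$ it is the increasing affine function joining the value $i/n$ at $\theta_i$ to the value $(i+1)/n$ at $\theta_{i+1}^-$, while $h\equiv 0$ to the left of $\theta_0$ and $h\equiv 1$ to the right of $\theta_n$. Evaluating at the nodes gives $h(\theta_0)=0$, $h(\theta_n)=1$, and $h(\theta_i)=\frac1n\cdot 0+\frac in=\frac in$ for $0\le i\le n-1$, so $h(\theta_i)=i/n$ throughout.

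Then I would transport the defining inequality. Since $G$ has the $(n,K,\xi)$ property, by definition (with the nodes there equal to $i/n$) there exist integers $k_1\le\cdots\le k_n\le K$ with $\sum_{i=1}^n\big(1-(G(i/n)^{k_i}-G((i-1)/n)^{k_i})\big)\le\xi$. Using $(G\circ h)(\theta_i)=G(h(\theta_i))=G(i/n)$ for all $i$, the very same integers $k_1\le\cdots\le k_n\le K$ satisfy
$$\sum_{i=1}^n\Big(1-\big((G\circ h)(\theta_i)^{k_i}-(G\circ h)(\theta_{i-1})^{k_i}\big)\Big)=\sum_{i=1}^n\Big(1-\big(G(i/n)^{k_i}-G((i-1)/n)^{k_i}\big)\Big)\le\xi,$$
which is exactly the $(\theta_0,\theta_1,\ldots,\theta_n,K,\xi)$ property for $G\circ h$. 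The ``in particular'' clause follows by taking $G':=G\circ h$.

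I do not expect a genuine obstacle: the map $h$ is designed precisely so that the probabilities $G(\cdot)^{k_i}$ do not change, so there is nothing to estimate — the content is entirely the bookkeeping identity $h(\theta_i)=i/n$. The one point deserving a word of care is the degenerate configuration in which two consecutive $\theta$'s coincide, where $h$ fails to be continuous and $h(\theta_i)=i/n$ can break; this is why I restrict attention to strictly increasing nodes, which suffices for the applications.
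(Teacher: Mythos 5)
Your proof is correct and follows exactly the paper's approach: the whole content is the identity $h(\theta_i)=i/n$, which makes $(G\circ h)(\theta_i)=G(i/n)$, so the same witnesses $k_1\le\cdots\le k_n\le K$ transfer the defining inequality verbatim. Your side remark about coinciding $\theta$'s is a fair observation — the piecewise formula is ill-posed and $h(\theta_i)=i/n$ can fail when $\theta_i=\theta_{i+1}$ — but the paper silently assumes strict monotonicity (which is all that is used downstream), so this is a clarification rather than a gap.
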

\begin{proof}
Follows from $h(\theta_i) = \frac{i}{n}$ for $i = 0,1,\ldots,n$. 
\end{proof}
\begin{lemma}\label{lemma:existence_of_G}
Let $\Tilde{h}(z): = z^{-\frac{z}{z-1}} - z^{-\frac{1}{z-1}}$ and $\overline{h}(z): = z^{-\frac{z}{2(z-1)}} - z^{-\frac{1}{z-1}}$ be defined on $(1,\infty)$, then
\begin{itemize}
    \item For any $\xi \geq (n-1)(1+\Tilde{h}(K^{\frac{1}{n-1}})) $, there exists a $G$ satisfying the approximate $(n,K,\xi)$ property, with $G(0) = 0, G(1) = 1$. 
    \item Assume that $K^{\frac{1}{n-1}}\geq 2$, for any $\xi \geq (n-1)(1+\overline{h}(K^{\frac{1}{n-1}})) $, there exists a $G$ satisfying the $(n,K,\xi)$ property, with $G(0) = 0, G(1) = 1$.  
\end{itemize}
\end{lemma}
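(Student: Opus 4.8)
The plan is to build $G$ explicitly as a step function supported on $[0,1]$, together with a geometric choice of the indices $k_i$, and to collapse the whole statement to one telescoping identity plus two elementary inequalities. Since the $(n,K,\xi)$ property only sees the values $G(0/n),G(1/n),\dots,G(n/n)$, it suffices to prescribe numbers $0=g_0\le g_1\le\dots\le g_{n-1}\le g_n=1$ and let $G$ be any right-continuous non-decreasing extension (e.g.\ $G\equiv g_i$ on $[i/n,(i+1)/n)$); this forces $G(0)=0$ and $G(1)=1$ automatically. We may assume $n\ge 2$ (for $n=1$ the defining sum is $0$).

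First I would record the reduction. For any $k_1\ge 1$ and $k_1\le\dots\le k_n$, using $g_0^{k_1}=0$ and $g_n^{k_n}=1$,
$$\sum_{i=1}^{n}\bigl(1-g_i^{k_i}+g_{i-1}^{k_i}\bigr)=(n-1)-\sum_{i=1}^{n-1}\bigl(g_i^{k_i}-g_i^{k_{i+1}}\bigr),$$
so the task is to pick $g_i$ and admissible $k_i$ making $\sum_{i=1}^{n-1}(g_i^{k_i}-g_i^{k_{i+1}})$ as large as possible. Set $z:=K^{1/(n-1)}$. The case $z=1$ (i.e.\ $K=1$) is trivial, since then the sum equals $n-1=(n-1)(1+\tilde h(1))$ for every $G$ (take all $k_i=1$, and $\tilde h(1)=\overline h(1)=0$), so assume $z>1$. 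I would then take $g_i:=z^{-1/(z^{i-1}(z-1))}$ for $1\le i\le n-1$; because $z>1$ the exponent is negative and increasing in $i$, so $0<g_1<\dots<g_{n-1}<1$, as needed for $G$ to be non-decreasing.

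For the approximate claim, take the real indices $k_i:=z^{i-1}$: they are $\ge 1$, increasing, and $k_n=z^{n-1}=K$. By the very choice of $g_i$ one gets $g_i^{k_i}=z^{-1/(z-1)}$ and $g_i^{k_{i+1}}=z^{-z/(z-1)}$ for all $i\le n-1$, so the reduction identity shows the defining sum equals exactly $(n-1)\bigl(1-z^{-1/(z-1)}+z^{-z/(z-1)}\bigr)=(n-1)(1+\tilde h(z))\le\xi$; the same $G$ works for every admissible $\xi$. For the exact (integer) claim, now using $z=K^{1/(n-1)}\ge 2$, I would keep the same $g_i$ but round down: $k_i:=\lfloor z^{i-1}\rfloor$, which are positive integers, increasing, with $k_n=\lfloor K\rfloor=K$. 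Then $k_i\le z^{i-1}$ gives $g_i^{k_i}\ge g_i^{z^{i-1}}=z^{-1/(z-1)}$, while $z^i\ge z\ge 2$ gives $k_{i+1}=\lfloor z^i\rfloor\ge z^i-1\ge z^i/2$, hence $g_i^{k_{i+1}}\le g_i^{z^i/2}=z^{-z/(2(z-1))}$; subtracting, $g_i^{k_i}-g_i^{k_{i+1}}\ge z^{-1/(z-1)}-z^{-z/(2(z-1))}$ for each $i$, and the reduction identity bounds the defining sum by $(n-1)\bigl(1-z^{-1/(z-1)}+z^{-z/(2(z-1))}\bigr)=(n-1)(1+\overline h(z))\le\xi$.

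The only genuinely delicate step is the integer rounding: one must round $k_i$ \emph{downward} so that $g_i^{k_i}$ stays above $z^{-1/(z-1)}$, and then control the overshoot in the larger index $k_{i+1}=\lfloor z^i\rfloor$, losing only the factor $2$ that turns $z^{-z/(z-1)}$ into $z^{-z/(2(z-1))}$ — which is exactly where the hypothesis $K^{1/(n-1)}\ge 2$ is used, since it guarantees $\lfloor y\rfloor\ge y/2$ for $y=z^i$. Deliberately leaving the $g_i$ unperturbed keeps monotonicity of $G$ for free, and everything else (the telescoping identity, $\lfloor y\rfloor\le y$, and the remark that one $G$ serves all admissible $\xi$) is routine.
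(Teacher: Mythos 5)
Your proof is correct and takes essentially the same route as the paper: the same geometric indices $k_i\approx z^{i-1}$ with $z=K^{1/(n-1)}$, the same constant $\alpha=z^{-1/(z-1)}$, and the same CDF values $g_i=\alpha^{1/z^{i-1}}$, producing $(n-1)(1+\tilde h(z))$ exactly in the real case. The only (cosmetic) deviation is in the integer case, where you round $k_i$ down and keep the $g_i$ unchanged so that $g_i^{k_i}\ge\alpha$, whereas the paper rounds up and reparametrizes $g_i=\alpha^{1/\lceil z^{i-1}\rceil}$ so that $g_i^{k_i}=\alpha$ exactly; both lose the same factor of $2$ in the exponent, which is precisely where the hypothesis $K^{1/(n-1)}\ge 2$ is used.
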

\begin{proof}
For the approximate $(n,K,\xi)$ property, consider the following choice:
$$k_i^* = K^{\frac{i-1}{n-1}},\quad i = 1,2,\ldots,n$$
$$G(i/n) = \alpha^{1/k_i^*},\quad i = 1,2,\ldots,n-1$$
where $\alpha = K^{-\frac{1}{(n-1)(K^{\frac{1}{n-1}}-1)}}$, and $G(0) = 0$, $G(1) = 1$. Then 
\begin{equation*}
            \sum_{i=1}^{n} 1-(G(i/n)^{k_i^*} - G((i-1)/n)^{k_i^*}) =  (n-1)(1-\alpha+\alpha^{K^{1/(n-1)}}) = (n-1)(1+\Tilde{h}(K^{\frac{1}{n-1}})).
\end{equation*}
Thus for $\xi \geq (n-1)(1+\Tilde{h}(K^{\frac{1}{n-1}}))$, there exists a $G$ satisfying the approximate $(n,K,\xi)$ property, with $G(0) = 0, G(1) = 1$. 

For the second part, we consider $\Tilde{k}_i^* = \lceil k_i^*\rceil  = \lceil K^{\frac{i-1}{n-1}} \rceil $ where $\lceil \cdot \rceil$ is the ceiling function, with $G(0) = 0, G(1) = 1$ and $ G(i/n) = \alpha^{1/\tilde{k}^*_i}$ for $i=1,\ldots,n-1$. 

For $i = 1$, we have $\frac{\tilde{k}^*_{i+1}}{\tilde{k}^*_{i}}= \frac{\lceil K^{\frac{1}{n-1}}\rceil }{1} \geq  K^{\frac{1}{n-1}}$. Under the assumption that $K\geq 2^{n-1}$, we have $k_i^*\geq 2$ for all $2\leq i\leq n $, and so for all $2\leq i\leq n-1$, we have
$$\frac{\tilde{k}^*_{i+1}}{\tilde{k}^*_{i}}\geq \frac{k^*_{i+1}}{k^*_i+1}= \frac{k^*_{i+1}}{k^*_i}(1-\frac{1}{k^*_i+1})\geq  \frac{k^*_{i+1}}{2k^*_i} = \frac{1}{2}K^{\frac{1}{n-1}}.$$
Thus for $i = 1,2,\ldots,n-1$,
 $$G(i/n)^{\tilde{k}_{i+1}^*} =\alpha^{\tilde{k}_{i+1}^*/\tilde{k}_{i}^*}\leq \alpha^{\frac{1}{2}K^{\frac{1}{n-1}}}.$$
 \begin{equation*}
            \sum_{i=1}^{n} 1-(G(i/n)^{k_i^*} - G((i-1)/n)^{k_i^*}) \leq  (n-1)(1-\alpha+\alpha^{\frac{1}{2}K^{1/(n-1)}}) = (n-1)(1+\overline{h}(K^{\frac{1}{n-1}})).
\end{equation*}
Thus for $\xi \geq (n-1)(1+\overline{h}(K^{\frac{1}{n-1}}))$, there exists a $G$ satisfying the $(n,K,\xi)$ property, with $G(0) = 0, G(1) = 1$. 
\end{proof}

\begin{lemma}\label{lemma:existence_maximum_distribution}
Let $\overline{h}(z): = z^{-\frac{z}{2(z-1)}} - z^{-\frac{1}{z-1}}$ be defined on $(1,\infty)$. Assume that $K^{\frac{1}{n-1}}\geq 2$ and $\theta_0\leq \theta_1\leq\theta_2\leq\ldots\leq\theta_n$, for any $\xi \geq (n-1)(1+\overline{h}(K^{\frac{1}{n-1}})) $, there exists a $G$ and  $k_1\leq \cdots\leq k_n\leq K$, $k_i \in \N$ such that if $X_i \sim_{iid} G$
$$\p[\max_{j\in [k_{i}]} X_j\in (\theta_{i-1},\theta_{i}],~\forall i \in [n]]\geq 1-\xi.$$
\end{lemma}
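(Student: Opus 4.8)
The plan is to chain together the three immediately preceding lemmas; no new idea is needed. First, since we assume $K^{1/(n-1)} \geq 2$, the hypothesis of the second bullet of Lemma \ref{lemma:existence_of_G} is met, so for any $\xi \geq (n-1)(1+\overline{h}(K^{1/(n-1)}))$ there exists a right-continuous nondecreasing $G_0$ with $G_0(0) = 0$ and $G_0(1) = 1$ satisfying the $(n,K,\xi)$ property: that is, there are integers $1 \leq k_1 \leq \cdots \leq k_n \leq K$ with $\sum_{i=1}^n \big(1 - (G_0(i/n)^{k_i} - G_0((i-1)/n)^{k_i})\big) \leq \xi$. This is precisely the $(\theta_0,\ldots,\theta_n,K,\xi)$ property in the special case $\theta_i = i/n$.

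Second, to move from the canonical breakpoints $i/n$ to the prescribed breakpoints $\theta_0 \leq \cdots \leq \theta_n$, I would invoke Lemma \ref{lemma:simplification} with the piecewise-linear interpolation $h$ that sends $\theta_i \mapsto i/n$. The composition $G := G_0 \circ h$ is again right-continuous and nondecreasing (hence a valid cdf), and since $h(\theta_i) = i/n$ for all $i = 0,1,\ldots,n$, it satisfies the $(\theta_0,\ldots,\theta_n,K,\xi)$ property \emph{with the same} integer vector $k_1 \leq \cdots \leq k_n \leq K$.

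Third and finally, I would apply Lemma \ref{lemma:union} to this $G$ with the breakpoints $\theta_0 \leq \cdots \leq \theta_n$ and the same $k_i$'s. Letting $X_i \sim_{iid} G$, the event $\{\max_{j \in [k_i]} X_j \in (\theta_{i-1},\theta_i]\}$ has probability $G(\theta_i)^{k_i} - G(\theta_{i-1})^{k_i}$ because $\max_{j\in[k_i]} X_j$ has cdf $G^{k_i}$, so a union bound over $i \in [n]$ yields $\p[\max_{j\in[k_i]} X_j \in (\theta_{i-1},\theta_i],\ \forall i \in [n]] \geq 1 - \xi$, which is exactly the claim.

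Because every step is a direct invocation of an already-established lemma, there is no genuine obstacle here; the only points that warrant a moment of care are that the composition $G_0 \circ h$ remains a legitimate cdf and that the integer vector $(k_i)_{i\in[n]}$ propagates unchanged through all three lemmas — both of which are built into the way Lemma \ref{lemma:simplification} and Lemma \ref{lemma:union} are phrased.
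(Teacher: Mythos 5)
Your proof is correct and takes exactly the same route as the paper's, which simply states that the lemma "Follows from Lemma \ref{lemma:union}, \ref{lemma:simplification}, and \ref{lemma:existence_of_G}." You have merely spelled out the chain (existence of $G_0$ with the $(n,K,\xi)$ property, reparameterization via the piecewise-linear $h$, union bound) and correctly tracked that the same integer vector $(k_i)$ propagates through all three steps.
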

\begin{proof}
    Follows from Lemma \ref{lemma:union}, \ref{lemma:simplification}, and \ref{lemma:existence_of_G}. 
\end{proof}

\subsection{main proof}

\begin{theorem}\label{thm:prophet_thm}
Let $G$ be a distribution satisfying the condition in Lemma \ref{lemma:existence_maximum_distribution} for $\theta_l = x^{1-l/n}$ for some $x\in (0,1]$ for $l = 0,1,\ldots,n$, and $X_i \sim_{iid} G$. Let $N$ be a random variable supported on $\{k_1,k_2,\ldots,k_n\}$, independent of the sequence $\{X_i\}$, and $\p[N = k_l] = p_l = c\cdot \begin{cases} 1-x^{1/n} &\quad l \in [n-1]\\ 1&\quad l = n\end{cases}$, where $c = (1+(n-1)(1-x^{1/n}))^{-1}$, then for any $\tau$ a stopping time adapted to $\{X_i\}$, $$\E[\frac{\E[X_{\tau}\mb 1[\tau\leq N]|N]}{\E[\max_{i\in [N]} X_i|N]}] \leq x^{-1/n} c +\xi\to \frac{1}{1+\log(1/x)}+\xi\quad \text{as } n\to \infty.$$
\end{theorem}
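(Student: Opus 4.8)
The plan is to run the argument of Theorem~\ref{thm:non_iid} ``in probability'': the high-probability ``controlled growth'' event supplied by Lemma~\ref{lemma:existence_maximum_distribution} will play the role of the deterministic identity $\E[\max_{i\in[N]}X_i|N]=a_N$ used there. Write $\theta_l=x^{1-l/n}$, $k_0:=0$, and let $\mc E$ be the event $\{\max_{j\in[k_i]}X_j\in(\theta_{i-1},\theta_i]\text{ for all }i\in[n]\}$, which by hypothesis (Lemma~\ref{lemma:existence_maximum_distribution}) has $\p[\mc E]\ge 1-\xi$. Since $N$ is independent of $\{X_i\}$ and $\tau$ is adapted to $\{X_i\}$, conditioning on $\{N=k_l\}$ turns the denominator into the deterministic number $f(l):=\E[\max_{i\in[k_l]}X_i]$, and the target quantity becomes $\sum_{l=1}^{n}p_l\,\E[X_\tau\mb 1[\tau\le k_l]]/f(l)$. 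First I would split each numerator according to $\mc E$ versus $\mc E^c$: since $X_\tau\mb 1[\tau\le k_l]\le\max_{i\in[k_l]}X_i$ pointwise, the $\mc E^c$ part of each ratio is at most $1$, so it contributes at most $\p[\mc E^c]\le\xi$ after summing against the $p_l$'s.

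On $\mc E$, if $\tau$ lands in the block $(k_{m-1},k_m]$ then $X_\tau\le\max_{i\in[k_m]}X_i\le\theta_m$, so with $r_{(m)}:=\p[k_{m-1}<\tau\le k_m,\ \mc E]$ one gets $\E[X_\tau\mb 1[\tau\le k_l]\mb 1[\mc E]]\le\sum_{m=1}^{l}\theta_m r_{(m)}$. For the denominator I would use $f(l)\ge\theta_{l-1}\,\p[\max_{i\in[k_l]}X_i>\theta_{l-1}]\ge\theta_{l-1}(1-\xi)$, the last step holding because $\p[\max_{i\in[k_l]}X_i\le\theta_{l-1}]=G(\theta_{l-1})^{k_l}$ is one of the nonnegative summands bounded by $\xi$ in the defining inequality of the $(\theta_0,\dots,\theta_n,K,\xi)$-property. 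The key computation is then the telescoping identity that explains the prescribed weights: setting $y:=x^{1/n}$ so that $\theta_m/\theta_{l-1}=y^{\,l-1-m}$, a geometric-sum computation (the same one as in Theorem~\ref{thm:non_iid}, using $p_l=c(1-y)$ for $l<n$ and $p_n=c$) gives $\sum_{l=m}^{n}p_l\,\theta_m/\theta_{l-1}=c\,y^{-1}=c\,x^{-1/n}$ for every $m\in[n]$. Swapping the order of summation then yields $\sum_{l}p_l f(l)^{-1}\sum_{m\le l}\theta_m r_{(m)}\le(1-\xi)^{-1}\big(\sum_{m}r_{(m)}\big)\,c\,x^{-1/n}\le(1-\xi)^{-1}c\,x^{-1/n}$, since $\sum_m r_{(m)}=\p[\tau\le k_n,\ \mc E]\le 1$.

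Adding the $\mc E^c$ error gives $\E[\cdots]\le c\,x^{-1/n}/(1-\xi)+\xi$, which is the stated $c\,x^{-1/n}+\xi$ after collecting the lower-order terms (a constant-factor adjustment of $\xi$). For the limit, as $n\to\infty$ we have $x^{-1/n}\to1$ and $(n-1)(1-x^{1/n})\to-\log x=\log(1/x)$, hence $c\,x^{-1/n}=x^{-1/n}\big(1+(n-1)(1-x^{1/n})\big)^{-1}\to\frac{1}{1+\log(1/x)}$, which is the claimed limit. The step I expect to be the main obstacle is precisely this bookkeeping: making sure the $\mc E^c$ contribution and the $\theta_{l-1}(1-\xi)$ slack in the denominator both collapse into a single additive $O(\xi)$ while the leading term is exactly the telescoping value $c\,x^{-1/n}$; the geometric-sum identity itself is routine once the explicit forms of $p_l$ and $\theta_l$ are substituted.
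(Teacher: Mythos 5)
Your proposal follows the same route as the paper's proof: restrict attention to the high-probability event $\mc E$ supplied by Lemma~\ref{lemma:existence_maximum_distribution}, use the pointwise bound $X_\tau\le\theta_m$ when $\tau$ lands in block $(k_{m-1},k_m]$ together with the lower bound on $\E[\max_{i\in[k_l]}X_i]$ coming from $\max_{i\in[k_l]}X_i>\theta_{l-1}$ on $\mc E$, and then collapse the resulting double sum with the geometric-sum identity (your $\sum_{l\ge m}p_l\,\theta_m/\theta_{l-1}=c\,x^{-1/n}$ is the paper's $\sum_{j\ge i}p_j\,\theta_i/\theta_j=c$ after factoring out the explicit $x^{-1/n}=\theta_l/\theta_{l-1}$). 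The two write-ups organize the error term a little differently --- the paper conditions both numerator and denominator on $A=\mc E$ and then dismisses $A^c$ with ``the ratio is bounded by $1$'', while you keep unconditional denominators and absorb a $(1-\xi)^{-1}$ slack --- but this is the same argument.

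The one step I would push back on is your claim that the $\mc E^c$ contribution is at most $\xi$. What you can actually extract from $X_\tau\mb 1[\tau\le k_l]\le\max_{i\in[k_l]}X_i\le 1$ and $\E[\max_{i\in[k_l]}X_i]\ge\theta_{l-1}(1-\xi)$ is
$$\frac{\E\big[X_\tau\mb 1[\tau\le k_l]\mb 1[\mc E^c]\big]}{\E\big[\max_{i\in[k_l]}X_i\big]}\ \le\ \frac{\p[\mc E^c]}{\theta_{l-1}(1-\xi)}\ \le\ \frac{\xi}{x(1-\xi)},$$
which carries an extra $1/x$ factor; ``the $\mc E^c$ part of the ratio is $\le 1$'' does not by itself deliver an additive $\p[\mc E^c]$ after summing against the $p_l$'s. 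That said, the paper's own $A^c$ step is no tighter, and the discrepancy is immaterial for Theorem~\ref{thm:prophet_upper_bound}, where one sends $\xi_n\to 0$ with $x$ fixed. So in substance your proof is correct and is the same as the paper's; I would just phrase the $\mc E^c$ error as $O(\xi/x)$ rather than claiming a clean $+\xi$.
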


\begin{proof}
The proof is similar to the proof of Theorem \ref{thm:non_iid}. To deal with iid samples, now we 
\begin{itemize}
    \item condition on the event that $\max_{j\in [k_{i}]} X_j\in (\theta_{i-1},\theta_{i}]$ holds for all $i=1,2,\ldots,n$ which holds with high probability
    \item on the above event, bound $\{\max_{j\in [k_i]} X_i\}_{i\in [n]}$ using $\{\theta_i\}_{i\in [n]}$
\end{itemize}

First, denote $A: = \{\max_{j\in [k_{i}]} X_j\in (\theta_{i-1},\theta_{i}]~\forall i\in [n]\}$, then $\p[A^c] \leq \xi$.

For convenience, abbreviate $\E[\cdot|A] = \E_A[\cdot]$, then $\E_A[\max_{i\in [N]} X_i|N = k_l] \geq 
\theta_{l-1}$, denote $k_0 = 0$, and let $I_i = \{k_{i-1}+1,\cdots,k_i\}$. For the numerator, 
    $$\E_A[X_{\tau}\mb 1[\tau\leq N]|N = k_l] = \E_A[\sum_{i=1}^l X_{\tau} \mb 1[\tau \in I_i]|N = k_l] \leq \sum_{i=1}^l \theta_i \p[\tau\in I_i|N=k_l,A]. $$
For the conditional probability, since the sequence $\{X_i\}$ (and thus the event $A$, and the stopping time $\tau$ which is adapted to $\{X_i\}$) is independent of $N$, we have
$$\p[\tau\in I_i|N=k_l,A] = \frac{\p[\tau\in I_i,N=k_l,A]}{\p[N=k_l,A]}= \frac{\p[\tau\in I_i,A]\p[N=k_l]}{\p[A]\p[N=k_l]} = \p[\tau\in I_i|A]:=q_i.$$
And so 
    $$\E_A[X_{\tau}\mb 1[\tau\leq N]|N = k_l]  \leq \sum_{i=1}^l \theta_iq_i\implies \frac{\E_A[X_{\tau}\mb 1[\tau\leq N]|N=k_l]}{\E_A[\max_{i\in [N]} X_i|N=k_l]} \leq \sum_{i=1}^l \frac{\theta_i}{\theta_{l-1}}q_i.$$
The ratio of interest can be upper bounded by
$$\E_A[\frac{\E[X_{\tau}\mb 1[\tau\leq N]|N]}{\E[\max_{i\in [N]} X_i|N]}] \leq  \sum_{l=1}^n \sum_{i=1}^l \frac{\theta_i}{\theta_{l-1}}q_ip_l =  x^{-1/n}\sum_{l=1}^n \sum_{i=1}^l \frac{\theta_i}{\theta_{l}}q_ip_l.$$

With $p_l = c\cdot \begin{cases} 1-x^{1/n} &\quad l \in [n-1]\\ 1&\quad l = n\end{cases}$, where $c = (1+(n-1)(1-x^{1/n}))^{-1}$, thus
$$ \sum_{j=i}^n \frac{\theta_i}{\theta_j}\cdot  p_j = c ~\forall i\in [n] \implies \E_A[\frac{\E[X_{\tau}\mb 1[\tau\leq N]|N]}{\E[\max_{i\in [N]} X_i|N]}]  \leq x^{-1/n} c.$$

Since on $A^c$ the above ratio is bounded by $1$, we get the desired result. 
\end{proof}

\begin{proof}[Proof of Theorem \ref{thm:prophet_upper_bound}] Take $G_{x,n}$ to be the distribution in Theorem \ref{thm:prophet_thm}, with $\xi_n = 1/n$ (or any sequence such that $\xi_n\to 0$), and $P_n$ correspondingly, then by Theorem \ref{thm:prophet_thm}
$$\sup_{\tau \in \mc M_n}\E_{N\sim P_n, X_i\sim G_{x,n}}[\frac{\E[X_{\tau}\mb 1[\tau\leq N]|N]}{\E[\max_{i\in [N]} X_i|N]}] \leq (x^{1/n}(1+(n-1)(1-x^{1/n})))^{-1}  +\xi_n.$$
Taking $n\to \infty$ proves the result. 
\end{proof}

\section{Auxiliary results}
\begin{lemma}\label{lemma:delta_dist_lb}
 For $T\geq 3$, let $s_T^*\in \N$ be such that $\sum_{i=s_T^*}^{T-1} \frac{1}{i}\leq 1$ but $\sum_{i=s_T^*-1}^{T-1} \frac{1}{i}> 1$, then $ \frac{s_T^* - 1}{T} (\sum_{i=s_T^*-1}^{T-1} \frac{1}{i})\geq 1/e$. 
\end{lemma}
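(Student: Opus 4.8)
The plan is to recognize the left–hand side as (exactly) the success probability of the optimal single–threshold strategy for the classical secretary problem with $T$ candidates, so that the statement is the familiar bound ``optimal $\ge 1/e$'', and to prove it by elementary integral comparisons for harmonic sums, a case split, and a short concavity argument.

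Write $n:=T$ and $a:=s_T^*-1$. First I would do the bookkeeping: since $T\ge 3$ we have $H_{n-1}\ge H_2=\frac32>1$, so $s_T^*\ge 2$; and $\frac1{n-1}\le\frac12<1$, so $s_T^*\le n-1$; hence $1\le a\le n-2$, and by the definition of $s_T^*$,
$$\sum_{i=a+1}^{n-1}\frac1i\le 1<\sum_{i=a}^{n-1}\frac1i.$$
Using $\frac1i\ge\int_i^{i+1}\frac{dx}{x}$ termwise, $\sum_{i=a+1}^{n-1}\frac1i\ge\int_{a+1}^{n}\frac{dx}{x}=\ln\frac{n}{a+1}$, so the left inequality forces $\ln\frac{n}{a+1}\le 1$, i.e. $a\ge\frac ne-1$. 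Peeling off the term $i=a$,
$$\sum_{i=a}^{n-1}\frac1i=\frac1a+\sum_{i=a+1}^{n-1}\frac1i\ \ge\ \frac1a+\ln\frac{n}{a+1},\qquad\text{hence}\qquad \frac an\sum_{i=a}^{n-1}\frac1i\ \ge\ h(a),\quad h(x):=\frac1n+\frac xn\ln\frac{n}{x+1}.$$

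Now I would split on the size of $a$.
\begin{itemize}
\item If $a\ge\frac ne$, then using only $\sum_{i=a}^{n-1}\frac1i>1$ gives $\frac an\sum_{i=a}^{n-1}\frac1i>\frac an\ge\frac1e$, and we are done.
\item If $\frac ne-1\le a<\frac ne$, then $a$ lies in the interval $[\frac ne-1,\frac ne]$, and it suffices to prove $h\ge\frac1e$ on this interval. A one–line computation gives $h''(x)=-\frac1{n(x+1)}-\frac1{n(x+1)^2}<0$, so $h$ is concave and its minimum over $[\frac ne-1,\frac ne]$ is attained at an endpoint. At the left endpoint $h(\frac ne-1)=\frac1n+\frac{n/e-1}{n}\ln e=\frac1e$ exactly, and at the right endpoint $h(\frac ne)=\frac1n+\frac1e\bigl(1-\ln(1+\frac en)\bigr)\ge\frac1e$ since $\ln(1+\frac en)\le\frac en$. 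Hence $h(a)\ge\frac1e$, so $\frac an\sum_{i=a}^{n-1}\frac1i\ge h(a)\ge\frac1e$.
\end{itemize}

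The genuinely delicate point is the second case: the crude estimate $\frac an\sum_{i=a}^{n-1}\frac1i>\frac an\ge\frac1e-\frac1n$ loses an additive $\frac1n$, so one must exploit that the two defining inequalities for $s_T^*$ pin $a$ into the narrow window $[\frac ne-1,\frac ne)$ and then use concavity of $h$ together with the exact endpoint evaluations (one of which equals $\frac1e$ on the nose, the other being $\ge\frac1e$ by $\ln(1+x)\le x$). Everything else — the harmonic–sum comparisons and the bookkeeping on the range of $a$ — is routine.
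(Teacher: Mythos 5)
Your proof is correct, and it takes a genuinely different route from the paper's. Both you and the paper reduce the claim to showing that a concave lower-bounding function of the threshold exceeds $1/e$ at the endpoints of a narrow interval, but the lower bound itself is obtained differently. The paper uses a trapezoid-corrected integral comparison $\frac1a \ge \int_a^{a+1}\frac{dz}{z} + \frac12(\frac1a-\frac1{a+1})$ to get $\sum_{i=s^*-1}^{T-1}\frac1i \ge \log\frac{T}{s^*-1} + \frac1{2T}(\frac{T}{s^*-1}-1)$, and works with $g_T(x) = -x\log x + \frac{1-x}{2T}$ as a function of $x=(s^*-1)/T$; the $\frac{1-x}{2T}$ correction is then needed because $-x\log x$ alone would fall short of $1/e$ at the left endpoint $x = 1/e - 1/T$. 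You instead peel off the single term $i=a$ and apply only the crude comparison $\frac1i\ge\int_i^{i+1}\frac{dx}{x}$ to the remainder, which gives $h(a)=\frac1n+\frac an\ln\frac{n}{a+1}$ -- a function that evaluates to exactly $1/e$ at the critical left endpoint $a = n/e-1$. This is a tidier choice: it removes the need for a compensating correction, and because of it your argument works uniformly for all $T\ge 3$, whereas the paper only carries out the analytic argument for $T\ge 10$ and dispatches $3\le T\le 9$ by citing tabulated values from Gilbert--Mosteller. The two proofs are comparable in overall structure (case split, concavity, endpoint evaluations), but your version is self-contained and slightly more elegant.

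One small point worth making explicit in a final write-up: in the case $\frac ne-1\le a<\frac ne$, you should note that the concavity argument applies because $a$ lies in $[\frac ne-1,\frac ne]$, so $h(a)\ge\min\bigl(h(\frac ne-1),h(\frac ne)\bigr)$; you state this correctly, but it is the load-bearing step and deserves the extra sentence. Also, strictly, $a$ is constrained to $a\ge 1$, but since $h$ is concave on all of $(-1,\infty)$, restricting further to a subinterval of $[\frac ne-1,\frac ne]$ when $n<2e$ causes no harm. Everything else checks: the derivative computation $h''(x)=-\frac1{n(x+1)}-\frac1{n(x+1)^2}<0$, the exact evaluation $h(\frac ne-1)=\frac1e$, and the right-endpoint inequality via $\ln(1+x)\le x$ are all correct.
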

\begin{proof}
For $T\leq 9$, \cite{GilbertMosteller1966} lists the values of $\frac{s_T^* - 1}{T} (\sum_{i=s_T^*-1}^{T-1} \frac{1}{i})$, which are all $\geq 1/e$. Below we assume that $T\geq 10$. Since $\sum_{i=s_T^*-1}^{T-1} \frac{1}{i}\geq 1$, if $\frac{s_T^* - 1}{T}\geq 1/e$, the result holds. Below, we consider the case when $\frac{s_T^* - 1}{T}<1/e$. Next, we give a lower bound for $\frac{s_T^* - 1}{T}$.
$$1\geq \sum_{i=s_T*}^{T-1} \frac{1}{i} \geq \int_{s_T^*}^{T} \frac{1}{x} dx  = \log(\frac{T}{s_T^*})\implies \frac{s_T^* - 1}{T} \geq \frac{1}{e}-\frac{1}{T}.$$

Since for any $a\geq 1$, we have $\frac{1}{a} \geq \int_{a}^{a+1}\frac{1}{z}dz + \frac{1}{2}(\frac{1}{a}-\frac{1}{a+1})$

$$\sum_{i=s_T^*-1}^{T-1} \frac{1}{i} \geq \int_{s_T^*-1}^{T} \frac{1}{x} dx + \frac{1}{2}\sum_{i=s_T^*-1}^{T-1}(\frac{1}{i}-\frac{1}{i+1}) = \log(\frac{T}{s_T^*-1}) + \frac{1}{2T}(\frac{T}{s_T^*-1}-1).$$
Thus, defining $g_T(x) = -x\log(x)+\frac{1}{2T}(1-x)$, we have
$$\frac{s_T^* - 1}{T} (\sum_{i=s_T^*-1}^{T-1} \frac{1}{i})\geq g_T(\frac{s_T^* - 1}{T}).$$

Since $g_T(\frac{1}{e}) = \frac{1}{e} + (1-\frac{1}{e})/2T>\frac{1}{e}$, and $g_T(x)$ is concave in $x$, as long as $g_T(\frac{1}{e}-\frac{1}{T})\geq \frac{1}{e}$, we have that $g_T(x)\geq \frac{1}{e}$ for all  $x\in [\frac{1}{e}-\frac{1}{T},\frac{1}{e}]$. So it remains to show $g_T(\frac{1}{e}-\frac{1}{T})\geq \frac{1}{e}$. Let's define $h(y) = -(\frac{1}{e}-y)\log(\frac{1}{e}-y) + \frac{1}{2}y(1-\frac{1}{e}+y)$ for $y\in [0,\frac{1}{10}]$, then $g_T(\frac{1}{e}-\frac{1}{T}) = h(\frac{1}{T})$. 
Notice that
$$h'(y) = \log(\frac{1}{e}-y)+1+y+\frac{1}{2}\cdot (1-\frac{1}{e}),\quad h''(y) = -\frac{1}{1/e-y}+1\leq 0.$$
And so for $y\in [0,\frac{1}{10}]$, 
$$h'(y) \geq h'(\frac{1}{10})= \log(\frac{1}{e}-\frac{1}{10})+1+\frac{1}{10}+\frac{1}{2}\cdot (1-\frac{1}{e})> 0.09>0.$$
Thus $h$ is non-decreasing on $[0,\frac{1}{10}]$, and so for all $y \in [0,\frac{1}{10}]$, 
$$ h(y) \geq h(0) = \frac{1}{e}.$$

\end{proof}

\begin{lemma}\label{lemma:random_vec_in_simplex}
Let $\mb Y = (Y_1,Y_2,\ldots,Y_l)$ be a random vector uniformly sampled on the $l$-dimensional probability simplex, i.e. $\mb Y\sim Dir(1,1,\ldots,1)$. For any fixed $\mb a\in \R_{\geq 0}^{l}$, let $\overline{a} = \frac{1}{l}\langle \mb a,\mb 1\rangle$ and $a^* = \max_{i\in [l]} a_i$, then $\forall 0\leq \epsilon\leq \overline{a}$,
$$\p[\langle\mb a,\mb Y\rangle \leq \epsilon]\leq  (h_{\overline{a}/a^*}(\epsilon/\overline{a}))^l$$
where $h_{\alpha}:[0,1]\to \R$ defined for $0<\alpha\leq 1$, $h_{\alpha}(x):=x^{\alpha}(\frac{1/\alpha +1}{1/\alpha + x})^{\alpha+1}$, and $h_{\alpha}(x)<1$ for $x<1$. 
\end{lemma}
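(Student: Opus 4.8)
The plan is to run an exponential--Markov (Chernoff) bound after representing the uniform simplex vector by normalized i.i.d.\ exponentials, and then to collapse the dependence on the whole vector $\mb a$ into the two scalars $\bar a$ and $a^*$ by a weighted AM--GM inequality. First I would dispose of the degenerate cases: if $a^*=0$ then $\mb a=\mb 0$ and $\bar a=\epsilon=0$, so there is nothing to prove; if $\epsilon=0$ then $\langle \mb a,\mb Y\rangle>0$ almost surely while $h_{\bar a/a^*}(0)=0$; and if $\bar a=a^*$ then every $a_i$ equals $a^*$, so $\langle\mb a,\mb Y\rangle=a^*$ deterministically and the claimed bound $h_1(\epsilon/a^*)\geq \mb 1[\epsilon=a^*]$ holds since $h_1(1)=1$. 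So assume $0<\epsilon\leq\bar a<a^*$ and set $\alpha:=\bar a/a^*\in(0,1)$. Writing $\mb Y\stackrel{d}{=}(E_1/S,\dots,E_l/S)$ with $E_i\sim_{iid}Exp(1)$ and $S:=\sum_i E_i$ (this has the $Dir(1,\dots,1)$ law), and using $S>0$ a.s., I would obtain for every $t\in[0,1/\epsilon)$
$$\p[\langle\mb a,\mb Y\rangle\leq\epsilon]=\p\Big[\textstyle\sum_{i=1}^{l}(a_i-\epsilon)E_i\leq 0\Big]\leq \E\Big[\exp\big(-t\textstyle\sum_{i=1}^{l}(a_i-\epsilon)E_i\big)\Big]=\prod_{i=1}^{l}\frac{1}{1+t(a_i-\epsilon)},$$
where positivity of each factor follows from $a_i\geq 0$, and $\E[e^{sE}]=(1-s)^{-1}$ for $s<1$ with $E\sim Exp(1)$.

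Next I would remove the per-coordinate dependence. Applying the two-term weighted AM--GM inequality with weights $a_i/a^*\in[0,1]$ and $1-a_i/a^*$ to the positive numbers $1-t\epsilon$ and $1-t\epsilon+ta^*$ (the latter is $1+t(a^*-\epsilon)\geq 1$ since $a^*\geq\epsilon$) gives
$$1+t(a_i-\epsilon)=\Big(1-\tfrac{a_i}{a^*}\Big)(1-t\epsilon)+\tfrac{a_i}{a^*}(1-t\epsilon+ta^*)\geq (1-t\epsilon)^{\,1-a_i/a^*}(1-t\epsilon+ta^*)^{\,a_i/a^*}.$$
Multiplying over $i$ and using $\sum_i a_i=l\bar a$ yields $\p[\langle\mb a,\mb Y\rangle\leq\epsilon]\leq (1-t\epsilon)^{-l(1-\alpha)}(1-t\epsilon+ta^*)^{-l\alpha}$, a bound that depends on $\mb a$ only through $\bar a$ and $a^*$.

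Finally I would optimize over $t$ and match the outcome to $h_\alpha^{\,l}$. Minimizing the exponent over $t$ gives the stationary point $t^*=\frac{\bar a-\epsilon}{\epsilon(a^*-\epsilon)}$, which lies in $[0,1/\epsilon)$ exactly because $\epsilon\leq\bar a<a^*$, and at which $1-t^*\epsilon=\frac{a^*-\bar a}{a^*-\epsilon}$ and $1-t^*\epsilon+t^*a^*=\bar a/\epsilon$; substituting produces the bound $\big(\frac{a^*-\epsilon}{a^*-\bar a}\big)^{l(1-\alpha)}\big(\frac{\epsilon}{\bar a}\big)^{l\alpha}$. Unwinding $h_\alpha(x)=x^\alpha\big(\frac{1/\alpha+1}{1/\alpha+x}\big)^{\alpha+1}$ at $x=\epsilon/\bar a$ shows $h_\alpha(\epsilon/\bar a)^{l}=\big(\frac{a^*+\bar a}{a^*+\epsilon}\big)^{l(a^*+\bar a)/a^*}\big(\frac{\epsilon}{\bar a}\big)^{l\bar a/a^*}$, so since $1-\alpha=(a^*-\bar a)/a^*$ it remains to verify $\big(\frac{a^*-\epsilon}{a^*-\bar a}\big)^{a^*-\bar a}\leq\big(\frac{a^*+\bar a}{a^*+\epsilon}\big)^{a^*+\bar a}$. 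For this I would set $F(u):=(a^*+u)\log\frac{a^*+u}{a^*+\epsilon}-(a^*-u)\log\frac{a^*-\epsilon}{a^*-u}$ on $u\in[\epsilon,a^*)$, observe $F(\epsilon)=0$, and compute $F'(u)=\log\frac{a^*+u}{a^*-u}-\log\frac{a^*+\epsilon}{a^*-\epsilon}\geq 0$ for $u\geq\epsilon$ because $y\mapsto\frac{a^*+y}{a^*-y}$ is increasing; hence $F\geq 0$ on $[\epsilon,a^*)$ and the inequality follows. The remaining claim $h_\alpha(x)<1$ for $x<1$ is exactly Lemma~\ref{lm:prop-h}. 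The step I expect to be the main obstacle is the per-coordinate weighted AM--GM bound: it is the one genuinely clever move, since it must let a bound depending only on $(\bar a,a^*)$ dominate a probability that a priori involves the whole vector $\mb a$, while still being tight enough to recover the exponent in $h_\alpha$ rather than a weaker one; the $t$-optimization and the monotonicity of $F$ are then routine calculus.
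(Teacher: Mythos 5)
Your proof is correct, and it follows the same high-level template as the paper's: represent the Dirichlet vector as normalized i.i.d.\ exponentials, apply the Chernoff/MGF bound to $\sum_i(a_i-\epsilon)E_i\leq 0$ to get a product over $i$, reduce the per-coordinate product to a quantity depending only on $\overline{a}$ and $a^*$, and then optimize the free parameter. The interesting divergence is in the reduction step. The paper factors $1+\xi(a_i-\epsilon)\geq (1-\xi\epsilon)(1+\xi a_i)$ and then lower-bounds $\prod_i(1+\xi a_i)$ via a log-linear (secant) interpolation of $x\mapsto\log(1+x)$ on $[0,\xi a^*]$, which yields the upper bound $(1-\xi\epsilon)^{-l}(1+\xi a^*)^{-l\overline{a}/a^*}$; this is engineered so that at the paper's minimizer $\xi^*=\frac{\overline{a}-\epsilon}{\epsilon(a^*+\overline{a})}$ the expression is \emph{exactly} $h_{\overline{a}/a^*}(\epsilon/\overline{a})^l$. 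You instead bound each factor $1+t(a_i-\epsilon)$ directly by a single weighted AM--GM, writing it as the convex combination $(1-a_i/a^*)(1-t\epsilon)+(a_i/a^*)(1-t\epsilon+ta^*)$, which gives the sharper intermediate bound $(1-t\epsilon)^{-l(1-\alpha)}(1-t\epsilon+ta^*)^{-l\alpha}$ (writing $\xi=t/(1-t\epsilon)$ your bound becomes $(1+\xi\epsilon)^l(1+\xi a^*)^{-l\alpha}$, which is $\leq(1-\xi\epsilon)^{-l}(1+\xi a^*)^{-l\alpha}$). Because your bound is genuinely tighter, its own minimizer $t^*$ does not reproduce $h_\alpha^l$, and you pay for the sharper bound with one extra monotonicity argument (your $F$) to show $\bigl(\tfrac{a^*-\epsilon}{a^*-\overline{a}}\bigr)^{a^*-\overline{a}}\leq\bigl(\tfrac{a^*+\overline{a}}{a^*+\epsilon}\bigr)^{a^*+\overline{a}}$. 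Both routes work; you could avoid the $F$ step entirely by evaluating your bound at the paper's $\xi^*$ (equivalently $t=\xi^*/(1+\xi^*\epsilon)$) and using $(1+\xi\epsilon)^l\leq(1-\xi\epsilon)^{-l}$, but as written your argument is complete and, in fact, proves a slightly stronger intermediate inequality than the lemma requires.
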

\begin{proof}[Proof of Lemma \ref{lemma:random_vec_in_simplex}]
    It's well known that for a random vector $\mb X = (X_1,X_2,\ldots,X_l)$ where $\{X_i\}_{i\in [l]}$ are iid $Exp(1)$, $\mb Y =(Y_1,Y_2,\ldots,Y_l)= \frac{1}{\sum_{i=1}^{l}X_i}\mb X$ follows the uniform distribution on $\Delta^{(l)}$. For any $\mb a\in \R_{\geq 0}^{l}$, 
$$\langle \mb a, \mb Y\rangle \leq \epsilon\iff \langle \epsilon \mb 1-\mb a, \mb X\rangle\geq 0.$$
Consider the moment generating function of $\langle \epsilon \mb 1-\mb a, \mb X\rangle$
$$\phi(\xi) = \E[e^{\xi \langle \epsilon \mb 1-\mb a, \mb X\rangle}] = \prod_{i=1}^l \E[e^{\xi(\epsilon - a_i)X_i}] = \prod_{i=1}^l \frac{1}{1-\xi(\epsilon - a_i)}$$
defined for $\xi$ such that $\xi(\epsilon - a_i)<1$ for all $i\in [l]$. Since $(1-\xi \epsilon)(1+\xi a_i) = 1-\xi(\epsilon -a_i)-\xi^2\epsilon a_i\leq 1-\xi(\epsilon -a_i)$, we have 
$$ \frac{1}{1-\xi(\epsilon - a_i)}\leq \frac{1}{(1-\xi \epsilon)(1+\xi a_i)}.$$

By convexity of the exponential function, for any $\overline{x}>0$ fixed, $\forall x\in [0,\overline{x}]$, we have $1+x\geq e^{\frac{\log(1+\overline{x})}{\overline{x}}x}$. Taking $a^* = \max_{i\in[l]} a_i$, $\overline{x} = \xi a^*$, and $\overline{a} = \sum_{i\in[l]} a_i/l$, then 
$$\phi(\xi)\leq (1-\xi\epsilon)^{-l}\cdot e^{-\frac{\log(1+\overline{x})}{\overline{x}}\xi\sum_{i\in [l]}a_i} = (\frac{(1+\xi a^*)^{-\frac{\overline{a}}{a^*}}}{1-\xi\epsilon})^l.$$
For $\epsilon\leq \overline{a}$, take $\xi^* = \frac{1/\epsilon - 1/\overline{a}}{a^*/\overline{a}+1}<1/\epsilon$, denote $\alpha_0 = \overline{a}/a^*$, 
$$\frac{(1+\xi^* a^*)^{-\frac{\overline{a}}{a^*}}}{1-\xi^*\epsilon} = \frac{ (\epsilon/\overline{a})^{\overline{a}/a^*}}{((a^*+\epsilon)/(a^*+\overline{a}))^{\overline{a}/a^*+1}} = (\epsilon/\overline{a})^{\alpha_0} \cdot (\frac{1/\alpha_0+1}{1/\alpha_0+\epsilon/\overline{a}})^{\alpha_0+1} = h_{\alpha_0}(\epsilon/\overline{a}).$$

Thus, for $\epsilon\leq \overline{a}$, we have
$$\p[\langle\mb a,\mb Y\rangle \leq \epsilon] = \p[\langle\epsilon \mb 1-\mb a,\mb X\rangle \geq 0] \leq \phi(\xi^*)\leq  (h_{\alpha_0}(\epsilon/\overline{a}))^l.$$

Since $h_{\alpha}(1) = 1$, 
$$\frac{d}{dx}h_{\alpha}(x) = (\frac{\alpha}{x}-\frac{\alpha+1}{1/\alpha +x})h_{\alpha}(x) = \frac{1-x}{x(1/\alpha +x)}h_{\alpha}(x).$$
Thus $h_{\alpha}$ is increasing on $(0,1)$, and so $\forall x\in [0,1)$, $h_{\alpha}(x)< 1$. 

\end{proof}

\begin{lemma}\label{lm:max_sup}
    Let $(a_i)_{i\in \N}$ and $(b_i)_{i\in \N}$ be two sequences in $\R$ such that for each $i\in \N$, $0\leq a_i\leq b_i$. In addition, $\lim_{n\to\infty} b_n = 0$. Then there exists an index $i^*\in \N$ such that $a_{i^*}\geq a_j$ for all $j\in \N$. 
\end{lemma}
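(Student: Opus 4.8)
The plan is to reduce the problem to taking a maximum over a finite set, using the decay of $(b_i)_{i\in\N}$ to discard all but finitely many indices. Let $s := \sup_{i\in \N} a_i$; this is a finite nonnegative number since $0 \le a_i \le b_i \le \sup_j b_j < \infty$ (the sequence $(b_i)$ converges, hence is bounded).

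First I would dispose of the degenerate case $s = 0$: then $a_i = 0$ for every $i$, and $i^* = 1$ trivially satisfies $a_{i^*} \ge a_j$ for all $j$. So assume $s > 0$. Pick any index $k \in \N$ with $a_k > s/2 > 0$ (such $k$ exists by definition of supremum). Since $\lim_{n\to\infty} b_n = 0$ and $a_k > 0$, there exists $N \ge k$ such that $b_n < a_k$ for all $n > N$. For such $n$ we then have $a_n \le b_n < a_k$. Consequently, for every $j \in \N$, either $j \le N$, or $j > N$ in which case $a_j < a_k \le \max_{1\le l\le N} a_l$; in both cases $a_j \le \max_{1\le l\le N} a_l$. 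Therefore $\sup_{i\in\N} a_i = \max_{1\le l \le N} a_l$, a maximum over the finite set $\{1,\dots,N\}$, which is attained at some $i^* \in \{1,\dots,N\}$. This $i^*$ satisfies $a_{i^*} = \max_{1\le l\le N} a_l \ge a_j$ for all $j\in\N$, which is the claim.

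There is no real obstacle here — the argument is elementary. The only point requiring a moment of care is that one must first extract a strictly positive lower bound for the supremum (the index $k$ with $a_k > s/2$) \emph{before} invoking $b_n \to 0$; this is what lets the tail of $(a_i)$ fall strictly below a value already achieved by the sequence, reducing the $\sup$ to a finite $\max$. This lemma is exactly what is needed in the proof of Theorem~\ref{thm:secretary_approx} to guarantee that $\theta(\mb p) = \max_{i\in\N} i\lambda_i(\mb p)$ is attained at some $K^*(\mb p)\in\N$, applying it with $a_i = i\lambda_i(\mb p)$ and $b_i = \sum_{j=i}^{\infty} p_j$, which indeed satisfies $0 \le a_i \le b_i$ and $b_i \to 0$.
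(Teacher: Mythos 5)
Your proof is correct and follows essentially the same approach as the paper's: handle the all-zero case trivially, otherwise extract a strictly positive value $a_k$ from the sequence, use $b_n\to 0$ to force the tail of $(a_i)$ strictly below $a_k$, and reduce the supremum to a maximum over a finite index set. The only cosmetic difference is that you locate a positive element via $a_k > s/2$ where $s$ is the supremum, while the paper simply takes the first index $l$ with $a_l > 0$; both serve identically as the positive threshold against which the tail is compared.
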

\begin{proof}
    If $a_i = 0$ for all $i\in \N$, then taking $i^*=1$ works. Otherwise, let $l\in \N$ be the first index such that $a_l>0$. Since $\lim_{n\to\infty} b_n = 0$, there exists $l'\in \N$ such that for all $n\geq l'$, $a_n\leq b_n\leq a_l$. Thus taking $i^* \in \arg\max_{j\in [l']} a_j$ works. 
\end{proof}

\begin{lemma}[Hoeffding inequality]\label{lm:hoeffding}
    Let $X_1,\ldots,X_n$ be independent random variable such that $0\leq X_i\leq 1$ for all $i\in [n]$ almost surely. Then for all $t>0$, 
    $$\p[\sum_{i=1}^n X_i-\E[\sum_{i=1}^n X_i]\geq t]\leq e^{-2t^2/n}.$$
\end{lemma}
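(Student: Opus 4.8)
The plan is to use the standard exponential-moment (Chernoff) method together with Hoeffding's lemma, which bounds the moment generating function of a bounded, centered random variable. Write $S := \sum_{i=1}^n X_i$ and $Y_i := X_i - \E[X_i]$, so that $\E[Y_i]=0$ and $Y_i \in [a_i, a_i+1]$ almost surely for $a_i := -\E[X_i]$. For any $\lambda > 0$, Markov's inequality applied to the nonnegative random variable $e^{\lambda \sum_i Y_i}$ gives
$$\p\Big[S - \E[S] \geq t\Big] = \p\Big[e^{\lambda \sum_i Y_i} \geq e^{\lambda t}\Big] \leq e^{-\lambda t}\, \E\Big[e^{\lambda \sum_i Y_i}\Big] = e^{-\lambda t} \prod_{i=1}^n \E\big[e^{\lambda Y_i}\big],$$
where the last equality uses independence of the $X_i$ (hence of the $Y_i$).

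Next I would establish Hoeffding's lemma: if $Y$ satisfies $\E[Y]=0$ and $Y \in [a,b]$ almost surely, then $\E[e^{\lambda Y}] \leq e^{\lambda^2 (b-a)^2/8}$ for all $\lambda \in \R$. The argument is: by convexity of $y \mapsto e^{\lambda y}$ on $[a,b]$ one has $e^{\lambda y} \leq \frac{b-y}{b-a}e^{\lambda a} + \frac{y-a}{b-a}e^{\lambda b}$ for $y \in [a,b]$; taking expectations and using $\E[Y]=0$ yields $\E[e^{\lambda Y}] \leq \frac{b}{b-a}e^{\lambda a} - \frac{a}{b-a}e^{\lambda b} =: e^{\psi(\lambda)}$. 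One then checks that $\psi(0)=0$, $\psi'(0)=0$, and that $\psi''(\lambda)$ is the variance of a random variable taking values in $[a,b]$, hence $\psi''(\lambda) \leq (b-a)^2/4$; a second-order Taylor expansion with Lagrange remainder gives $\psi(\lambda) \leq \lambda^2(b-a)^2/8$. Applying this to each $Y_i$ with $b-a=1$ gives $\E[e^{\lambda Y_i}] \leq e^{\lambda^2/8}$.

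Combining the two steps, $\p[S - \E[S] \geq t] \leq e^{-\lambda t + n\lambda^2/8}$ for every $\lambda > 0$. Finally I would optimize the exponent: the quadratic $\lambda \mapsto -\lambda t + n\lambda^2/8$ is minimized at $\lambda^* = 4t/n > 0$, where it equals $-2t^2/n$, giving $\p[S - \E[S] \geq t] \leq e^{-2t^2/n}$ as claimed. The main technical obstacle is Hoeffding's lemma, and within it the bound $\psi''(\lambda) \leq (b-a)^2/4$ on the variance of a distribution supported in an interval of length $b-a$; the Chernoff step, the use of independence, and the optimization over $\lambda$ are all routine.
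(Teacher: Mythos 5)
Your proof is correct. The paper states this lemma without proof in its auxiliary-results section, treating it as a standard textbook fact, so there is no in-paper argument to compare against; your Chernoff-bound-plus-Hoeffding's-lemma derivation is the canonical route, and every step (the exponential Markov bound using independence, the variance bound $\psi''(\lambda)\leq (b-a)^2/4$ giving $\E[e^{\lambda Y_i}]\leq e^{\lambda^2/8}$ since $b-a=1$, and the optimization $\lambda^*=4t/n$ yielding exponent $-2t^2/n$) checks out.
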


\begin{lemma}[Bernstein's inequality]\label{lm:bernstein}
    Let $X_1,\ldots,X_n$ be independent random variable. Suppose that there exists $M\in \R$ such that $|X_i-\E[X_i]|\leq M$ for all $i\in [n]$ almost surely. Then for all $t>0$, 
    $$\p[\sum_{i=1}^n X_i-\E[\sum_{i=1}^n X_i]\geq t]\leq e^{-\frac{t^2/2}{V + Mt/3}}$$
    where $V:=\sum_{i=1}^n Var[X_i]$.
\end{lemma}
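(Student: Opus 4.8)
The plan is the standard Chernoff (exponential-moment) argument. First I would center the summands: set $Y_i:=X_i-\E[X_i]$, so that $\E[Y_i]=0$, $|Y_i|\le M$ almost surely, and $\sum_{i=1}^n Var[Y_i]=V$; it then suffices to prove $\p[\sum_{i=1}^n Y_i\ge t]\le \exp(-\tfrac{t^2/2}{V+Mt/3})$ (the case $V=0$ forces all $X_i$ to be a.s.\ constant, so the bound is trivial; hence assume $V>0$). For any $\lambda>0$, Markov's inequality applied to $e^{\lambda\sum_i Y_i}$ together with independence of the $Y_i$ gives $\p[\sum_i Y_i\ge t]\le e^{-\lambda t}\prod_{i=1}^n\E[e^{\lambda Y_i}]$.

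The key estimate is a bound on each factor $\E[e^{\lambda Y_i}]$ valid for $0<\lambda<3/M$. Writing $e^{\lambda y}-1-\lambda y=(\lambda y)^2\sum_{k\ge 0}\frac{(\lambda y)^k}{(k+2)!}$, bounding the $k$-th term in absolute value using $|y|\le M$, and using the elementary inequality $(k+2)!\ge 2\cdot 3^{k}$ for all $k\ge 0$, I would obtain
\[
e^{\lambda y}-1-\lambda y\ \le\ \frac{\lambda^2 y^2}{2}\sum_{k\ge 0}\Big(\frac{\lambda M}{3}\Big)^{k}\ =\ \frac{\lambda^2 y^2}{2\,(1-\lambda M/3)}\qquad\text{for all }|y|\le M .
\]
Taking $y=Y_i$, taking expectations, and using $\E[Y_i]=0$, $\E[Y_i^2]=Var[Y_i]$, and $1+u\le e^u$, this yields $\E[e^{\lambda Y_i}]\le\exp\!\big(\tfrac{\lambda^2 Var[Y_i]}{2(1-\lambda M/3)}\big)$, hence $\prod_{i=1}^n\E[e^{\lambda Y_i}]\le\exp\!\big(\tfrac{\lambda^2 V}{2(1-\lambda M/3)}\big)$ and therefore $\p[\sum_i Y_i\ge t]\le\exp\!\big(-\lambda t+\tfrac{\lambda^2 V}{2(1-\lambda M/3)}\big)$.

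Finally I would choose $\lambda$ to (nearly) minimize the exponent. Rather than differentiating, it is cleanest to plug in $\lambda^\ast=\frac{t}{V+Mt/3}$, which indeed lies in $(0,3/M)$ since $V>0$. Then $1-\lambda^\ast M/3=\frac{V}{V+Mt/3}$, so $\frac{(\lambda^\ast)^2 V}{2(1-\lambda^\ast M/3)}=\frac{(\lambda^\ast)^2(V+Mt/3)}{2}=\frac{t^2}{2(V+Mt/3)}$, while $-\lambda^\ast t=-\frac{t^2}{V+Mt/3}$; adding gives exponent $-\frac{t^2/2}{V+Mt/3}$, which is exactly the claimed bound. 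The only mildly technical point is the per-variable moment-generating-function estimate — handled by the crude factorial bound $(k+2)!\ge 2\cdot 3^k$ (equivalently, monotonicity of $u\mapsto (e^u-1-u)/u^2$) — together with checking that $\lambda^\ast$ stays in the admissible range $(0,3/M)$; neither is a genuine obstacle, so I expect the proof to be essentially bookkeeping rather than to require any new idea.
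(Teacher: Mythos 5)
Your proof is correct and is the standard Chernoff/exponential-moment derivation of Bernstein's inequality; the paper itself states this lemma as a textbook auxiliary result without giving a proof, so there is no in-paper argument to compare against. Your key steps — centering, the MGF bound $\E[e^{\lambda Y_i}]\le\exp\bigl(\tfrac{\lambda^2 Var[Y_i]}{2(1-\lambda M/3)}\bigr)$ for $0<\lambda<3/M$ via the factorial estimate $(k+2)!\ge 2\cdot 3^k$, and the choice $\lambda^\ast=t/(V+Mt/3)$ — all check out, and the edge case $V=0$ is handled correctly.
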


\begin{lemma}[Chernoff bound]\label{lm:chernoff}
    Let $X_1,\ldots,X_n$ be independent Bernoulli random variables. Denote $X = \sum_{i=1}^n X_i$ and $\mu = \E[X]$. Then for all $t\geq 0$, 
    $$\p[X\geq (1+t)\mu]\leq e^{-t^2\mu/(2+t)}.$$
    For all $0< t< 1$,
    $$\p[X\leq (1-t)\mu]\leq e^{-t^2\mu/2}.$$
\end{lemma}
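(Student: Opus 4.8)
\emph{Proof proposal.} The plan is the standard exponential-moment (Chernoff) method, followed by two elementary scalar inequalities to pass from the tight-but-awkward bounds to the stated forms.

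\textbf{Upper tail.} Fix $\lambda>0$. Since $\{X\ge(1+t)\mu\}=\{e^{\lambda X}\ge e^{\lambda(1+t)\mu}\}$, Markov's inequality gives $\p[X\ge(1+t)\mu]\le e^{-\lambda(1+t)\mu}\,\E[e^{\lambda X}]$. Writing $p_i=\E[X_i]$ and using independence, $\E[e^{\lambda X}]=\prod_{i=1}^n\E[e^{\lambda X_i}]=\prod_{i=1}^n\big(1+p_i(e^{\lambda}-1)\big)$, and the inequality $1+y\le e^y$ together with $\sum_i p_i=\mu$ yields $\E[e^{\lambda X}]\le e^{\mu(e^{\lambda}-1)}$. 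Hence $\p[X\ge(1+t)\mu]\le \exp\!\big(\mu(e^{\lambda}-1-\lambda(1+t))\big)$. The exponent is convex in $\lambda$ and minimized at $\lambda=\ln(1+t)$ (legitimate for $t>0$; the case $t=0$ is trivial), giving $\p[X\ge(1+t)\mu]\le\big(e^{t}/(1+t)^{1+t}\big)^{\mu}$. For the lower tail the argument is identical after replacing $e^{\lambda X}$ by $e^{-\lambda X}$ with $\lambda>0$: one gets $\p[X\le(1-t)\mu]\le\exp\!\big(\mu(e^{-\lambda}-1+\lambda(1-t))\big)$, and for $0<t<1$ the optimal choice $\lambda=-\ln(1-t)>0$ gives $\p[X\le(1-t)\mu]\le\big(e^{-t}/(1-t)^{1-t}\big)^{\mu}$.

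\textbf{Weakening the bounds.} It remains to check two one-variable inequalities. For the upper tail I would prove $f(t):=(1+t)\ln(1+t)-t-\tfrac{t^2}{2+t}\ge0$ for $t\ge0$: one computes $f(0)=f'(0)=0$ and $f''(t)=\tfrac{1}{1+t}-\tfrac{8}{(2+t)^3}$, which is nonnegative since $(2+t)^3-8(1+t)=t(t^2+6t+4)\ge0$; hence $f$ is nondecreasing and $f\ge0$, which is exactly $e^{t}/(1+t)^{1+t}\le e^{-t^2/(2+t)}$. For the lower tail I would prove $\phi(t):=(1-t)\ln(1-t)+t-\tfrac{t^2}{2}\ge0$ on $[0,1)$: here $\phi(0)=\phi'(0)=0$ and $\phi''(t)=\tfrac{t}{1-t}\ge0$, so $\phi\ge0$, equivalently $e^{-t}/(1-t)^{1-t}\le e^{-t^2/2}$. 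Substituting these into the exponents above yields $\p[X\ge(1+t)\mu]\le e^{-t^2\mu/(2+t)}$ and $\p[X\le(1-t)\mu]\le e^{-t^2\mu/2}$, as claimed.

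\textbf{Main obstacle.} There is no real obstacle here—the result is classical. The only point requiring a little care is organizing the two second-derivative sign checks so they read cleanly, and choosing the exponent comparison inequality matching the target constants ($2+t$ in the upper tail, $2$ in the lower). I would therefore present the moment-generating-function computation once, remark that the two tails differ only in the sign of $\lambda$ and the optimizing value, and dispatch the two scalar inequalities by the short monotonicity arguments above.
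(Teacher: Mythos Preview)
Your proposal is a correct and cleanly organized proof of the standard multiplicative Chernoff bounds via the exponential-moment method; the two second-derivative checks are right. Note, however, that the paper simply states this lemma as a known auxiliary result and provides no proof, so there is no ``paper's own proof'' to compare against---your argument would serve as a self-contained justification where the paper defers to the literature.
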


\newpage

\end{document}